\long\def\tmpremove#1\endtmpremove{}
\long\def\tmpadd#1\endtmpadd{}
\numberwithin{equation}{section}
\theoremstyle{definition}
\newtheorem{definition}{Definition}
\theoremstyle{theorem}
\newtheorem{theorem}{Theorem}
\newtheorem{lemma}{Lemma}[section]
\newtheorem{proposition}{Proposition}
\theoremstyle{definition}
\newtheorem{remark}{Remark}
\newcommand{\bluer}[1]{#1}
\newcommand{\bluevar}[1]{#1}
\newcommand{\remove}{}
\begin{document}

\title{Feynman checkers: external electromagnetic field and asymptotic properties}

\date{}
\author{Fedor Ozhegov\\ 
Faculty of Mathematics, HSE University, \\
FedorO57@yandex.ru}
\maketitle
\begin{abstract}
We study Feynman checkers, one of the most elementary models of electron motion. It is also known as a one-dimensional quantum walk or an Ising model at an imaginary temperature. We add the simplest nontrivial electromagnetic field and find the limits of the resulting model for small lattice step and large time, analogous to the results by J.~Narlikar from 1972 and G. Grimmet - S. Jason - P. Scudo from the 2000s. It turns out that the limits in the model with the added field are obtained from the ones without field by mass renormalization. Also we find an exact solution of the resulting model. \remove 
\end{abstract}

\textbf{Key words:}
Feynman checkers, Dirac equation, quantum walk, lattice gauge theory, renormalization

\textbf{Mathematics Subject Classifcation:}  82B20, 33C45, 81T25
\markright{Feynman Checkers: external electromagnetic field}


\section{Introduction}
Feynman checkers is one of the simplest models of electron motion. It was invented by R.~Feynman and published in 1965 in~\cite{feyn} (see~\cite{Konno-20, article} for recent surveys). A. Ambainis et.~al established brilliant results about it in~\cite{Amb}. Although they studied the one-dimensional quantum walk and the Hadamar walk, their model was equivalent to Feynman checkers. Nowadays, quantum walks are actively developed --- see \cite{Cedzich-et-al-2018, Cedzich-Werner-2021, MasMay-AkiSuz-2019, Skopenkov-Ustinov-22} for most recent results. We study Feynman checkers with an external elecromagnetic field and establish two phenomena: mass renormalization (Theorems \ref{wavelimit}, \ref{large-time-lim}) and spin precession (Theorem \ref{chir-rev-f}). See Fig.~\ref{QC} for a quantum-computer implementation. \bluer{Cf. \cite[Figure 6]{article} for the model without field.}

\begin{figure}
\includegraphics[width=\linewidth]{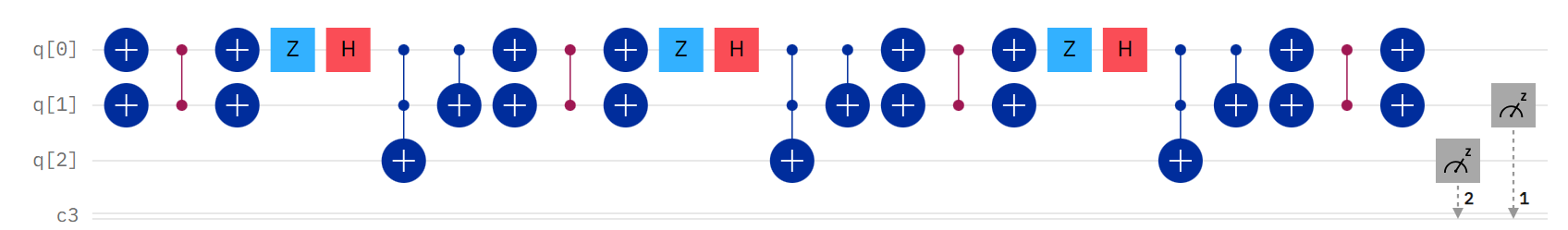}\\
\begin{tabular}{lccc}
\includegraphics[width=.5\linewidth]{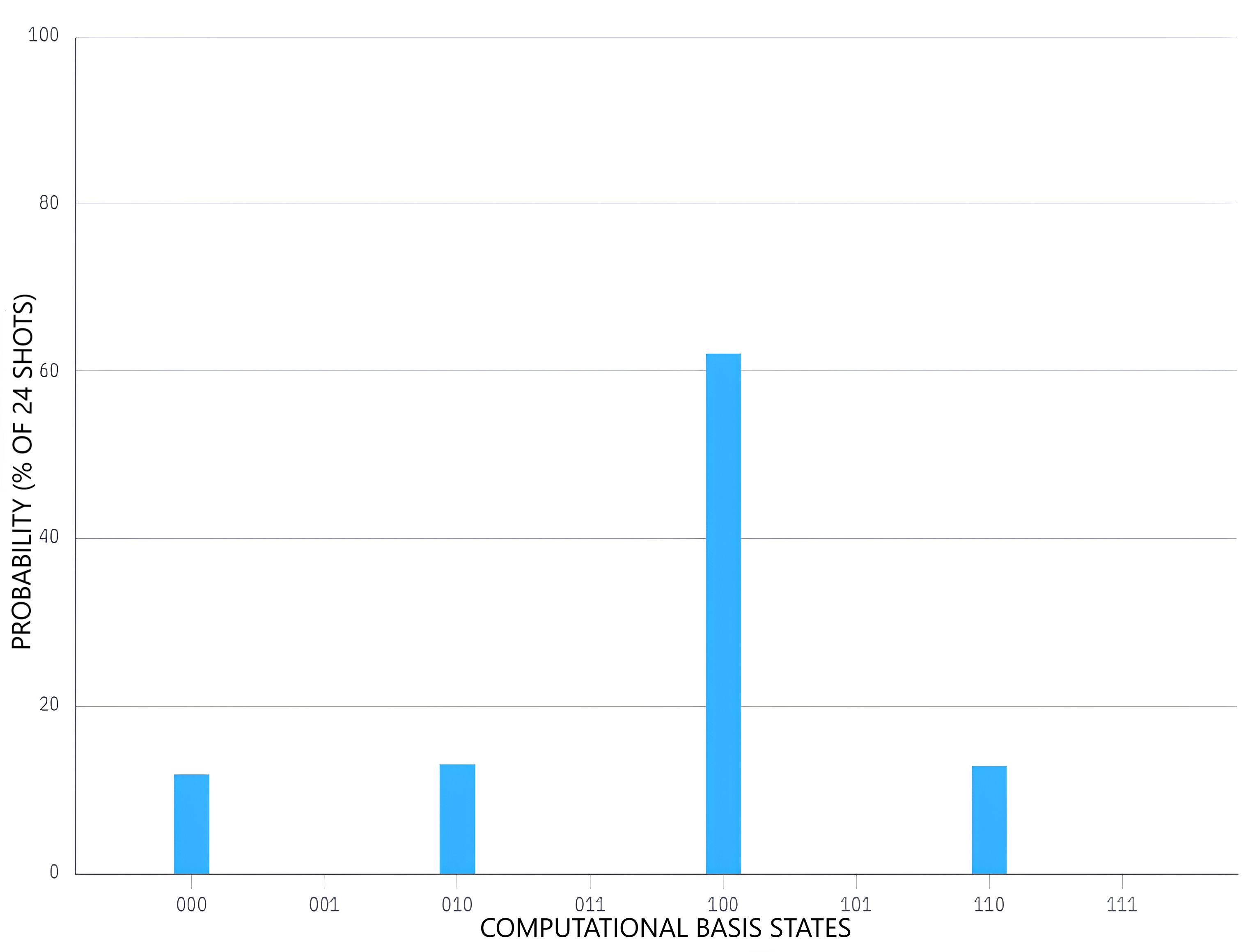} 
\includegraphics[width=.5\linewidth]{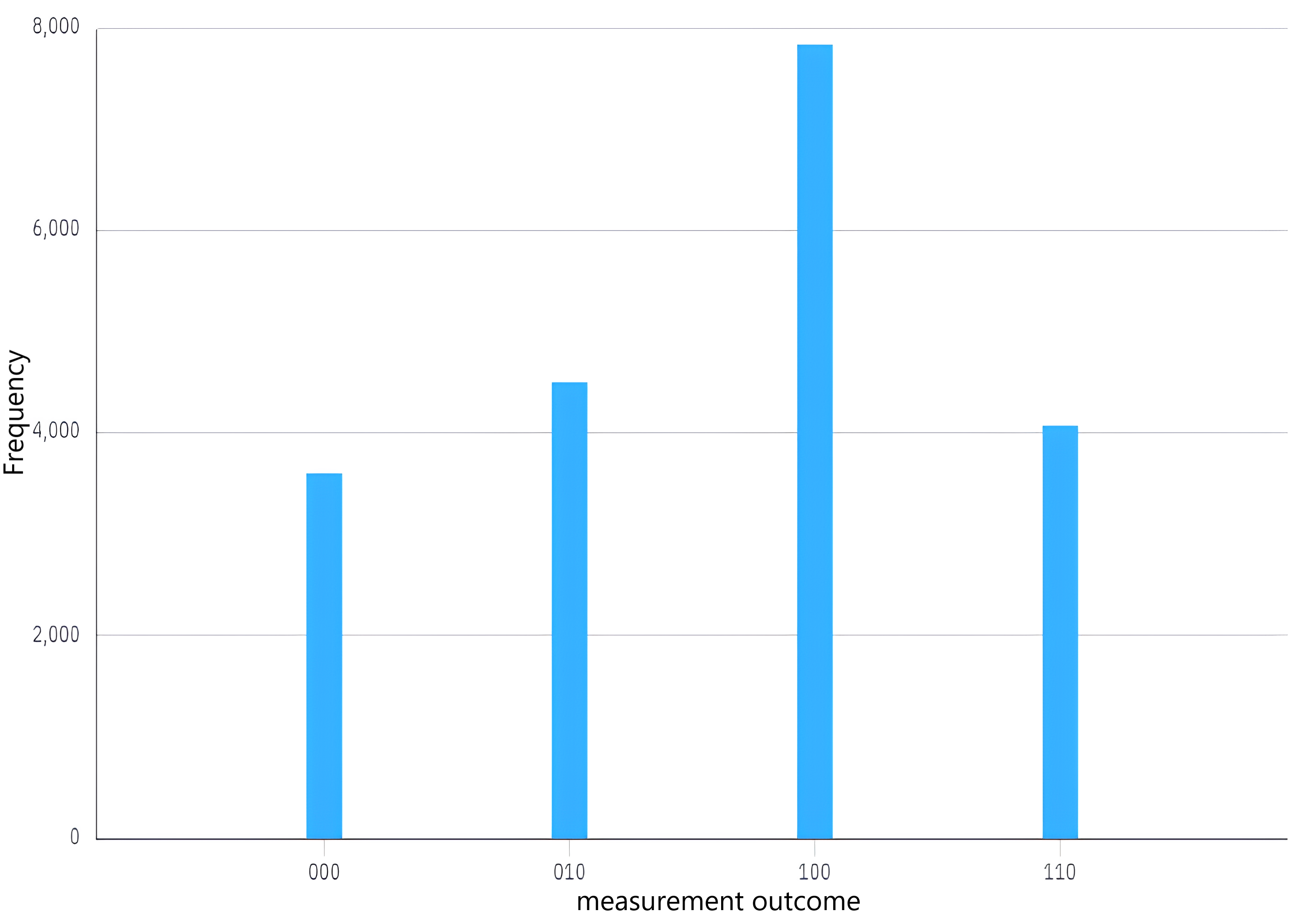}
\end{tabular}
\caption{Implementation of Feynman checkers with a homogeneous electromagnetic field on a quantum computer using quantum circuit language (top). The output is a random bit-string coding electron position $x$ at time $t= 4$. The strings 000, 010, 100, 110 code $x = 4, 2, 0, -2$ respectively. Distribution of $x$ (bottom-left) and a histogram for quantum computer IBM-Lima (bottom-right). Here the red operator is CZ with the control q[1] and the target q[0].} 
\label{QC}
\end{figure}

Let us first give a survey of known asymptotic results in Feynman checkers and then state the new ones. All the notions used in the following theorems are going to be defined precisely in Section~\ref{opredel}.

In 1972 J. Narlikar found the continuum limit of the model. This limit coincides with the well-known expression, obtained by solving Dirac's equation on the line. 
\begin{theorem}[{\cite[Theorem 8]{article}}]
\label{problim}
Assume $m, \varepsilon>0 ,|x| < t$, where $x/2\varepsilon, t/2\varepsilon\in \mathbb{Z}$. Then, on the $2$-dimensional square lattice of step $\varepsilon$, the divided by $4\varepsilon^2$  probability to find an electron of mass $m$ at the point $(x,t)$, if it was emitted from $(0,0)$, as $\varepsilon \to 0$ tends to
 $$\frac{m^2}{4}\left( J_0\left(m \sqrt{t^2-x^2}\right)^2 + \frac{t+x}{t-x}J_1\left(m \sqrt{t^2-x^2}\right)^2 \right).$$

\end{theorem}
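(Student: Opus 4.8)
The plan is to reduce the statement to the elementary power-series definition of the Bessel functions by passing to the limit in the exact path-sum formula for the amplitude term by term.

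First I would write the transition amplitude as $a(x,t)=a_1(x,t)+a_2(x,t)$, splitting Feynman's path sum according to the direction of the last step (equivalently, the two chirality/spinor components defined in Section~\ref{opredel}). Counting the lattice paths from $(0,0)$ to $(x,t)$ by their number of turns $r$ produces a product of two binomial coefficients; writing $n_+:=(t+x)/2\varepsilon$ and $n_-:=(t-x)/2\varepsilon$ for the numbers of up-right and up-left steps, the components take, up to an overall unit phase and the standard normalization, the shape
\begin{align*}
a_1(x,t)&=\sum_{r\ge 0}(-1)^r(m\varepsilon)^{2r+1}\binom{n_+-1}{r}\binom{n_--1}{r},\\
a_2(x,t)&=\sum_{r\ge 1}(-1)^{r-1}(m\varepsilon)^{2r}\binom{n_+-1}{r}\binom{n_--1}{r-1}.
\end{align*}
The asymmetry between $n_+$ and $n_-$ in $a_2$ reflects the fixed initial chirality of the emitted electron, and it is precisely this asymmetry that will generate the factor $\tfrac{t+x}{t-x}$.

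Next I would compute the limit of each summand, using the \emph{exact} identity $\varepsilon^2 n_+n_-=(t^2-x^2)/4$, valid for every $\varepsilon$. Since $n_\pm\to\infty$ and $\binom{n-1}{r}\sim n^r/r!$ as $\varepsilon\to0$, the $r$-th term of $a_1/\varepsilon$ tends to $(-1)^r m\,(z/2)^{2r}/(r!)^2$ with $z:=m\sqrt{t^2-x^2}$, whose sum is the series of $mJ_0(z)$. Likewise, extracting one factor $\varepsilon^2 n_+=\tfrac12\varepsilon(t+x)$ and $r-1$ factors $\varepsilon^2 n_+n_-$, the $r$-th term of $a_2/\varepsilon$ tends to $(-1)^{r-1}m^2\,\tfrac{t+x}{2}\,(z/2)^{2r-2}/\bigl(r!\,(r-1)!\bigr)$, whose sum equals $\tfrac{m(t+x)}{\sqrt{t^2-x^2}}J_1(z)=m\sqrt{\tfrac{t+x}{t-x}}\,J_1(z)$ by the power series of $J_1$.

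The main obstacle is that the number of summands grows like $1/\varepsilon$, so this term-by-term passage to the limit must be justified; a naive interchange is illegitimate. Here the same exact identity resolves the issue: the bound $\binom{n_\pm-1}{r}\le n_\pm^r/r!$ together with $\varepsilon^2 n_+n_-=(t^2-x^2)/4$ dominates the $r$-th term of $a_1/\varepsilon$ by $m\,(z^2/4)^r/(r!)^2$ and that of $a_2/\varepsilon$ by a comparable $\varepsilon$-independent expression; both dominating series converge (to $mI_0(z)$ and a multiple of $I_1(z)$), so dominated convergence for the counting measure applies and gives $a_1/\varepsilon\to mJ_0(z)$ and $a_2/\varepsilon\to m\sqrt{\tfrac{t+x}{t-x}}J_1(z)$ up to unit phases. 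Finally, since the two chirality components are distinguishable internal states, the probability is $|a_1|^2+|a_2|^2$; dividing by $4\varepsilon^2$ and letting $\varepsilon\to0$ yields $\tfrac{m^2}{4}\bigl(J_0(z)^2+\tfrac{t+x}{t-x}J_1(z)^2\bigr)$, as claimed. (Alternatively, the finite sums above are Jacobi polynomials, and the limit is then a direct application of the Mehler--Heine asymptotics carrying $P_n^{(\alpha,\beta)}$ into $J_\alpha$; the bookkeeping of constants is identical.)
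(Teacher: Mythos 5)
Your proposal is correct and follows essentially the same route as the proof the paper relies on for this statement (\cite[Appendix A]{article}) and reproduces for the field analogue in Section 5: the exact binomial-sum formula for the two chirality components, term-by-term limits via $\binom{n-1}{r}\sim n^r/r!$ and the exact identity $\varepsilon^2 n_+n_-=(t^2-x^2)/4$ yielding the Bessel series, and an $\varepsilon$-independent dominating series justifying the interchange --- precisely the content of Lemmas~\ref{lim20} and~\ref{binom} together with \cite[Lemma 22]{article}. The only step you gloss over is the prefactor $(1+m^2\varepsilon^2)^{(1-t/\varepsilon)/2}$, which is not a unit phase and must be shown to tend to $1$ (the paper's Lemma~\ref{norm}); this is a one-line squeeze, $1\le (1+m^2\varepsilon^2)^{t/2\varepsilon}\le \exp\left(m^2\varepsilon t/2\right)\to 1$.
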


Here $J_0(z) := \sum\limits_{j=0}^{\infty}(-1)^j\frac{(z/2)^{2j}}{(j!)^2}$ and $J_1(z) := \sum\limits_{j=0}^{\infty}(-1)^j\frac{(z/2)^{2j+1}}{(j!)(j+1)!}$ are \emph{Bessel functions} of the first kind of orders 0 and 1 respectively. The terms containing $J_1$ and $J_0$ correspond to the probabilities to find the electron with the original and reversed \emph{chirality} respectively (see \cite[\S4]{article}).

Elementary mathematical proof of this theorem is given in~\cite[Appendix A]{article}. For the case of smooth enough initial conditions an analogous result was proved in~\cite{MasMay-AkiSuz-2019}.

In the early 2000s N. Konno and G. Grimmet--S. Janson--P. Scudo found the large-time limit of the model, namely, the limiting distribution of the electron position.

\begin{theorem}[{\cite[Theorem 1]{Grimmet-Janson-Scudo-04}}]
\label{wl-Grimmet}
Assume that $m,\varepsilon,t>0, v\in \mathbb{R}$, where $t/\varepsilon\in \mathbb{Z}$. Then on the $2$-dimensional square lattice of step $\varepsilon$, the probability to find the electron of mass $m$, emitted from the point $0$ at moment $0$, to the left from the point $vt$ at the moment $t$, as $t\to \infty$ tends to 
$$
F(v):=
    \begin{cases}
        0 ,& \text{if } v < -\frac{1}{\sqrt{1+m^2\varepsilon^2}};\\
        \frac{1}{\pi}\arccos{\frac{1-(1+m^2\varepsilon^2)v}{\sqrt{1+m^2\varepsilon^2}(1-v)}}, &\text{if } |v| \le \frac{1}{\sqrt{1+m^2\varepsilon^2}};\\
        1 , &\text{if } v > \frac{1}{\sqrt{1+m^2\varepsilon^2}}.
    \end{cases}
$$
\end{theorem}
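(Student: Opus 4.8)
The plan is to diagonalise the evolution in momentum space and then read off the large-time behaviour by the method of stationary phase. Write the amplitude as a two-component spinor $\psi(x,t)=\bigl(\psi_+(x,t),\psi_-(x,t)\bigr)$, one entry per chirality, and note that one step of the checkers dynamics is the local rule mixing the two components at neighbouring sites with relative weight $m\varepsilon$. Passing to the discrete Fourier transform $\hat\psi(p,t)=\sum_{k\in\mathbb{Z}}e^{-ipk}\psi(\varepsilon k,t)$ turns this local rule into $\hat\psi(p,t)=U(p)^{\,t/\varepsilon}\hat\psi(p,0)$, where, in a suitable normalisation,
$$U(p)=\frac{1}{\sqrt{1+m^2\varepsilon^2}}\begin{pmatrix} e^{ip} & im\varepsilon\\ im\varepsilon & e^{-ip}\end{pmatrix}$$
is unitary with $\det U(p)=1$ and eigenvalues $e^{\pm i\omega(p)}$ given by the dispersion relation $\cos\omega(p)=\cos p/\sqrt{1+m^2\varepsilon^2}$. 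Since $m\varepsilon>0$ we have $|\cos\omega(p)|<1$ everywhere, so the two eigenvalues never collide: the spectral projectors are smooth in $p$, and inverting the transform expresses each component of $\psi(\varepsilon k,t)$ as a sum of two oscillatory integrals $\int_{-\pi}^{\pi} g_\pm(p)\,e^{\,i(t/\varepsilon)(pv\pm\omega(p))}\,dp$, with $v=\varepsilon k/t$ and smooth amplitudes $g_\pm$ fixed by the emission data through the projectors.

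Next I would apply stationary phase to these integrals with the large parameter $t/\varepsilon$. The stationary points obey $v=\mp\omega'(p)$, so the part of the wave packet travelling with velocity $v$ is carried by the momenta of group velocity $v$; differentiating the dispersion relation gives $\omega'(p)=\sin p/\sqrt{1+m^2\varepsilon^2-\cos^2 p}$, whose range is exactly $[-a,a]$ with $a=1/\sqrt{1+m^2\varepsilon^2}$, the extrema $\pm a$ attained at $p=\pm\pi/2$. This already yields $F\equiv 0$ for $v<-a$ and $F\equiv 1$ for $v>a$, since there is then no real stationary point and the integrals are $O\bigl((t/\varepsilon)^{-\infty}\bigr)$. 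On the bulk $|v|<a$ the two stationary points contribute $|\psi(\varepsilon k,t)|^2\sim(\varepsilon/t)\,\rho(v)$, with $\rho(v)=\sum_\pm |g_\pm(p_\ast(v))|^2/|\omega''(p_\ast(v))|$ summed over the relevant roots $p_\ast$, so that $X_t/t$ converges weakly to the push-forward of the momentum law under $p\mapsto\omega'(p)$.

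It then remains to identify the limit. Because the electron is emitted from a single site, $|\hat\psi(p,0)|^2$ is constant in $p$, so the momentum law is uniform and the push-forward density is obtained by the change of variables $v=\omega'(p)$ together with the spectral weights $|g_\pm|^2$. A direct computation of $\rho$ from the formula for $\omega'(p)$ gives an integrable density on $(-a,a)$, and integrating it from $-a$ to $v$ reproduces exactly the $\arccos$ expression; equivalently one checks that the claimed $F$ satisfies $F(-a)=0$, $F(a)=1$ and $F'(v)=\rho(v)$, which reduces to an elementary trigonometric identity.

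The hard part will be the band edges $v=\pm a$. There the two stationary points coalesce at $p=\pm\pi/2$, where $\omega''=0$, so the phase has a degenerate critical point: stationary phase produces an Airy-type caustic, the density $\rho$ blows up like $(a-|v|)^{-1/2}$, and the naive pointwise $\sim(t/\varepsilon)^{-1/2}$ amplitude estimate ceases to be uniform. To pass to the limit in the cumulative distribution one must replace it by a uniform bound valid across the caustic. The cleanest way to sidestep this entirely is to prove the weak convergence through characteristic functions: show that $\mathbb{E}\,e^{i\xi X_t/t}$ converges for each fixed $\xi$ by combining the Riemann--Lebesgue lemma (which kills the off-diagonal $e^{\pm 2i(t/\varepsilon)\omega(p)}$ cross terms) with stationary phase on the diagonal term, and then invoke L\'evy's continuity theorem to identify the limiting law, and hence $F$, without ever evaluating the singular amplitude at $p=\pm\pi/2$.
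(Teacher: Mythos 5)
Your proposal is correct in outline, but it reaches the limit by a genuinely different mechanism than the one the paper relies on. The paper does not prove Theorem~\ref{wl-Grimmet} directly; it cites the proof in \cite[\S 12.1]{article}, and that proof (reproduced by the paper for the field case, Theorem~\ref{large-time-lim}) is the \emph{method of moments} of \cite[\S 2]{Grimmet-Janson-Scudo-04}: one computes $\sum_x (x/t)^r P(x,t,m,\varepsilon)$ via Parseval's theorem, replaces $t^{-r}\partial_p^r \hat a$ by $(\omega_p')^r\hat a$ up to $O(1/t)$ by the Leibniz rule, discards the oscillating terms by an explicit integration-by-parts bound (the role played by Lemma~\ref{cos-approx1} in the field case), changes variables $v=\omega_p'$, and finally invokes \cite[Lemma 2]{article} (moment convergence implies convergence of distribution functions, legitimate since the limit law is compactly supported). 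You instead propose characteristic functions plus L\'evy's continuity theorem, writing $\mathbb{E}\,e^{i\xi X_t/t}$ as a momentum integral with a shift of size $\xi\varepsilon/t$ and splitting along the spectral bands of $U(p)$. That route is sound, and in one respect easier than you suggest: on the diagonal (equal-band) terms no stationary phase is needed at all, since the large phases cancel to first order and Taylor expansion of $\omega$ plus dominated convergence already yields the limit $\frac{1}{2\pi}\int\bigl(\|P_+\hat\psi_0\|^2e^{-i\xi\omega'}+\|P_-\hat\psi_0\|^2e^{i\xi\omega'}\bigr)dp$. One caveat: literal Riemann--Lebesgue does not apply to the cross-band terms, because their phase $\approx 2\omega(p)$ has critical points where $\sin p=0$; you need a van der Corput/stationary-phase bound of order $(t/\varepsilon)^{-1/2}$, or a change of variables away from the critical set --- precisely the kind of estimate the paper makes explicit. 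Comparing what each buys: your route avoids both the $r$-fold differentiation bookkeeping and the moment-determinacy step, and it renders the band-edge caustic at $v=\pm 1/\sqrt{1+m^2\varepsilon^2}$ harmless (weak convergence plus continuity of $F$ there suffices, so your stationary-phase analysis of the bulk density, with its genuine caustic difficulty, can simply be dropped); the paper's method yields as a by-product the convergence of all moments (statement (C) in the field analogue), which your route recovers only a posteriori from compact support. Your remaining gaps are computational rather than structural: you must compute the spectral weights $\|P_\pm(p)\hat\psi_0\|^2$ for the actual checkers transfer matrix and initial state --- this is where the asymmetric factor $1/(1-v)$ originates, and your symmetrized $U(p)$ is only gauge-equivalent to the model's evolution, so a careless choice of basis or initial spinor would produce $1/(1+v)$ instead --- and then verify $F'(v)=\frac{m\varepsilon}{\pi(1-v)\sqrt{1-(1+m^2\varepsilon^2)v^2}}$, which indeed reduces to the elementary identity you mention.
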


A short proof of this theorem can be found in \cite[\S 12.1]{article}.

This result has numerous variations and generalizations \cite[\S 3.2]{article}, \cite{Kuyanov-Slizkov}. Let us state one variation by I. Bogdanov (cf. \cite[Theorem 2]{IB-20}).

\begin{theorem}[{\cite[Theorem 2]{IB-20}}]
\label{chiral-rev} Assume $0 \le m\varepsilon \le 1$. Then on the $2$-dimensional square lattice of step $\varepsilon$, the probability to find an electron, emitted from $0$ at the moment $0$, with the reversed chirality at the moment $t$, tends to $\frac{m\varepsilon}{2\sqrt{1+m^2\varepsilon^2}}$ as $t \to \infty$ so that $t/\varepsilon$ is an integer.
\end{theorem}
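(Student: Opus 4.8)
The plan is to pass to the momentum representation and reduce the reversed-chirality probability to a single oscillatory integral, whose large-time limit is then computed by averaging. First I would use the definitions of Section~\ref{opredel} to write the wave function as a pair $a(x,t)=(a_1(x,t),a_2(x,t))\in\CC^2$ of chirality components, normalized so that $\sum_x(|a_1|^2+|a_2|^2)=1$, with ``the electron emitted from $0$'' meaning $a(x,0)=\delta_{x,0}\,(0,1)$. Taking the discrete Fourier transform $\hat a(p,t)=\sum_x a(x,t)e^{-ipx}$ turns one time step into multiplication by a $p$-dependent unitary matrix $U(p)$ with $\det U(p)=1$ and $\operatorname{tr}U(p)=2\cos p/\sqrt{1+m^2\varepsilon^2}$, so that $\hat a(p,t)=U(p)^{\,t/\varepsilon}(0,1)^{\top}$. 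The reversed-chirality amplitude is then the off-diagonal matrix element $\hat a_1(p,t)=\bigl(U(p)^{\,t/\varepsilon}\bigr)_{12}$, and by Plancherel's identity the quantity to analyze is $P_{\mathrm{rev}}(t)=\sum_x|a_1(x,t)|^2=\frac{1}{2\pi}\int_{-\pi}^{\pi}\bigl|(U(p)^{\,t/\varepsilon})_{12}\bigr|^2\,dp$.

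Next I would diagonalize $U(p)$. Writing its eigenvalues as $e^{\pm i\omega(p)}$ with $\cos\omega(p)=\cos p/\sqrt{1+m^2\varepsilon^2}$, one gets $U(p)^n=\cos(n\omega)\,I+i\,\sin(n\omega)\,N(p)$, where $N(p)=(U(p)-\cos\omega\,I)/(i\sin\omega)$ is a traceless involution whose $(1,2)$ entry equals $m\varepsilon/(\sqrt{1+m^2\varepsilon^2}\,\sin\omega)$ in the chirality basis. A short computation gives $\sin^2\omega=(m^2\varepsilon^2+\sin^2 p)/(1+m^2\varepsilon^2)$, which for $m\varepsilon>0$ is bounded away from $0$, so with $n=t/\varepsilon$
\[ \bigl|(U(p)^n)_{12}\bigr|^2=\sin^2(n\omega)\,\frac{m^2\varepsilon^2}{m^2\varepsilon^2+\sin^2 p}. \]
Hence $P_{\mathrm{rev}}(t)=\frac{1}{2\pi}\int_{-\pi}^{\pi}\sin^2(n\omega(p))\,f(p)\,dp$ with the smooth, bounded weight $f(p)=m^2\varepsilon^2/(m^2\varepsilon^2+\sin^2 p)$.

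Finally I would let $t\to\infty$ through integer values of $n=t/\varepsilon$. Writing $\sin^2(n\omega)=\tfrac12-\tfrac12\cos(2n\omega)$ splits the integral into a constant part and an oscillatory part $\int_{-\pi}^{\pi} f(p)\cos(2n\omega(p))\,dp$. The constant part already gives $\tfrac12\cdot\frac{1}{2\pi}\int_{-\pi}^{\pi} f(p)\,dp=\frac{m\varepsilon}{2\sqrt{1+m^2\varepsilon^2}}$, using the elementary integral $\frac{1}{2\pi}\int_{-\pi}^{\pi}\frac{dp}{m^2\varepsilon^2+\sin^2 p}=\frac{1}{m\varepsilon\sqrt{1+m^2\varepsilon^2}}$, which is exactly the claimed limit. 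It therefore remains to show that the oscillatory part tends to $0$, and this is the main obstacle. Since $\omega(p)$ is real-analytic and non-constant, with critical points only at $p=0,\pm\pi$ (where $\omega'(p)=0$ but $\omega''(p)\neq0$), the decay $\int f\cos(2n\omega)\to 0$ follows from the Riemann--Lebesgue lemma away from the critical points together with van der Corput's estimate $O(n^{-1/2})$ on shrinking neighborhoods of them; the care needed is to let these neighborhoods shrink slowly enough that both contributions vanish. Assembling the two parts yields $P_{\mathrm{rev}}(t)\to \frac{m\varepsilon}{2\sqrt{1+m^2\varepsilon^2}}$, while the degenerate case $m\varepsilon=0$ (limit $0$) is immediate since then $\hat a_1\equiv0$.
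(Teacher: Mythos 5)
Your proof is correct: the transfer-matrix computation ($\det U(p)=1$, $\operatorname{tr}U(p)=2\cos p/\sqrt{1+m^2\varepsilon^2}$, hence $U^n=\cos(n\omega)I+i\sin(n\omega)N$ by Cayley--Hamilton), the identity $|(U^n)_{12}|^2=\sin^2(n\omega)\,m^2\varepsilon^2/(m^2\varepsilon^2+\sin^2p)$, the exact value of the constant integral, and the stationary-phase decay of the oscillatory part all check out. Note that this paper does not actually prove Theorem~\ref{chiral-rev} --- it is cited from \cite{IB-20}, and the paper states that its own method \emph{differs} from the one in \cite[\S 3]{IB-20} --- so the right comparison is with the paper's proof of the field analogue, Theorem~\ref{prob-rev-field}. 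There your argument matches the paper's template almost step for step: momentum representation, Parseval's theorem, power reduction splitting $\sin^2(n\omega)$ into a constant part (evaluated exactly, giving $\frac{m\varepsilon}{2\sqrt{1+m^2\varepsilon^2}}$ in your case) plus an oscillatory part, and an oscillatory-integral estimate. The only genuine differences are in how the ingredients are obtained: you get the Fourier representation by diagonalizing the one-step unitary $U(p)$, whereas the paper verifies an explicit integral formula by induction (Proposition~\ref{IntForm}) because the field breaks translation invariance in a way that makes the direct transfer-matrix route messier; and your decay estimate is the standard van der Corput bound $O(n^{-1/2})$, slightly sharper than the paper's rough integration-by-parts bound $O_{m,\varepsilon}(t^{-1/3})$ of Lemma~\ref{cos-approx1}, though both are more than enough since only convergence to zero is needed. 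One could quibble that your shrinking-neighborhood device is unnecessary (fixed neighborhoods of the nondegenerate critical points $p=0,\pm\pi$ suffice, since $\omega''\neq 0$ there when $m\varepsilon>0$), but that is a matter of taste, not a gap.
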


Recently P. Zakorko \cite{Zakorko} found a uniform approximation of the wave function by Airy function extending earlier results by  T. Sunada - T. Tate from \cite{Sunada2012}. She used the method of A. Anikin et. al from \cite{Anikin2019}.

The following \bluer{(unpublished)} theorem is formulated in terms of \emph{Airy function}
\begin{align*}
    \mathrm{Ai}(\lambda)  := \frac{1}{\pi} \int\limits_0^{\infty} \cos \left(\frac{x^3}{3} + \lambda x\right) dx.
\end{align*}
\begin{theorem}[{Cf. \cite[Theorem 1]{Zakorko}}]\label{prob_Zak}
Assume $|x| < t/\sqrt{2}$, where $x,t\in \mathbb{Z}$ and $x+t$ is odd. Then, on the $2$-dimensional square lattice of unit step, the probability to find an electron of unit mass at the point $(x,t+1)$, if it was emitted from $(0,0)$, equals:
\begin{align*}
    &\left( \frac{4\theta(x/t)}{1 -2 (x/t)^2} \right)^{1/2} \left(\frac{1}{t}\right)^{2/3}\mathrm{Ai}\left(-\theta(x/t)t^{2/3}\right)^2 + \\
    &+\frac{t+x-1}{t-x\bluevar{+}1}\left(\frac{4\theta((x-1)/t)}{1-2((x-1)/t)^2}\right)^{1/2}\left(\frac{1}{t}\right)^{2/3} \mathrm{Ai}\left(-\theta((x-1)/t)t^{2/3}\right)^2 + O\left(\frac{1}{t}\right),
\end{align*}
where
\begin{align*}
    \theta(v) := \left(\frac{3}{2}\left(-|v| \arccos\frac{|v|}{\sqrt{1-v^2}} + \arccos \frac{1}{\sqrt{2-2v^2}}\right)\right)^{2/3}.
\end{align*}
\end{theorem}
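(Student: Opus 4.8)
The plan is to start from an exact oscillatory-integral (Fourier) representation of the two-component wave function and then to apply a uniform stationary-phase analysis at the point where two critical points of the phase coalesce, which is precisely where the Airy function appears. Concretely, for unit mass the two chirality amplitudes $a_1(x,t),a_2(x,t)$ can each be written, up to explicit elementary prefactors, in the form
\[
\frac{1}{2\pi}\int_{-\pi}^{\pi} A_j(p)\, e^{\,i t\, S(p,\,x/t)}\,dp,\qquad S(p,v)=vp-\omega(p),
\]
where $\omega(p)=\arccos\!\big(\cos p/\sqrt2\big)$ is the dispersion relation of the walk with $m\varepsilon=1$ (so that $\sqrt{1+m^2\varepsilon^2}=\sqrt2$), obtained by diagonalising the one-step transfer matrix, and $A_j$ is a smooth amplitude. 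The probability at $(x,t+1)$ is $|a_1|^2+|a_2|^2$; the two summands of the statement will arise from the two components, the shift $x\mapsto x-1$ in the second summand reflecting the two directions of arrival in the final step from time $t$ to $t+1$, and the ratio $\tfrac{t+x-1}{t-x+1}$ expressing the relation between the chiralities exactly as $\tfrac{t+x}{t-x}$ does in Theorem~\ref{problim}.

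Next I would analyse the critical points of $S(\cdot,v)$, i.e.\ the solutions of $\omega'(p)=v$. The group velocity $\omega'$ attains maximal modulus $1/\sqrt2$ at the inflection point $p=\pi/2$ of $\omega$, so for $|v|<1/\sqrt2$ there are two real critical points $p_\pm(v)$ that merge as $v\uparrow 1/\sqrt2$; this confluence of saddles is what produces the Airy asymptotics, and it is why the hypothesis is $|x|<t/\sqrt2$. I would then apply the Chester--Friedman--Ursell cubic normal form, the uniform stationary-phase method used by Anikin et al.\ in~\cite{Anikin2019} and by Zakorko in~\cite{Zakorko}, making a change of variable $p\mapsto u$ that maps $S$ to $\tfrac{u^3}{3}-\zeta(v)\,u+\mathrm{const}$, after which the integral reduces to an Airy integral with corrections. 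The parameter $\zeta$ is fixed by sending the saddles $p_\pm$ to $u=\pm\sqrt{\zeta}$, which gives $\tfrac{2}{3}\zeta^{3/2}=\tfrac12\big(S(p_-,v)-S(p_+,v)\big)$; evaluating this phase difference in closed form reproduces exactly the $\arccos$-bracket in the definition of $\theta$, so $\zeta(v)=\theta(v)$, and after rescaling $u=t^{-1/3}w$ the leading contribution of each component becomes $\mathrm{Ai}\big(-\theta(v)\,t^{2/3}\big)$ times an amplitude of order $t^{-1/3}$.

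The amplitude prefactor $\big(\tfrac{4\theta}{1-2v^2}\big)^{1/2}t^{-2/3}$ in the squared statement would then follow by tracking the Jacobian of the CFU substitution together with the value of $A_j$ at the coalescing saddle: the factor $1-2v^2$ comes from the curvature $\omega''$ at the saddles (indeed $1-2v^2=\tfrac{\cos^2 p}{2-\cos^2 p}$ along the saddle branch), which vanishes linearly as $v\to1/\sqrt2$, while $\theta$ vanishes at the same rate, so their ratio stays bounded and nonzero at the caustic. Squaring each component and adding the two chiralities yields the stated main term, and the uniformity of the CFU expansion on compact $v$-sets, together with the standard remainder estimate for the method, gives the error $O(1/t)$.

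I expect the main obstacle to be the uniform control of the remainder after squaring. The CFU method produces for each amplitude an expansion of the shape $\mathrm{Ai}(-\theta t^{2/3})\,(c_0t^{-2/3}+\cdots)+\mathrm{Ai}'(-\theta t^{2/3})\,(c_1t^{-4/3}+\cdots)$, and in $|a_1|^2+|a_2|^2$ one must verify that the $\mathrm{Ai}\cdot\mathrm{Ai}'$ cross terms and the subleading $\mathrm{Ai}^2$ contributions genuinely collapse into a single $O(1/t)$ error, uniformly up to the edge $v=1/\sqrt2$ where the naive (non-uniform) stationary-phase constants blow up. The second computation-heavy point is the closed-form evaluation of the saddle-phase difference and the resulting $\arccos$-identities needed to match $\theta(v)$ precisely.
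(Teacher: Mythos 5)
Your proposal follows essentially the same route as the paper: the paper does not actually prove Theorem~\ref{prob_Zak} (it is quoted from Zakorko's unpublished work), but its proof of the field analogue, Theorem~\ref{Airy}, is exactly your scheme --- an exact Fourier-integral representation of the wave function (Proposition~\ref{IntForm}; for the zero-field case, \cite[Proposition 12]{article}) followed by uniform Airy asymptotics at coalescing stationary points (Theorem~\ref{Airy_Zak}, the Anikin--Zakorko form of the Chester--Friedman--Ursell method), with the Airy argument fixed by the phase value at the saddle and the prefactor by the curvature there. Your supporting details --- the dispersion relation $\omega(p)=\arccos\bigl(\cos p/\sqrt{2}\bigr)$, the caustic at $|v|=1/\sqrt{2}$, the identity $1-2v^2=\cos^2 p/(2-\cos^2 p)$ along the saddle branch, and the order count showing the $\mathrm{Ai}\cdot\mathrm{Ai}'$ cross terms after squaring are $o(1/t)$ --- are all consistent with that template.
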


Hereafter the notation $f(t) = g(t) + O(h(t))$ means that there exists a constant $C$ (independent of $t$) such that for each $t$ satisfying the conditions of the theorem, we have  $|f(t) - g(t)| \le Ch(t)$. \remove

\textbf{Feynman checkers with a field.}

There are many modifications of this model. For example, in the 1990s an electromagnetic field was added to the model in ~\cite{shul} and  \cite{Ord}. The resulting model is actively studied nowadays (see \cite{Cedzich-Werner-2021, Cedzich-et-al-2013}). This modification is equivalent to inhomogeneuos quantum walk; see the recent survey~\cite{Konno-20} and also \cite{Skopenkov-Ustinov-22, Bourgain2000, Dmitriev-22}. Despite the extensive literature, no asymptotic formulae for the wave function have been known before for any electromagnetic field (gauge nonequivalent to the zero field).

In the present work, Theorems~\ref{problim}--\ref{prob_Zak} are translated to the case of the simplest non-trivial electromagnetic field (see Figures ~\ref{Limit},~\ref{fig: chir-rev-f} and Theorems \ref{fproblim}, \ref{wl-intr}, \ref{chir-rev-f}, \ref{Airy} respectively).

\begin{figure}[h]
\begin{tabular}{cc} 
\includegraphics[width=0.47\linewidth]{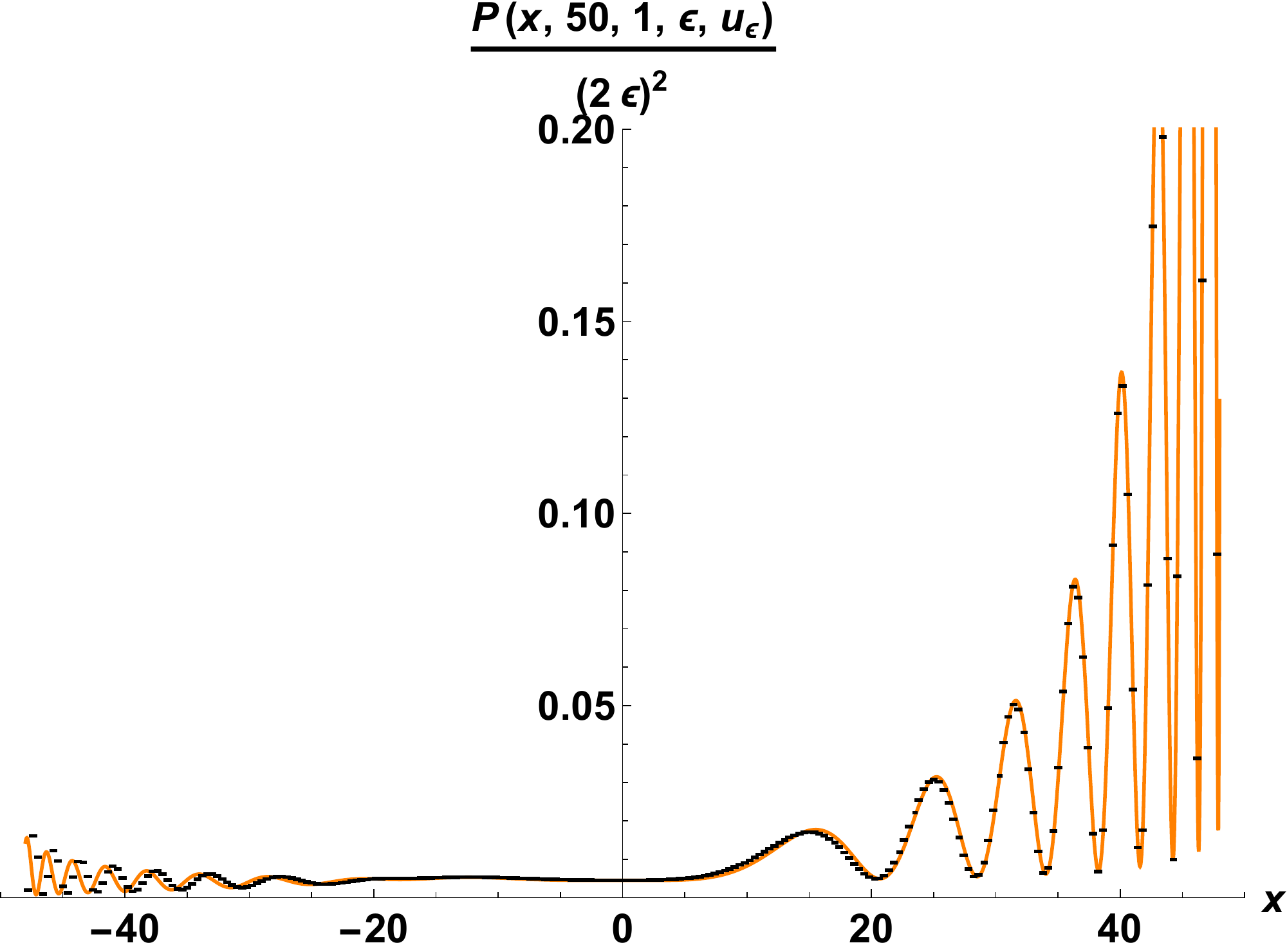}
\includegraphics[width=.47\linewidth]{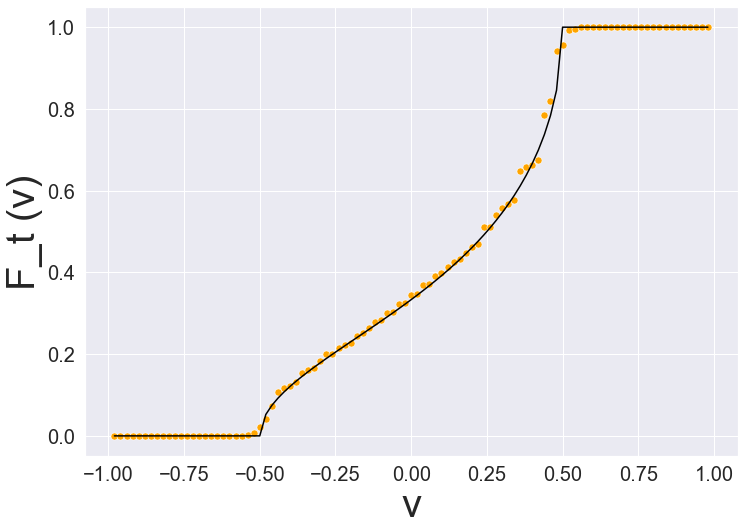}
\end{tabular}
\caption{{Left: The divided by $4\varepsilon^2$ probability to find an electron of mass $1$ at the point $(x,t)$, if it was emitted from the point $(0,0)$ and moved in the homogeneous field, for $\varepsilon=0.1$ (dashes) and $\varepsilon \to 0$ (curve). Here $t=50$ and $5x \in \mathbb{Z}$.  See Definition~\ref{def1} and Theorem~\ref{fproblim}. Right: The plot of the probability $F_t(v)$ to find an electron of mass 1, if it was emitted from the origin and moved in the homogeneous electromagnetic field, to the left from the point $vt$ at the moment $t$, for $|v|\le 1$ and $t=100$ (dots). The limit of this probability as $t\to\infty$ (curve). See Definition \ref{def1} and Theorem~\ref{wl-intr}.}}
\label{Limit}
\end{figure}

\begin{theorem}
\label{fproblim}
Assume  $m, \varepsilon>0, |x| < t,$ where $x/4\varepsilon, t/4\varepsilon \in \mathbb{Z}$. Then on the $2$-dimensional square lattice of step $\varepsilon$, the divided by $4\varepsilon^2$ probability to find an electron of mass $m$ at the point $(x,t)$, if it was emitted from the origin and moved in the homogeneous electromagnetic field, as $\varepsilon \to 0$ tends to $$\frac{m^2}{4}\left( J_0\left(m \sqrt{\frac{t^2-x^2}{2}}\right)^2 + 2\frac{t+x}{t-x}J_1\left(m \sqrt{\frac{t^2-x^2}{2}}\right)^2 \right).$$
\end{theorem}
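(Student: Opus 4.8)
The plan is to reduce Theorem~\ref{fproblim} to the field-free continuum limit of Theorem~\ref{problim} via the mass renormalization advertised in the introduction, rather than redoing the asymptotic analysis of \cite[Appendix A]{article} from scratch. Recall from Definition~\ref{def1} that switching on the field multiplies each Feynman path by phase factors attached to the lattice edges (link variables), and that for the simplest nontrivial homogeneous field these phases are periodic with period $2$ in the lattice. This periodicity is exactly what forces the hypotheses of Theorem~\ref{fproblim} to be stated on the coarser sublattice $x/4\varepsilon,\,t/4\varepsilon\in\mathbb Z$ (effective step $2\varepsilon$) rather than on the $x/2\varepsilon,\,t/2\varepsilon\in\mathbb Z$ sublattice of Theorem~\ref{problim}: one should view the electron as evolving in blocks of two elementary time steps.

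The central step is an explicit transfer-matrix identity. I would combine two consecutive elementary steps of the field model into one effective step, compute the resulting $2\times 2$ propagator on the two chirality components, and then apply a gauge transformation to absorb the field phases. The claim to verify is that the effective one-step propagator so obtained coincides (up to this gauge conjugation) with the field-free propagator of Feynman checkers on the lattice of step $2\varepsilon$, but with the mass renormalized as $m\mapsto m/\sqrt2$. This is a finite computation with the explicit link phases of Definition~\ref{def1}, and it is the quantitative content of ``mass renormalization'': the ratio of the turn amplitude to the straight amplitude, which encodes the mass, is rescaled by the factor $1/\sqrt2$ produced by the period-$2$ field.

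Granting this identity, I would invoke Theorem~\ref{problim} for the effective field-free walk of mass $\tilde m=m/\sqrt2$. Since the continuum-limit formula of Theorem~\ref{problim} is independent of the lattice step, substituting $\tilde m$ replaces $m\sqrt{t^2-x^2}$ by $\tilde m\sqrt{t^2-x^2}=m\sqrt{(t^2-x^2)/2}$, which is precisely the argument of the Bessel functions in Theorem~\ref{fproblim}. It remains to translate the normalization back to the original lattice: dividing the probability by $4\varepsilon^2$ while the effective walk lives on the doubled lattice, together with the way the two-step combination redistributes the original and reversed chirality weights, accounts for the overall prefactor $m^2/4$ and for the extra factor $2$ multiplying the $J_1$ term relative to the naive substitution in Theorem~\ref{problim}.

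The main obstacle is the transfer-matrix identity of the second step together with this final bookkeeping. The renormalization is not literally ``replace $m$ by $m/\sqrt2$ everywhere'': the field acts differently on the two chirality channels, so the reversed-chirality ($J_0^2$) and original-chirality ($J_1^2$) contributions are rescaled by different constants, which is the source of the asymmetric coefficients $1$ and $2$ in the final formula. Pinning down the gauge transformation so that the effective propagator is exactly field-free, and carefully matching the chirality weights and the $4\varepsilon^2$ normalization across the change of lattice, is where the real work lies; once these constants are fixed, the convergence itself is immediate from Theorem~\ref{problim}.
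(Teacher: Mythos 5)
Your reduction hinges on an exact identity that is false. You claim that two elementary steps of the walk in the field $u_\varepsilon$ are, up to a gauge transformation, one elementary step of the \emph{field-free} walk on the lattice of step $2\varepsilon$ with renormalized mass. A gauge transformation multiplies the wave function at each lattice site by a number of modulus $1$, so it preserves the site probabilities; hence your identity would force the field-model probabilities at points of the doubled lattice to coincide exactly with field-free probabilities. They do not. Matching the dispersion relations (from Proposition~\ref{IntForm}, the two-step phase is $\arcsin\frac{\sin 2p\varepsilon}{1+m^2\varepsilon^2}$) forces $1+4\tilde m^2\varepsilon^2=(1+m^2\varepsilon^2)^2$, i.e. the effective mass is $\tilde m=\frac m2\sqrt{2+m^2\varepsilon^2}$, not $m/\sqrt 2$; and already at $(x,t)=(0,4\varepsilon)$ Table~\ref{table-amu} gives $P(0,4\varepsilon,m,\varepsilon,u_\varepsilon)=\frac{m^2\varepsilon^2\left(1+3m^2\varepsilon^2+m^4\varepsilon^4\right)}{(1+m^2\varepsilon^2)^3}$, whereas the free walk of step $2\varepsilon$ and mass $\tilde m$ has probability $\frac{m^2\varepsilon^2\left(2+3m^2\varepsilon^2+m^4\varepsilon^4\right)}{(1+m^2\varepsilon^2)^3}$ there; moreover the field model has $a_2(0,4\varepsilon)\neq 0$ while the free walk's second component vanishes at that point, so no componentwise rescaling repairs the discrepancy. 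There is also a structural reason the reduction cannot work: for \emph{any} mass and \emph{any} lattice step, Theorem~\ref{problim} produces a limit in which $J_0^2$ and $\frac{t+x}{t-x}J_1^2$ carry \emph{equal} coefficients, while the formula you must prove has coefficients $1$ and $2$. So no probability-preserving identification with a field-free walk, exact or asymptotic, can yield Theorem~\ref{fproblim}; the chirality asymmetry you correctly flag in your last paragraph is not bookkeeping but precisely the failure of your central identity.

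What is true, and is what the paper actually proves, is a componentwise \emph{asymptotic} statement: in the limit, $a_1$ in the field behaves like $\frac{1}{\sqrt 2}$ times, and $a_2$ like $1$ times, the corresponding components of a free walk of mass tending to $m/\sqrt 2$ — but these factors are different for the two components, parity-dependent, and hold only in the limit, so establishing them is exactly as hard as proving Theorem~\ref{wavelimit} itself. The paper does this directly: it derives an exact solution in the field (Proposition~\ref{formula}, obtained from the lattice Dirac equation, Lemma~\ref{dirak}, via generating functions or induction), in which the binomial sums have argument $1-(1+m^2\varepsilon^2)^2$ on the halved light-cone coordinates, and then passes to the limit term by term (Lemmas~\ref{norm}--\ref{binom}), the two Bessel series with their unequal prefactors emerging separately for $a_1$ and $a_2$. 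Your plan defers all of this content to the unverified transfer-matrix step, so the proof does not go through as proposed.
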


In the course of the proof of this theorem, for the first time we obtain an ''explicit'' expression for the wave function of an electron moving in the homogeneous electromagnetic field, in the discrete model, that is, its exact solution (see Proposition~\ref{formula}). In contrast to the initial model, this expression cannot be derived from simple combinatorial ideas. Having got the ''explicit'' expression, we apply the method of \bluer{the} proof of Theorem~\ref{problim} above (see \cite[Appendix B]{article}).

The motion in such electromagnetic field is the simplest example of an inhomogeneous quantum walk, with the inhomogeneity having a period of 2 in space and time. Periodic quantum walks were studied in~\cite[\S 3.1]{Cantero-etal-12} but no asymptotic formulae have been known for them before.

\begin{theorem}
\label{wl-intr}
Assume $m,\varepsilon,t>0, v\in \mathbb{R}$ and $t/\varepsilon \in \mathbb{Z}$. Then on the $2$-dimensional square lattice of step $\varepsilon$, the probability to find an electron of mass $m$, if it was emitted from $0$ at the moment $0$ and moved in a homogeneous electromagnetic field, to the left from the point $vt$ at the moment $t$, as $t\to \infty$ tends to 
$$
F(v):=
    \begin{cases}
        0 ,& \text{if } v < -\frac{1}{1+m^2\varepsilon^2};\\
        \frac{1}{\pi}\arccos{\frac{1-(1+m^2\varepsilon^2)^2v}{(1+m^2\varepsilon^2)(1-v)}}, &\text{if } |v| \le \frac{1}{1+m^2\varepsilon^2};\\
        1 , &\text{if } v > \frac{1}{1+m^2\varepsilon^2}.
    \end{cases}
$$
\end{theorem}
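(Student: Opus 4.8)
The plan is to follow the Fourier-analytic (momentum-space) method used by Grimmet--Janson--Scudo for Theorem~\ref{wl-Grimmet}, adapting it to the period-$2$ inhomogeneity of the field model. First I would pass to momentum space. The evolution in the homogeneous field is invariant under translations by $2$ in both space and time, so after grouping two time steps and doubling the spatial unit cell the two-step transition operator becomes a multiplication operator $\widehat U(k)$ acting on a finite-dimensional fibre (two chiralities times two sublattice classes). Diagonalizing $\widehat U(k)=\sum_{\pm}e^{-2i\omega_{\pm}(k)}\Pi_{\pm}(k)$ yields the dispersion relations $\omega_{\pm}(k)$ and the spectral projections $\Pi_{\pm}(k)$; the point source at the origin corresponds to a fixed fibre vector whose band projections give the momentum amplitudes. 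The structural point, reflecting the paper's mass-renormalization philosophy, is that this two-step dispersion coincides with that of the free model (Theorem~\ref{wl-Grimmet}) after the substitution $1+m^2\varepsilon^2\mapsto(1+m^2\varepsilon^2)^2$; equivalently, the renormalized group velocity is bounded by $1/(1+m^2\varepsilon^2)$ rather than by $1/\sqrt{1+m^2\varepsilon^2}$.

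Next I would establish weak convergence of the rescaled position $X_t/t$ to the law of the group velocity $\omega'(K)$ by the method of moments, as in \cite[\S12.1]{article}. In the Fourier representation the position operator acts as $i\,d/dk$, so
\[
E\big[(X_t)^r\big]=\int_{-\pi}^{\pi}\big\langle\,\widehat\psi(k,t),\;(i\,d/dk)^r\,\widehat\psi(k,t)\,\big\rangle\,dk,
\]
and since $\widehat\psi(k,t)=\widehat U(k)^{t/2}\widehat\psi_0(k)$, each derivative brings down a factor of order $t$ from the phase $e^{-2i\omega_{\pm}(k)\cdot(t/2)}=e^{-i\omega_{\pm}(k)t}$. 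Collecting the leading order gives $E[(X_t/t)^r]\to\int\omega'_{\pm}(k)^r\,\|\Pi_{\pm}(k)\widehat\psi_0(k)\|^2\,dk$, while the off-diagonal cross terms between the two bands carry the non-stationary phase $e^{\pm i(\omega_{+}-\omega_{-})t}$ and vanish after integration by parts. Hence $X_t/t$ converges in distribution to $\omega'(K)$ with $K$ distributed according to the band amplitudes, so that $F(v)=\Pr[\omega'(K)\le v]$ (the case of odd $t$ being handled identically up to a bounded boundary correction).

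It then remains to compute the pushforward explicitly. I would write $\omega_{\pm}(k)$ from the eigenvalues of $\widehat U(k)$, differentiate to obtain $v=\omega'(k)$ as an explicit function of $k$, locate its extrema to confirm the support $[-1/(1+m^2\varepsilon^2),\,1/(1+m^2\varepsilon^2)]$, invert $v=\omega'(k)$ to get the density of $v$, and integrate it against the amplitude weight. The resulting antiderivative is the $\arccos$ expression in the statement; matching the constants with the boundary values $F=0$ and $F=1$ at the two cutoffs finishes the identification of $F(v)$. As a consistency check one can instead substitute the renormalized parameter directly into the closed form of Theorem~\ref{wl-Grimmet}, or read off the density from the explicit wave function of Proposition~\ref{formula}.

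The hard part will be the period-$2$ inhomogeneity, which forces the doubled-unit-cell reduction and produces a multi-band dispersion relation in place of the single dispersion of the free model. Keeping track of the factor-$2$ rescaling in both time and space (the source of the squaring $1+m^2\varepsilon^2\mapsto(1+m^2\varepsilon^2)^2$), correctly projecting the point-source initial condition onto the two bands, and rigorously showing that the interband cross terms contribute to no moment in the limit are the delicate points; everything after the correct diagonalization of $\widehat U(k)$ is a lengthy but routine explicit computation paralleling the field-free case.
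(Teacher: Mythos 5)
Your proposal follows essentially the same route as the paper: the Grimmett--Janson--Scudo momentum-space method of moments, in which the period-2 field yields the dispersion relation $\omega_p=\frac{1}{2\varepsilon}\arcsin\frac{\sin 2p\varepsilon}{1+m^2\varepsilon^2}$, Parseval's theorem converts position moments into integrals of powers of $\omega_p'$ (with oscillatory cross terms suppressed as $O(1/t)$), and the change of variables $v=\omega_p'$ produces the $\arccos$ law with support $[-1/(1+m^2\varepsilon^2),1/(1+m^2\varepsilon^2)]$. The only real difference is how the momentum representation is obtained: you would diagonalize the doubled-unit-cell two-step evolution operator, whereas the paper writes the resulting Fourier integrals explicitly (Proposition~\ref{IntForm}) and verifies them by induction, itself remarking that the Fourier-method derivation you describe is an equally valid alternative.
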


Thus, the formula in the model with the added field (Theorem~\ref{wl-intr}) can be obtained from the one without field (Theorem~\ref{wl-Grimmet}) by mass \emph{renormalization}:
$$
(1+m^2\varepsilon^2)^2 = 1 +m_0^2\varepsilon^2,
$$
where $m$ is the mass in the model with the field, and $m_0$ is the one in the model without field. Tending $\varepsilon$ to $0$, we obtain the following relation:
$$
1+2m^2\varepsilon^2 + o(\varepsilon^2)=1 + (m_0\varepsilon)^2.
$$
Thus, $m\sim m_0/\sqrt{2}$ as $\varepsilon \to 0$. It is precisely the relation between the arguments of the Bessel functions in Theorems \ref{problim} and \ref{fproblim}.

\begin{theorem}[{See Fig. \ref{fig: chir-rev-f}}]
\label{chir-rev-f}
Assume $m,\varepsilon >0$. Then on the $2$-dimensional square lattice of step $\varepsilon$, the probability to find an electron, emitted from the point $0$ at the moment $0$ and moved in the homogeneous field, with the reversed chirality at the moment $t$, as $t \to \infty$ so that the parity of~$t/\varepsilon$ is fixed tends to
$$
\begin{cases}
\frac{m\varepsilon}{(1+m^2\varepsilon^2)\sqrt{2 + m^2\varepsilon^2}}, & \text{ if } \frac{t}{\varepsilon} \equiv_2 1;\\
\frac{m\varepsilon}{\sqrt{2 + m^2\varepsilon^2}}, & \text{ if } \frac{t}{\varepsilon} \equiv_2 0,
\end{cases}
$$
\end{theorem}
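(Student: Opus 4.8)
The plan is to exploit gauge invariance, pass to momentum space, and extract the large-time limit by the Riemann--Lebesgue lemma. Write the wave function components as $a_1(x,t)$ (original chirality) and $a_2(x,t)$ (reversed chirality), so the target quantity is $P_{\mathrm{rev}}(t) = \sum_x |a_2(x,t)|^2$. Since a gauge transformation multiplies the whole spinor $a(x,t)$ by a scalar phase $e^{i\chi(x,t)}$, it leaves each $|a_2(x,t)|^2$ unchanged, so $P_{\mathrm{rev}}$ is gauge-invariant. I would therefore work in the temporal gauge, where the homogeneous field is carried entirely by the spatial hopping phase, which depends only on time and (for the $\pi$-flux field) alternates in sign with period $2$; in this gauge momentum is conserved, and by Parseval's identity $P_{\mathrm{rev}}(t) = \int_{\mathcal B} |\hat a_2(p,t)|^2\,d\mu(p)$ over the Brillouin zone $\mathcal B = [-\pi/\varepsilon, \pi/\varepsilon)$, with $\mu$ the normalized measure fixed by total probability $1$.

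In this gauge the single-step evolution in momentum space alternates between two $2\times 2$ unitaries $U_+(p)$ and $U_-(p)$ differing only by the sign of the field phase, which I would read off from the lattice Dirac recurrence or from the exact solution of Proposition~\ref{formula}. The evolution over one temporal period is the Floquet operator $M(p) = U_-(p)\,U_+(p)$, and I would diagonalize it as $M(p) = V(p)\,\mathrm{diag}(e^{i\omega(p)}, e^{-i\omega(p)})\,V(p)^{-1}$. The initial spinor $\hat a(p,0) = e_1$ is constant in $p$, so after $n$ periods $\hat a(p,2n\varepsilon) = M(p)^n e_1$ is an explicit combination of $e^{\pm in\omega(p)}$ with eigenvector coefficients.

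Squaring the reversed-chirality component and integrating then yields time-independent ``diagonal'' terms from $|e^{\pm in\omega}|^2 = 1$ together with oscillating ``cross'' terms proportional to $e^{\pm 2in\omega(p)}$. Provided $\omega(p)$ is nonconstant, which I would check from the Floquet dispersion relation, the Riemann--Lebesgue lemma kills the cross terms as $n \to \infty$, so the limit of $P_{\mathrm{rev}}$ is the Brillouin-zone integral of the squared $e_2$-projection of the eigenvectors of $M(p)$. The parity of $t/\varepsilon$ enters precisely here: for even $t/\varepsilon$ one applies the full Floquet power $M^{n}$, while for odd $t/\varepsilon$ one extra step $U_+(p)$ acts before the projection onto reversed chirality, rotating the eigenbasis and changing the limiting constant.

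The main obstacle I anticipate is the closed-form evaluation of the momentum integral and the bookkeeping of the two parities. With the field present the eigenvectors of $M(p)$ are more involved than the two eigenvectors behind the field-free Theorem~\ref{chiral-rev}, so isolating their reversed-chirality components and reducing the resulting trigonometric integral will require care. The integral must collapse to exactly $\frac{m\varepsilon}{\sqrt{2+m^2\varepsilon^2}}$ for even $t/\varepsilon$ and to $\frac{m\varepsilon}{(1+m^2\varepsilon^2)\sqrt{2+m^2\varepsilon^2}}$ for odd $t/\varepsilon$; in particular I would verify that the single extra step $U_+(p)$ in the odd case contributes precisely the factor $1/(1+m^2\varepsilon^2)$, which amounts to computing how one hopping-with-flip rotates the eigenbasis relative to the reversed-chirality axis $e_2$. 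The renormalized denominator $2+m^2\varepsilon^2$ should drop out of the Floquet dispersion relation, paralleling the mass renormalization already identified in Theorem~\ref{wl-intr}.
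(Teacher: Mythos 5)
Your proposal is correct in outline and would prove the limit statement, but it reaches the key momentum-space representation by a genuinely different route than the paper. Both arguments share the same analytic backbone: pass to momentum space, use Parseval to turn $\sum_x a_1^2$ into a Brillouin-zone integral, split it into a time-independent part plus oscillating cross terms, kill the oscillations, and evaluate the remaining integral in closed form. The paper, however, never changes gauge: it proves exact Fourier-integral formulas for the wave function in the field $u_{\varepsilon}$ itself (Proposition~\ref{IntForm}, verified by induction, with a four-case analysis modulo $4$ and heavy sign bookkeeping), and it kills the oscillating part quantitatively by integration by parts (Lemma~\ref{cos-approx1}), obtaining the stronger Theorem~\ref{prob-rev-field} with rate $O_{m,\varepsilon}(t^{-1/3})$, of which Theorem~\ref{chir-rev-f} is an immediate corollary. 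Your gauge transformation to a space-independent, time-alternating field plus Floquet diagonalization of $M(p)=U_-(p)U_+(p)$ is precisely the ``Fourier method'' the paper mentions after Proposition~\ref{IntForm} but declines to carry out; it buys you freedom from the mod-$4$ case analysis, at the price of a qualitative limit only (Riemann--Lebesgue gives no rate). Your expectation that the odd-parity constant equals the even-parity one times $1/(1+m^2\varepsilon^2)$ is exactly what happens: in the paper's proof the time-independent parts of the integrands for the two parities differ by precisely this factor, uniformly in $p$.

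Two points in your write-up need repair, though both are routine. First, gauge invariance of the chirality-reversal probability does not follow from invariance under ``multiplication by a scalar phase'': in this paper's formulation $a=a_1+ia_2$ is a single complex number whose real and imaginary parts are the two chirality components (and note the paper's convention is opposite to yours: the reversed-chirality probability is $\sum_x a_1^2$, the \emph{real} part), so a general phase $e^{i\chi}$ rotates one chirality into the other and does not preserve $a_1^2$, $a_2^2$ separately. What saves the argument is that the gauge group here is $\mathbb{Z}/2\mathbb{Z}$: a gauge transformation multiplies $a(x,t)$ by $\chi(0,0)\chi(x,t)\in\{\pm1\}$, which does preserve each component squared; you must also verify that your temporal gauge exists, i.e.\ that the time-alternating field has the same plaquette flux as $u_{\varepsilon}$ (both are $-1$ on every plaquette, and equal fluxes on a simply connected domain imply gauge equivalence). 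Second, Riemann--Lebesgue does not apply verbatim to $\int g(p)e^{\pm in\theta(p)}\,dp$ with the nonlinear phase $\theta(p)=\pi+4\omega_p\varepsilon$; you need to change variables on the finitely many intervals of monotonicity of $\omega_p$ (its critical points are the zeros of $\cos 2p\varepsilon$), which is legitimate because for $m\varepsilon>0$ the Floquet eigenvalues $e^{2i\omega_p\varepsilon}$ and $-e^{-2i\omega_p\varepsilon}$ never collide, so the eigenprojections, and hence the amplitude $g$, are bounded and the substituted amplitude stays integrable. With these two repairs your plan is sound, and the remaining work is the closed-form evaluation of the limiting integral, which the paper also relegates to a separate computation.
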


\begin{figure}[h]
\begin{tabular}{cc}
    \includegraphics[width=0.47\linewidth]{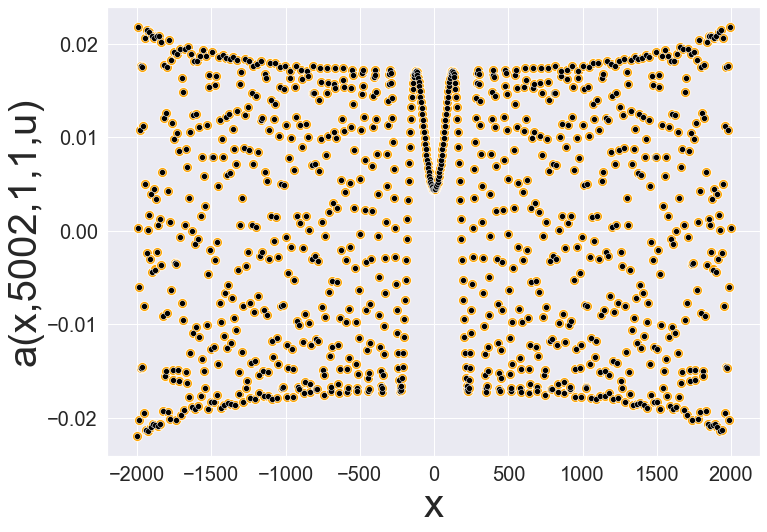}
    \includegraphics[width=0.47\linewidth]{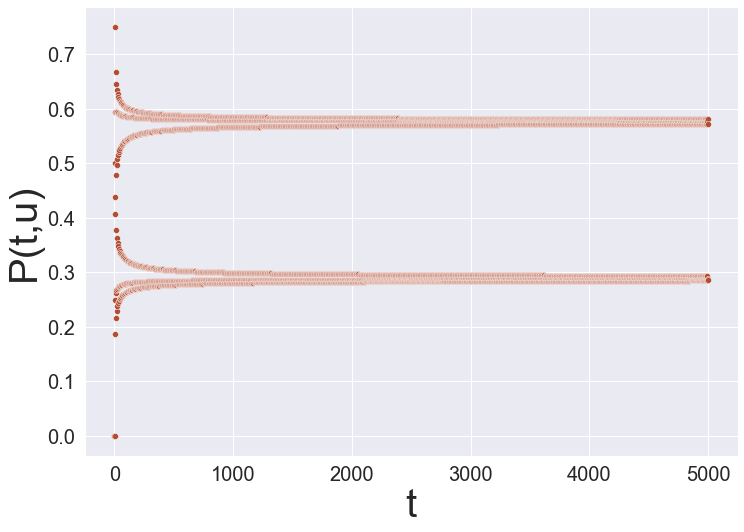}
\end{tabular}
    \caption{\footnotesize{Left: The plot of the wave function component $a_1(x, 5002,1,1,u_1)$ for $x$ divisible by 4 is shown in black and the approximation given by Theorem \ref{Airy} is shown in orange. Right: The graph of the probability $P(t,u)$ of chirality reversal for an electron of mass 1, emitted from the origin and moving in a homogeneous electromagnetic field. See Definition \ref{def1} and Theorem~\ref{chir-rev-f}.}}
\label{fig: chir-rev-f}
\end{figure}

This result demonstrates \emph{spin precession in an electromagnetic field}: the probability to find the electron with the reversed chirality tends to a periodic function (rather than a constant as in the case of the zero field; cf. Theorem \ref{chiral-rev} and \cite[Example 5]{article}). This solves Problem 11 from~\cite{article} and proves Hypothesis 1 from~\cite{IB-20} (to be more precise, in those papers the field was not equal to \bluer{our} $u_{1}$ but only gauge equivalent to it, but this leads to the same probabilities).

In Theorem \ref{Airy} (see Figure \ref{fig: chir-rev-f}) we \bluer{announce} a uniform asymptotic formula in terms of \bluer{the} Airy function.  This is another example of the mass renormalization.

In \S 3 we state Theorems \ref{fproblim}, \ref{wl-intr}, \ref{chir-rev-f} in a stronger form, and in the subsequent sections we prove the results from \S 3. Pure calculations are put in \ref{first-proof} and \ref{IntFormProof}.

\section{Definitions}
\label{opredel}
First we give \bluer{an} informal description of Feynman checkers, then of the modification with added electromagnetic field, and finally we give a precise definition. The major part of this introductory section is borrowed from~\cite{article}.
\par
Fix $m \ge 0$ called the \emph{mass} of the electron. Consider the infinite checkerboard made of squares $\varepsilon\times\varepsilon$. The checker moves to the diagonal-neighboring squares, either upwards-left or upwards-right. To each path $s$ of the checker, we assign a vector $a(s)$ as follows (Fig.~\ref{Checker-paths}). Initially this vector is directed upwards and has unit length. While the checker moves straightly the vector remains unchanged, and after each turn of the checker it is rotated by $90^{\circ}$ clockwise and multiplied by $m\varepsilon$. At the end of the motion the vector is shrinked by a factor of $(1+m^2\varepsilon^2)^{\frac{t/\varepsilon - 1}{2}}$, where $t/\varepsilon$ is the total number of moves. The resulting vector is $a(s)$.

Denote by $a(x,t,m,\varepsilon) :=\sum_{s}a(s)$ the sum over all the checker paths from the square $(0,0)$ to the square $(x,t)$, starting from the upwards-right move. The length square of the vector $a(x,t,m,\varepsilon)$ is called \emph{the probability to find an electron in the square} $(x,t)$, \emph{if it was emitted from the origin}, and the vector itself is called \emph{the arrow} or \emph{the wave function}. For example, in a Figure~\ref{Checker-paths}\bluer{(from \cite{article})} to the right we have $a(s_0)=(1/2,0), a(s)=(0,-1/2)$, аnd $a(1,3,1,1) = (1/2,-1/2)$. 
\begin{figure}[htbp]
\begin{center}
\begin{tabular}{l}
\end{tabular}
\begin{tabular}{lccc}

\includegraphics[width=1.4cm]{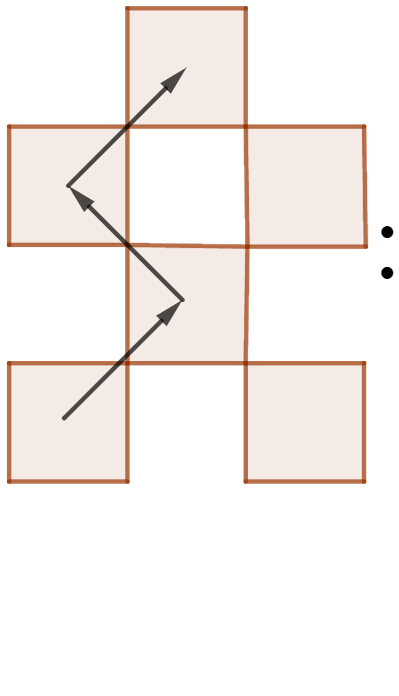} &
\includegraphics[width=1.5cm]{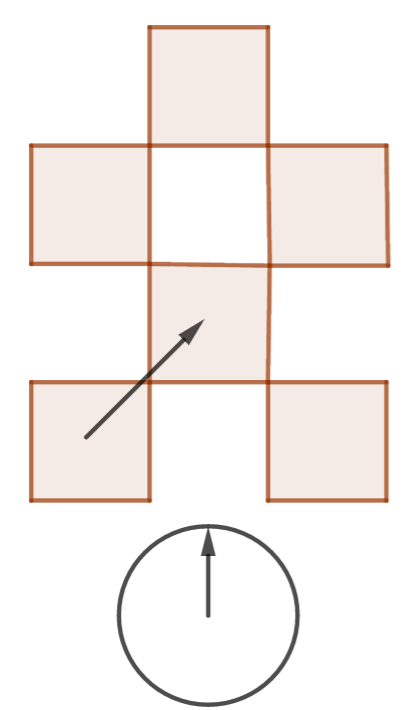} &
\includegraphics[width=1.5cm]{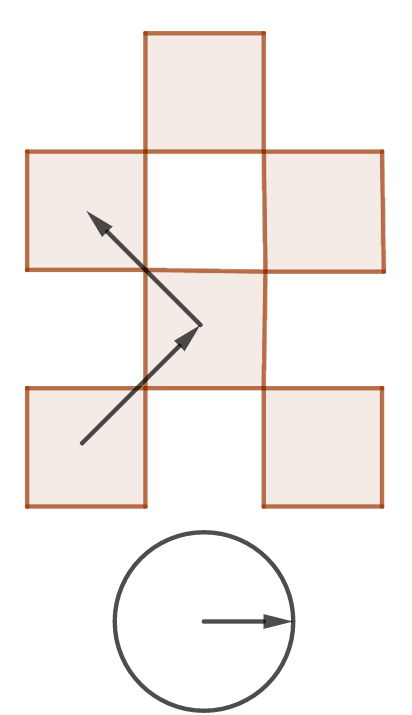} &
\includegraphics[width=1.5cm]{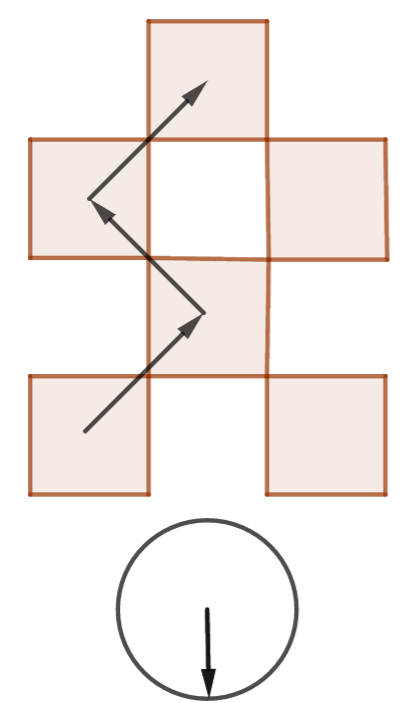} \\
\includegraphics[width=1.5cm]{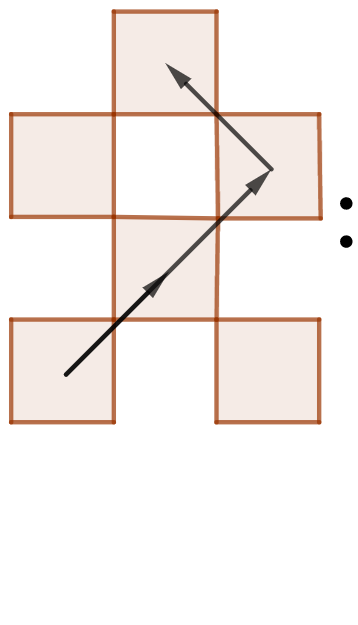} &
\includegraphics[width=1.5cm]{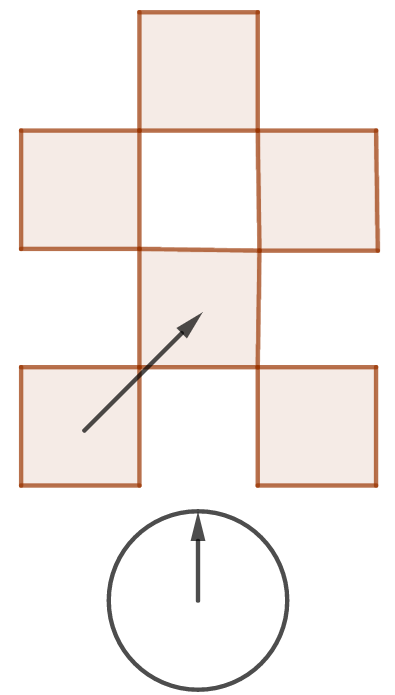} &
\includegraphics[width=1.5cm]{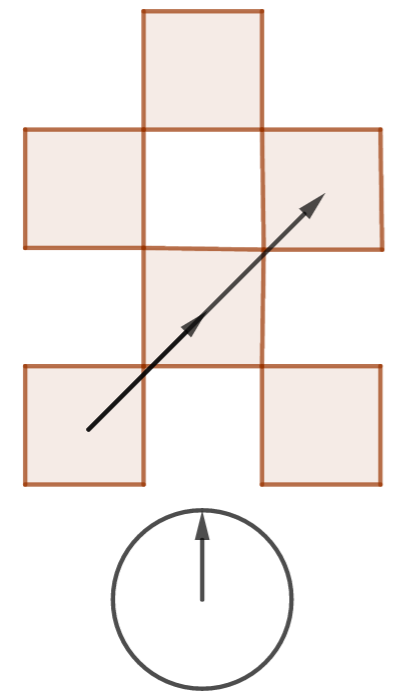} &
\includegraphics[width=1.5cm]{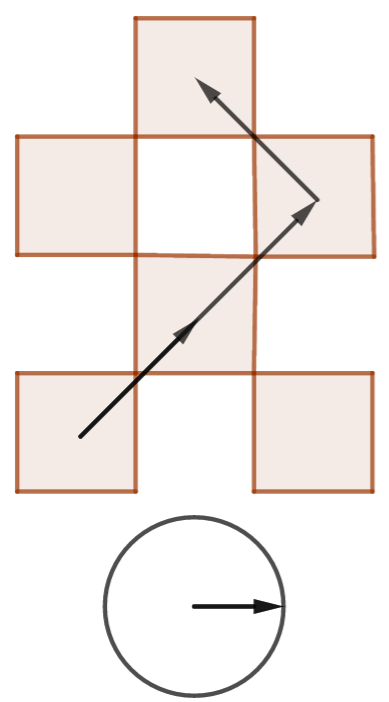}
\end{tabular}
\begin{tabular}{r}
\includegraphics[width=3.5cm]{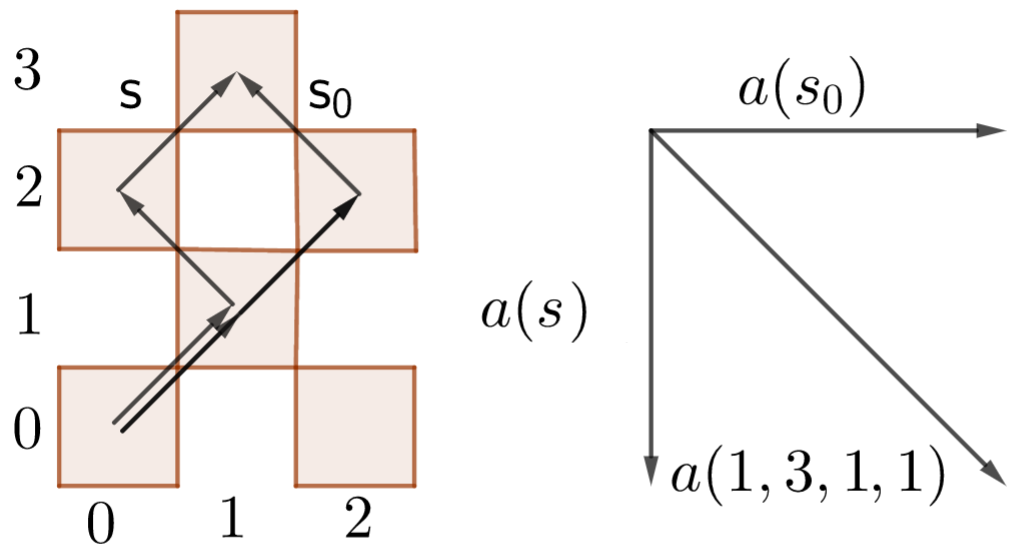}
\end{tabular}
\end{center}
\vspace{-0.5cm}
\caption{{Checker paths (left). Vectors, assigned to paths (right).}}
\label{Checker-paths}
\vspace{-0.3cm}
\end{figure}
\par


In this model, the external electromagnetic field is not added artificially, but appears naturally.

The vector $a(s)$ did not rotate, while the checker moved straightly. It goes almost without saying to rotate the vector during the motion. It does not change the model essentially: since all the checker paths from the initial position to the final one have the same length, it follows that all the vectors are rotated by the same angle, which does not affect the probabilities. A more interesting modification arises when the rotation angle depends on the current position of the checker.
This is exactly what the electromagnetic field does. Further, for simplicity, the rotation angle assumes only two values, $0^\circ$ and $180^\circ$, which means multiplication by $\pm 1$.

Thus we understand an electromagnetic field as a fixed arrangement $u$ of numbers $\pm 1$ in the vertices of the squares (see Fig.~\ref{C}). Let us modify the definition of the vector $a(s)$. Now, it changes the direction to the opposite, whenever the paths passes through the vertex with the electromagnetic field $-1$. Denote by ${a}(s,u)$ the resulting vector. We define ${a}(x,t,m,\varepsilon,u)$ and $P(x,t,m,\varepsilon,u)$ analogously to ${a}(x,t,m,\varepsilon)$ and $P(x,t,m,\varepsilon)$, changing ${a}(s)$ to ${a}(s,u)$. For example, if $u$ is identically $+1$, then $P(x,t,m,\varepsilon,u)=P(x,t,m,\varepsilon)$. Figure~\ref{C} depicts another field $u_1$ and the arrows $a(s,u_1)=(0,1/2)$, $a(s_0,u_1)=(1/2,0)$.

We summarize this construction with the following precise definition. \newline
\begin{definition}
\label{def1}
Fix $\varepsilon > 0$ and $m\ge 0$ called \textit{the lattice step} and \textit{the electron mass} respectively. 
Consider the lattice $\varepsilon \mathbb{Z}^2 = \{ (x,t): x/\varepsilon, t/\varepsilon \in \mathbb{Z} \}$. \textit{A checker path} is finite sequence of points of the lattice such that the vector from each point (except the last one) to the next one equals either $(\varepsilon,\varepsilon)$ or $(-\varepsilon,\varepsilon)$. \textit{A turn} is a point of the path (not the first and not the last one) such that the vector from this point to the previous one is orthogonal to the vector from this point to the next one. Denote by $\mathrm{turns}(s)$ the number of turns in the path $s$. \emph{An auxiliary edge} is the segment connecting two nearest points $(x_1,t_1)$ and $(x_2,t_2)$ of the lattice $\varepsilon\mathbb{Z}^2$ such that $(x_1 +t_1)/\varepsilon$ and $(x_2 + t_2)/\varepsilon$ are even. Let $u$ be a map from the set of all auxiliary edges into $\{+1,-1\}$. Denote by
\begin{align*}
&a(x,t,m,\varepsilon,u):= \\
&=(1+m^2\varepsilon^2)^{(1-t/\varepsilon)/2}\,i\,\sum_s (-im\varepsilon)^{\mathrm{turns}(s)}u(s_0s_1)u(s_1s_2)\dots u(s_{t/\varepsilon-1}s_{t/\varepsilon})
\end{align*}
the sum over all checker paths $s=(s_0,s_1,\dots,s_{t/\varepsilon})$ such that $s_0=(0,0)$, $s_1=(\varepsilon,\varepsilon)$,  $s_{t/\varepsilon}=(x,t)$. Hereafter an empty sum is set to be zero by definition.

Denote
$$P(x,t,m,\varepsilon,u) := |a(x,t,m,\varepsilon,u)|^2,$$
$$a_1(x,t,m,\varepsilon,u):=\mathrm{Re}\, a(x,t,m,\varepsilon,u),$$
$$a_2(x,t,m,\varepsilon,u):=\mathrm{Im}\,a(x,t,m,\varepsilon,u).$$
The value $P(x,t,m,\varepsilon,u)$ is called \textit{the probability to find an electron of mass $m$ at the point $(x,t)$ (or, in other words, at the point $x$ at the moment $t$)  on the lattice of step $\varepsilon$, if it was emitted from the point $(0,0)$ and moved in the field $u$}. The value $\sum\limits_{x \in \varepsilon\mathbb{Z}}a_1(x,t,m,\varepsilon,u)^2$ is called \emph{the probability to find the electron with the reversed chirality at the moment $t$}. (The meaning of this terminology is clarified in \cite[\S4]{article}.)
For half-integer $x/\varepsilon,t/\varepsilon$ denote by $u(x,t)$ the value of the field $u$ on the auxiliary edge with the midpoint at $(x,t)$.

The field $u_{\varepsilon}$, given by the formula $$u_{\varepsilon}(x+\varepsilon/2,t+\varepsilon/2)=
\begin{cases}
    -1, &\text{if } (t-x)/4\varepsilon \in \mathbb{Z}, \\
    1, &\text{othewise}.
\end{cases}$$
is called \textit{the homogeneous electromagnetic field} (see Fig.~\ref{C}).
\end{definition}

\begin{figure}[!ht]
\begin{center}
\begin{tabular}{c}
\includegraphics[width=.65\linewidth]{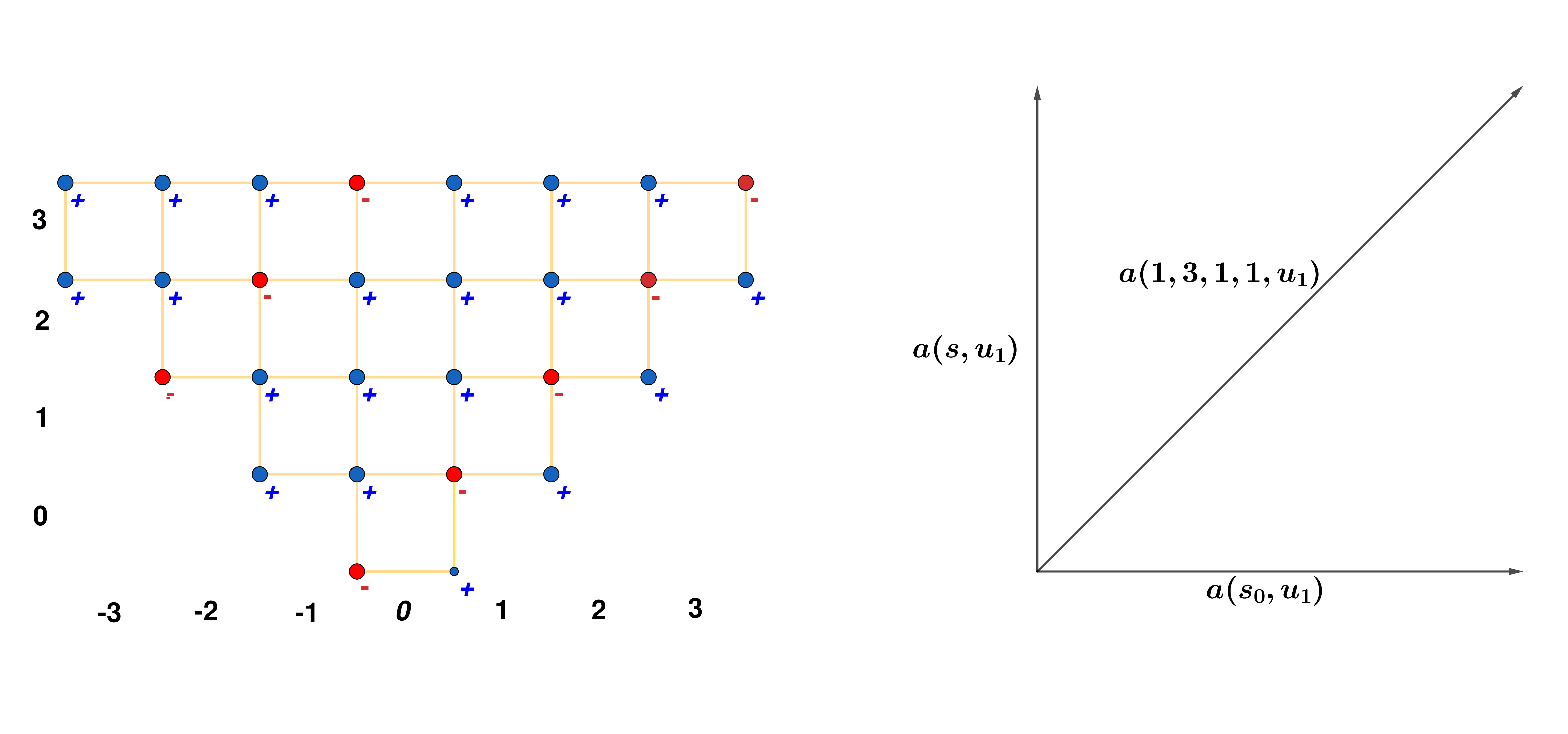}
\end{tabular}
\end{center}
\vspace{-0.6cm}
\caption{\footnotesize{{The homogeneous electromagnetic field $u_{1}$ (left). Vectors, assigned to the checker paths from Fig.~\ref{Checker-paths} in this field (right).}}}
\label{C}
\vspace{-0.3cm}
\end{figure}

\begin{table}[!h]
\resizebox{\textwidth}{!}{\begin{tabular}{|c|c|c|c|c|c|c|c|}
\hline
$4\varepsilon$&$-\frac{m \varepsilon}{(1+m^2\varepsilon^2)^{3/2}}$&&$\frac{(m \varepsilon+ m^3 \varepsilon^3) \bluevar{- m^2 \varepsilon^2 i}}{(1+m^2\varepsilon^2)^{3/2}}$&&$-\frac{m \varepsilon }{(1+m^2\varepsilon^2)^{3/2}}$&&$\frac{1}{(1+m^2\varepsilon^2)^{3/2}} i$\\
\hline
$3\varepsilon$&&$-\frac{m \varepsilon}{1+m^2\varepsilon^2}$&&$\frac{m \varepsilon + m^2 \varepsilon^2 i}{1+m^2\varepsilon^2}$&&$-\frac{1}{(1+m^2\varepsilon^2)} i$&\\
\hline
$2\varepsilon$&&&$-\frac{m\varepsilon}{\sqrt{1+m^2\varepsilon^2}}$&&$\frac{1}{\sqrt{1+m^2\varepsilon^2}} i$&&\\
\hline
$\varepsilon$&&&&$-i$&&&\\
\hline
\diagbox[dir=SW,height=21pt]{$t$}{$x$}&$-2\varepsilon$&$-\varepsilon$&$0$&$\varepsilon$&$2\varepsilon$&$3\varepsilon$&$4\varepsilon$ \\
\hline
\end{tabular}}
\caption{\footnotesize The values of $a(x,t,m,\varepsilon, u_{\varepsilon})$ in the homogeneous electromagnetic field for small $x$ and $t$.}
\label{table-amu}
\end{table}

\par
\begin{remark}
\label{Ueq}
The equality $a(\varepsilon x, \varepsilon t, m,\varepsilon,u_{\varepsilon}) = a(x,t,m\varepsilon,1,u_{1})$ holds for all integer $x,t$.\remove
\end{remark} 

\begin{remark}({Cf. \cite[Remark 5]{article}})
The field $u$ is a fixed classical \bluer{background} field, the electron does not affect it.

This definition is equivalent to one of the first constructions of gauge theory by Weil--Fock--London, and provides a coupling of Feynman checkers to the Wegner--Wilson $\mathbb{Z}/2\mathbb{Z}$ lattice gauge theory.

For instance, the field $u_{\varepsilon}$ in Fig.~\ref{C} has the form $$u_{\varepsilon}(s_1s_2)= \exp\left(-i\int\limits_{s_1s_2}\left(A_0 \mathrm{d} t + A_1 \mathrm{d} x\right) \right)$$ for each auxiliary edge $s_1s_2$, where \bluer{$$(A_0, A_1) := \frac{\pi}{4\varepsilon^2}\left(t-x + 2\varepsilon, t-x + 2\varepsilon \right)$$} is the vector-potential of a constant homogeneous electromagnetic field. This field blows up as $\varepsilon \to 0$, thus the limit in Theorem \ref{fproblim} fails to be interpreted as the wave function of the electron in some continuous external field. Cf. \cite[Problem 12]{article}.
\end{remark}
Table~\ref{table-amu} depicts the values of the wave function of an electron in the model with the homogeneous electromagnetic field for small $x,t$.

\section{Statement of the results}
\label{results}
Let us state our main result (Theorem~\ref{fproblim} above) in a stronger form.

\begin{theorem}
\label{wavelimit}
Let $u_\varepsilon$ be the homogeneous electromagnetic field. Then for each $m>0$ and $|x|<t$ we have:
\begin{align*}
&\lim\limits_{\varepsilon \searrow 0}\frac{1}{2\varepsilon} a_1\left(4\varepsilon\left\lfloor{\frac{ x}{4\varepsilon}}\right\rfloor,4\varepsilon\left\lfloor{\frac{ t}{4\varepsilon}}\right\rfloor,m,\varepsilon,u_{\varepsilon}\right) =\frac{m}{2} J_0\left(m \sqrt{\frac{t^2-x^2}{2}}\right),\\
&\lim\limits_{\varepsilon \searrow 0}\frac{1}{2\varepsilon} a_2\left(4\varepsilon\left\lfloor{\frac{ x}{4\varepsilon}}\right\rfloor,4\varepsilon\left\lfloor{\frac{ t}{4\varepsilon}}\right\rfloor,m,\varepsilon,u_{\varepsilon}\right) =-\frac{m}{\sqrt{2}} \sqrt{\frac{t+x}{t-x}}J_1\left(m \sqrt{\frac{t^2-x^2}{2}}\right).
\end{align*}
\end{theorem}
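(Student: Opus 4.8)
The plan is to follow the route signalled in the text: first obtain an exact closed form for the wave function in the homogeneous field, and then run the continuum-limit analysis of the field-free Theorem~\ref{problim} (the method of \cite[Appendix B]{article}). First I would pass to the unit lattice: by Remark~\ref{Ueq}, setting $X:=4\lfloor x/4\varepsilon\rfloor$, $T:=4\lfloor t/4\varepsilon\rfloor$ (both divisible by~$4$) and $\mu:=m\varepsilon$, we have $a(\varepsilon X,\varepsilon T,m,\varepsilon,u_\varepsilon)=a(X,T,\mu,1,u_1)$, so it suffices to analyse $a(X,T,\mu,1,u_1)$ in the continuum scaling $\mu\searrow0$, $X,T\to\infty$, $\varepsilon X\to x$, $\varepsilon T\to t$.

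The crucial input is the exact solution (Proposition~\ref{formula}). As the text stresses, it cannot be read off from elementary path counting, so I would derive it by a Bloch/Fourier analysis adapted to the period-$2$ cell of $u_1$: diagonalising the two-step transfer operator over the reduced Brillouin zone yields a Fourier-integral representation, which expanded in powers of $\mu$ gives explicit finite sums over the number of turns~$r$, with normalisation $(1+\mu^2)^{(1-T)/2}$, alternating signs, a product of two binomial coefficients over the coarse step-$4$ lattice (arguments $(T\pm X)/4$), and turn-weights carrying an extra factor $2^r$ relative to the field-free model. This extra $2^r$ --- an avatar of the renormalisation $m_0=\sqrt2\,m$ noted after Theorem~\ref{wl-intr} --- together with the coarser binomials produces $\bigl(m^2(t^2-x^2)/8\bigr)^r/(r!)^2$ in the limit, i.e.\ the factor $(t^2-x^2)/2$ inside the Bessel argument; the fact that the real and imaginary parts collect turns of opposite parity is the source of the extra $\sqrt2$ and of the sign in the $a_2$-limit.

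With the exact formula in hand I would transcribe \cite[Appendix B]{article}: rewrite each finite sum as a Gauss hypergeometric function evaluated at a point governed by $X/T$, and take the limit termwise. Since $\mu^2T\to0$, the normalisation $(1+\mu^2)^{(1-T)/2}\to1$, and the general term of the real-part series tends to $(-1)^r\bigl(m^2(t^2-x^2)/8\bigr)^r/(r!)^2$, whose sum is $J_0\bigl(m\sqrt{(t^2-x^2)/2}\bigr)$; collecting the prefactor and the factor $1/2\varepsilon$ gives $\tfrac{m}{2}J_0\bigl(m\sqrt{(t^2-x^2)/2}\bigr)$. The imaginary-part series is analogous, with $r!(r+1)!$ in the denominator and the extra factor $\sqrt{(t+x)/(t-x)}$, yielding $-\tfrac{m}{\sqrt2}\sqrt{(t+x)/(t-x)}\,J_1\bigl(m\sqrt{(t^2-x^2)/2}\bigr)$. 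To pass the limit under the summation I would dominate the terms uniformly in $\varepsilon$ by a convergent majorant and apply dominated convergence, exactly as in the field-free proof.

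The main obstacle is the exact solution of the first step. The genuinely new feature is that the two chirality channels are not governed by a single mass renormalisation: both limits acquire the argument $m\sqrt{(t^2-x^2)/2}$, but only $a_2$ picks up the extra $\sqrt2$ (and a definite sign), so the even- and odd-turn contributions must be tracked separately through the period-$2$ transfer matrix. Reconciling the normalisation $(1+\mu^2)^{(1-T)/2}$, the coarse binomials, and the $2^r$ so that the constants $\tfrac{m}{2}$ and $\tfrac{m}{\sqrt2}$ emerge correctly is the crux; once Proposition~\ref{formula} is established, the asymptotic steps above are a routine adaptation of the known field-free argument.
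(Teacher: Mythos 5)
Your proposal follows the same skeleton as the paper's own proof of Theorem~\ref{wavelimit}: reduction to the unit lattice via Remark~\ref{Ueq}, substitution of the exact binomial-sum solution of Proposition~\ref{formula} (whose structure you describe correctly --- in particular the turn-weights $(1-(1+\mu^2)^2)^j=(-1)^j(2+\mu^2)^j\mu^{2j}$, which combined with the step-$4$ binomials give the $1/8^j$ and hence the renormalised Bessel argument $m\sqrt{(t^2-x^2)/2}$), and a termwise passage to the limit justified by a convergent majorant uniform in $\varepsilon$. This is exactly what the paper does with Lemmas~\ref{norm}, \ref{lim20}, \ref{binom} and the dominated-convergence lemma cited from \cite{article}, including the observation that the normalisation tends to $1$ because $\mu^2T=O(\varepsilon)$.

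The one point where you genuinely deviate is the derivation of Proposition~\ref{formula} itself. The paper proves it twice: once by generating functions (solving a linear system obtained from the lattice Dirac equation, Lemma~\ref{generSum}, extracting coefficients, and converting via hypergeometric identities), and once by induction on $\xi+\eta$ using Lemma~\ref{dirak}. You instead propose diagonalising the two-step transfer operator in momentum space and expanding in powers of $\mu$. That route is plausible in spirit (it is essentially how the paper says Proposition~\ref{IntForm} could be obtained), but your sketch asserts the product-of-binomials structure rather than deriving it, and this is precisely the step the paper emphasises cannot be read off from simple combinatorial ideas; extracting the closed form from a Fourier representation is nontrivial and is where the real work of the paper lies. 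As written this is the only substantive gap in your argument, and it can be closed either by carrying out that computation or simply by citing Proposition~\ref{formula}, which the paper establishes independently of Theorem~\ref{wavelimit}.
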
 

In Figure~\ref{Limit}\bluer{(left)}, dashes depict the graph of the normalized probability $\frac{1}{4\varepsilon^2}P\left(4\varepsilon\left\lfloor{\frac{ x}{4\varepsilon}}\right\rfloor,4\varepsilon\left\lfloor{\frac{ t}{4\varepsilon}}\right\rfloor,m,\varepsilon,u_{\varepsilon}\right)$ for $t = 50, m = 1, \varepsilon = 0.1$ and arbitrary $|x| < t$, which is a multiple of $0.1$, while the curve represents the expression with Bessel functions from Theorem~\ref{fproblim}, i.e. the pointwise limit of the normalized probability at $\varepsilon \to 0$.

In order to prove Theorem~\ref{wavelimit} we need the following proposition.

Denote by $\delta_2(b)$ the remainder of an integer $b$ after the division by 2. We also set ${a\choose j}=0$, if integer $a < 0$ or $j < 0$.

\begin{proposition}({Cf. \cite[Proposition 11]{article}})
\label{formula}
For each real $m \ge 0$ and integer $\xi,\eta \ge 0$ the following equalities hold:
\begin{align*}
a_1(\xi - \eta + 1,&\xi + \eta + 1,m,1,u_{1})=\\
&=(-1)^{\xi+1}\frac{m(1+m^2)^{\delta_2(\xi(\eta+1))} }{(1+m^2)^{\frac{\xi+\eta}{2}}}\sum_{j=0}^{\lfloor\frac{\xi}{2}\rfloor}{\lfloor\frac{\xi}{2}\rfloor\choose j}{\lfloor\frac{\eta-1}{2}\rfloor\choose j}(1 - (1+m^2)^2)^j ;\\
a_2(\xi - \eta + 1,&\xi + \eta+1,m,1,u_{1})=\\
&=\frac{(-1)^{\xi+1}}{(1+m^2)^{\frac{\xi+\eta}{2}}}\sum_{j=0}^{\lfloor\frac{\xi}{2}\rfloor}\left( {\lfloor\frac{\eta}{2}\rfloor\choose j}(1 +m^2)^{\delta_2(\xi\eta)}-\right.\\ 
&-\left.{\lfloor\frac{\eta-1}{2}\rfloor\choose j}(1+m^2)^{\delta_2(\xi(\eta+1))} \right){\lfloor\frac{\xi}{2}\rfloor\choose j}(1 - (1+m^2)^2)^j.
\end{align*}
\end{proposition}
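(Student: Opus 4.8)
The plan is to reduce the computation to the known field-free explicit formula (Proposition 11 of \cite{article}) by ``folding'' two consecutive time-steps, which turns the model into a field-free walk with a renormalized mass. First I would make the field completely explicit at the level of individual paths. Writing a checker path as a word in the two moves $R=(\varepsilon,\varepsilon)$ and $L=(-\varepsilon,\varepsilon)$ (here $\varepsilon=1$), one checks directly from Definition~\ref{def1} that a right move carries the field factor $-1$ precisely when an \emph{even} number of left moves precede it, whereas every left move carries $+1$. Consequently the total field factor $u_{1}(s_0s_1)\cdots u_{1}(s_{t-1}s_t)$ of a path $s$ equals $(-1)^{\xi+1}(-1)^{N(s)}$, where $\xi+1$ is the number of right moves and $N(s)$ is the number of ordered pairs consisting of a left move followed (later in time) by a right move; equivalently the ``phase'' $\phi(x',t'):=(-1)^{(t'-x')/2}$ is a function of the current position and each right step contributes the field sign $-\phi$. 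The example $a(s,u_1)=(0,1/2)$, $a(s_0,u_1)=(1/2,0)$ of Figure~\ref{C} is the case $\xi=\eta=1$ and is a convenient sanity check of these signs.

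Next I would pass to the discrete Dirac form of the model. Splitting $a=a_++a_-$ according to chirality (the direction of the last move), the per-path factor above turns the usual field-free two-term recurrence for $(a_+,a_-)$ into a recurrence whose coefficients acquire a position-dependent sign of period $4$ in $t-x$ (equivalently period $2$ in the number of left steps). The heart of the proof is then to fold two consecutive steps: I would show that the product of two consecutive one-step transfer matrices is, after a position-independent gauge, exactly the one-step transfer matrix of the \emph{field-free} Feynman checkers with a renormalized mass $m_0$ determined by $1+m_0^2=(1+m^2)^2$, i.e. $m_0^2=m^2(2+m^2)$. This is precisely why the base of the power in the claimed sums equals $-m_0^2=1-(1+m^2)^2$, and it is the discrete incarnation of the mass renormalization emphasized after Theorem~\ref{wl-intr}.

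With the folding in hand I would apply the field-free closed form (Proposition 11 of \cite{article}) to the coarse, field-free walk of mass $m_0$. Since $(1-t)/2=-(\xi+\eta)/2$, the defining normalization $(1+m^2)^{(1-t)/2}$ already equals the prefactor $(1+m^2)^{-(\xi+\eta)/2}$ in the statement; on the coarse lattice the numbers of right and left super-steps are governed by $\lfloor\xi/2\rfloor$ and $\lfloor\eta/2\rfloor$, which is the source of the floor functions and of the single sum $\sum_j\binom{\lfloor\xi/2\rfloor}{j}\binom{\lfloor(\eta-1)/2\rfloor}{j}(1-(1+m^2)^2)^j$. Comparing the defining normalization with the one natural on the coarse lattice, and accounting for the unpaired step that appears when $t$ is odd, produces the residual integer powers $(1+m^2)^{\delta_2(\cdots)}$ and fixes whether the upper index is $\lfloor\eta/2\rfloor$ or $\lfloor(\eta-1)/2\rfloor$; the parity of $\xi$ accounts for the global sign $(-1)^{\xi+1}$.

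The main obstacle is exactly this parity/boundary bookkeeping. The two-step operator is homogeneous only in the bulk: the first step is pinned ($s_1=(\varepsilon,\varepsilon)$ from the origin), the last step may be unpaired, and the effective coarse matrix depends on the starting phase. Disentangling the four parity classes of $(\xi,\eta)$ is what generates the delicate correction exponents $\delta_2(\xi\eta)$ and $\delta_2(\xi(\eta+1))$ and the asymmetry between the $a_1$ and $a_2$ formulas; once the correct case is fixed, the remaining manipulation of the binomial sums is routine (Pascal's and Vandermonde's identities). As a fallback that sidesteps the conceptual folding, one can treat the stated right-hand sides as an ansatz and verify them by induction on $t$ using the one-step recurrence, checking the four parity cases of $(\xi,\eta)$ separately; this is purely computational and is the route I would take if the clean factorization of the two-step matrix turned out to be awkward to state.
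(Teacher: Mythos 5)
Your ``fallback'' route is correct, but it is not a fallback at all: it is precisely the paper's own proof. Appendix~\ref{first-proof} verifies exactly these right-hand sides by induction, using the one-step recurrence of Lemma~\ref{dirak} and splitting into parity cases; the paper's other proof (\S\ref{second-proof}) derives the formulas from generating functions and hypergeometric identities, again not by reduction to the field-free model. The route you present as primary, however, contains a genuine gap: the key folding lemma is false as stated.

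Take $\varepsilon=1$, write $n:=1+m^2$, and let $M:=\sqrt{(1+m^2)^2-1}=m\sqrt{2+m^2}$ be your renormalized mass parameter. Composing two steps of Lemma~\ref{dirak} gives, on the relevant sublattice,
\begin{equation*}
a_1(x,t+2)=\frac{1}{n}\Bigl(a_1(x+2,t)+m\,a_2(x+2,t)-\sigma(x,t)\bigl(m^2a_1(x,t)-m\,a_2(x,t)\bigr)\Bigr),
\end{equation*}
where $\sigma(x,t)=\pm1$ is determined by $(t-x)\bmod 4$. Two independent obstructions now appear. First, this two-step operator acts on the two sites $x$ and $x+2$, while the one-step operator of a coarse field-free walk acts on the single site $x+2$; a gauge transformation multiplies amplitudes by unimodular factors and cannot change the stencil, so ``two-step field $=$ gauge $\times$ one-step free'' is impossible. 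Second, even comparing two-step with two-step: the amplitudes arriving from $(x+2,t)$ enter in the fixed combination $a_1+m\,a_2$, and this internal ratio $m$ is unchanged in absolute value by any gauge $a_i\mapsto\chi_i a_i$, whereas for the field-free walk of mass $M$ the corresponding ratio is $M\neq m$. Hence no gauge equivalence exists, consistently with the fact that gauge equivalence would preserve $\sum_x a_1^2$ while the chirality-reversal limits disagree: $\frac{m}{\sqrt{2+m^2}}$ for the field model at even times (Theorem~\ref{chir-rev-f}) versus $\frac{M}{2\sqrt{1+M^2}}=\frac{m\sqrt{2+m^2}}{2(1+m^2)}$ for the free walk of mass $M$ (Theorem~\ref{chiral-rev}). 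What does renormalize is only the dispersion relation $\sin 2\omega_p=\frac{\sin 2p}{1+m^2}$ of Proposition~\ref{IntForm}; matching dispersions yield at best a momentum-dependent conjugacy between the two evolutions, and a $p$-dependent conjugation is nonlocal in $x$, so it does not carry the point source to a point source --- which is why \cite[Proposition 11]{article} cannot be invoked for the initial value problem you actually have. The statement you are proving records exactly this discrepancy: the prefactor of $a_1$ is $m$ rather than $M$ (and $m/M$ is not a power of $n$, so no normalization or gauge absorbs it), and $a_2$ is a two-term combination of free-type sums with the parity exponents $\delta_2(\cdot)$. Making the momentum-space conjugation and the distorted initial data explicit would essentially reproduce the paper's Fourier or generating-function analysis rather than shortcut it. In short, your proposal succeeds only through the induction you relegated to last place, and that induction coincides with the paper's Appendix~\ref{first-proof}.
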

\begin{remark} In particular, $a_2(\xi - \eta + 1,\xi + \eta+1,m,1,u_{1}) =0$ for $\eta - 1\equiv_2 \xi \equiv_2 0$.
\end{remark}

Formulae from Proposition~\ref{formula} can be rewritten in terms of hypergeometric functions. For integer $a,b,c$, where $b \le 0$ and $c>0$, the polynomial
$${}_2F_1\left(a,b;c;z\right) := 1 + \sum_{k=1}^{\infty}\prod_{l=0}^{k-1}\frac{(a+l)(b+l)}{(1+l)(c+l)}z^k$$
is called \emph{a hypergeometric function} (or \emph{a Jacobi polynomial}). Note that for integer $b \le 0$ this sum has a finite number of nonzero terms.

\begin{proposition}({Cf. \cite[Remark 3]{article}})
\label{hyper-form}
Denote $z := 1 - (1+m^2)^2$. For each real $m \ge 0$ and integer $\xi, \eta \ge 0$ we have
\footnotesize
\remove{}
\begin{align*}
&a_1(\xi-\eta+1,\xi+\eta+1,m,1,u_1)=\\
&=(-1)^{ \xi+1}m(1+m^2)^{-\frac{\xi+\eta}{2}+\delta_2((1+\eta)\xi)}\cdot{}_2F_1\left( - \left\lfloor\frac{\eta-1}{2}\right\rfloor  ,-\left\lfloor\frac{\xi}{2}\right\rfloor;1;z\right);\\
&a_2(\xi-\eta+1,\xi+\eta+1,m,1,u_1)=\\
&=\begin{dcases}
\frac{-\xi}{2}(1+m^2)^{-\frac{\xi+\eta}{2}}z\cdot{}_2F_1\left( - \frac{\eta}{2} + 1 ,-\frac{\xi}{2}+ 1;2;z\right), &\text{if } \xi \equiv_2 \eta \equiv_2 0,\\
\frac{\bluevar{(\xi-1)} z\cdot{}_2F_1\left( - \frac{\eta}{2} + 1 ,-\frac{\xi-1}{2} + 1;2;z\right) \bluevar{-} 2m^2\cdot{}_2F_1\left( - \frac{\eta}{2}+1,-\frac{\xi-1}{2};1;z\right)}{2(1+m^2)^{\frac{\xi+\eta}{2}}}, &\text{if } \xi - 1 \equiv_2 \eta \equiv_2 0,\\
0 , &\text{if } \xi  \equiv_2 \eta - 1 \equiv_2 0,\\
(1+m^2)^{-\frac{\xi+\eta}{2}}m^2\cdot{}_2F_1\left( - \frac{\eta-1}{2},-\frac{\xi-1}{2};1;z\right), &\text{if } \xi \equiv_2\eta \equiv_2 1.
\end{dcases}
\end{align*}
\normalsize
\end{proposition}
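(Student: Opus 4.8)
The plan is to reduce everything to two elementary identities converting the binomial sums of Proposition~\ref{formula} into the series from the definition of ${}_2F_1$, and then to treat the four parity classes of $(\xi,\eta)$ separately. Writing $\binom{A}{j}=(-1)^j(-A)_j/j!$ (Pochhammer symbol) and using $(1)_j=j!$, $(2)_j=(j+1)!$, I would first establish, for nonnegative integers $A,B$, the two identities
\begin{align*}
\sum_{j\ge 0}\binom{A}{j}\binom{B}{j}z^j &= {}_2F_1(-A,-B;1;z),\\
\sum_{j\ge 0}\frac{1}{j+1}\binom{A}{j}\binom{B}{j}z^j &= {}_2F_1(-A,-B;2;z),
\end{align*}
the second one because $\tfrac{1}{j+1}\binom{A}{j}\binom{B}{j}=\frac{(-A)_j(-B)_j}{j!\,(j+1)!}$ is exactly the coefficient of $z^j$ in ${}_2F_1(-A,-B;2;z)$ per the definition in the excerpt. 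Both sums are finite since $A,B\ge 0$.

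With these in hand, the $a_1$-formula is immediate: the sum $\sum_j\binom{\lfloor\xi/2\rfloor}{j}\binom{\lfloor(\eta-1)/2\rfloor}{j}z^j$ in Proposition~\ref{formula} equals ${}_2F_1(-\lfloor(\eta-1)/2\rfloor,-\lfloor\xi/2\rfloor;1;z)$ by the first identity and the symmetry of ${}_2F_1$ in its first two arguments, while the prefactors already coincide (note $\delta_2(\xi(\eta+1))=\delta_2((1+\eta)\xi)$). For $a_2$ I would substitute the parity values into the bracketed coefficient of Proposition~\ref{formula}. In the case $\xi\equiv_2\eta-1\equiv_2 0$ one has $\lfloor\eta/2\rfloor=\lfloor(\eta-1)/2\rfloor$ and both powers of $(1+m^2)$ equal $1$, so the bracket vanishes identically, giving $a_2=0$. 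In the case $\xi\equiv_2\eta\equiv_2 1$ the two binomial coefficients again coincide and the powers of $(1+m^2)$ are $1$ and $0$, so the bracket collapses to $m^2\binom{(\eta-1)/2}{j}\binom{(\xi-1)/2}{j}$, and the first identity yields the stated $m^2\,{}_2F_1(-\tfrac{\eta-1}{2},-\tfrac{\xi-1}{2};1;z)$.

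The two remaining cases require genuine rearrangement. For $\xi\equiv_2\eta\equiv_2 0$ the bracket is $\bigl(\binom{\eta/2}{j}-\binom{\eta/2-1}{j}\bigr)\binom{\xi/2}{j}$, and Pascal's rule $\binom{\eta/2}{j}-\binom{\eta/2-1}{j}=\binom{\eta/2-1}{j-1}$ together with the index shift $\binom{\xi/2}{j}=\frac{\xi/2}{j}\binom{\xi/2-1}{j-1}$ turns the sum into $\frac{\xi}{2}z\sum_k\frac{1}{k+1}\binom{\eta/2-1}{k}\binom{\xi/2-1}{k}z^k$, which is $\frac{\xi}{2}z\,{}_2F_1(-\tfrac{\eta}{2}+1,-\tfrac{\xi}{2}+1;2;z)$ by the second identity; combined with $(-1)^{\xi+1}=-1$ this is precisely the claimed formula. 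The main obstacle is the case $\xi-1\equiv_2\eta\equiv_2 0$, where both a $c=1$ and a $c=2$ hypergeometric term appear. Here I would again apply Pascal's rule to write $\binom{\eta/2}{j}=\binom{\eta/2-1}{j}+\binom{\eta/2-1}{j-1}$, so that the bracket splits as $\binom{\eta/2-1}{j-1}\binom{(\xi-1)/2}{j}-m^2\binom{\eta/2-1}{j}\binom{(\xi-1)/2}{j}$ after using $1-(1+m^2)=-m^2$; the first piece gives $\frac{\xi-1}{2}z\,{}_2F_1(-\tfrac{\eta}{2}+1,-\tfrac{\xi-1}{2}+1;2;z)$ via the shift-and-second-identity argument, and the second piece gives $-m^2\,{}_2F_1(-\tfrac{\eta}{2}+1,-\tfrac{\xi-1}{2};1;z)$. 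Collecting these over the common denominator $2(1+m^2)^{(\xi+\eta)/2}$ reproduces the displayed expression. The only real care needed throughout is bookkeeping of floor functions and the index shift at $j=0$, where the empty/zero-binomial conventions fixed before Proposition~\ref{formula} guarantee the boundary terms behave correctly.
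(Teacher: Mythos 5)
Your proposal is correct in substance for $\eta\ge 1$, but it takes a genuinely different route from the paper. The paper never passes from Proposition~\ref{formula} to Proposition~\ref{hyper-form}; it goes the other way around. Starting from the generating-function closed form (Lemmas~\ref{generSum} and~\ref{firstclosedf}), it first writes $a_1$ as a hypergeometric polynomial with third parameter $c=-\lfloor\xi/2\rfloor-\lfloor\frac{\eta-1}{2}\rfloor$ and argument $(1+m^2)^2$ (Lemma~\ref{hyper}), then applies the transformation identity \cite[(5.105)]{math} to reach the $c=1$, argument-$z$ form of Proposition~\ref{hyper-form}; the binomial formula of Proposition~\ref{formula} for $a_1$ is then read off by expanding that series, and the entire $a_2$ part of both propositions is handled in one line ("apply Dirac's equation, Lemma~\ref{dirak}"). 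You instead take Proposition~\ref{formula} as input and convert its binomial sums into ${}_2F_1$'s by Pochhammer identification, Pascal's rule, and the index shift $\binom{\xi/2}{j}=\frac{\xi/2}{j}\binom{\xi/2-1}{j-1}$. This direction is legitimate because Proposition~\ref{formula} has an independent inductive proof (Appendix~\ref{first-proof}); that is what grounds your argument, since the Section~4 proof obtains the binomial $a_1$-formula \emph{from} the hypergeometric one. Granting that, your derivation is sound and buys two things: it avoids the transformation identity \cite[(5.105)]{math} entirely, and it is more complete than the paper on the $a_2$ side --- your four parity cases (which I checked: Pascal's rule, the shift, and the $c=2$ identification all work out) supply exactly the details the paper's one-line remark omits.

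One concrete correction: your closing claim that the zero-binomial conventions ``guarantee the boundary terms behave correctly'' is false at $\eta=0$. Your first identity $\sum_j\binom{A}{j}\binom{B}{j}z^j={}_2F_1(-A,-B;1;z)$ requires $A,B\ge 0$; for $\eta=0$ one has $B=\lfloor\frac{\eta-1}{2}\rfloor=-1$, so the left side vanishes identically by the convention while the right side is ${}_2F_1(1,-A;1;z)=(1-z)^{A}\ne 0$, and likewise $\frac{\eta}{2}-1=-1$ breaks Pascal's rule in your two even-$\eta$ cases. In fact the Proposition as printed fails at $\eta=0$: for $(\xi,\eta)=(2,0)$, Proposition~\ref{formula} and Table~\ref{table-amu} give $a_2(3,3,m,1,u_1)=-1/(1+m^2)$, whereas the first case of Proposition~\ref{hyper-form} gives $-z/(1+m^2)=(2m^2+m^4)/(1+m^2)$. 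So your proof should be stated for $\eta\ge 1$, with the $\eta=0$ discrepancy noted. This defect is inherited from the paper itself (its proof checks $\eta=0,1,2$ only by substitution into Lemma~\ref{firstclosedf}, i.e.\ against the binomial form, not the hypergeometric one), so it does not undermine your method in the range where the statement is true; but an honest write-up cannot assert that the conventions make the boundary case work.
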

 
We provide two proofs of Proposition~\ref{formula}. The first one (see \S \ref{second-proof}) uses the method of generating functions (cf. ~\cite[Appendix A]{article}) and the machinary of hypergeometric functions, while the second one (see \ref{first-proof}) is elementary, but does not clarify where the formulae come from. 

Let us state Theorems~\ref{wl-intr}~and~\ref{chir-rev-f} in a stronger form.

\begin{theorem}({Cf.~\cite[Theorem 1]{article}})
\label{large-time-lim} For each real $m, \varepsilon> 0$ we have:
\begin{enumerate}
    \item[(A)] For each real $v$ the following equality holds  $$\lim_{\substack{t \to \infty \\ t \in \varepsilon\mathbb{Z}}}\sum_{\substack{x\le v t\\ x \in \varepsilon\mathbb{Z}}}P(x,t,m,\varepsilon,u_{\varepsilon})= F(v):=
    \begin{cases}
        0 ,& \text{if } v < -\frac{1}{1+m^2\varepsilon^2};\\
        \frac{1}{\pi}\arccos{\frac{1-(1+m^2\varepsilon^2)^2v}{(1+m^2\varepsilon^2)(1-v)}}, &\text{if } |v| \le \frac{1}{1+m^2\varepsilon^2};\\
        1 , &\text{if } v > \frac{1}{1+m^2\varepsilon^2}.
    \end{cases} $$
    \item[(B)] For each real $v$ there is the following convergence in 
    distribution as $t\to\infty$, where $t \in \varepsilon\mathbb{Z}$:
    $$
    \frac{t}{\varepsilon} P\left(\left\lceil \frac{vt}{\varepsilon}\right\rceil\varepsilon, t,m,\varepsilon,u_{\varepsilon}\right) \overset{d}{\to} F^{\prime}(v) = 
    \begin{cases}
        \frac{\sqrt{(1+m^2\varepsilon^2)^2 - 1}}{\pi(1-v)\sqrt{1-(1+m^2\varepsilon^2)^2v^2}}, & \text{if } |v| \le \frac{1}{1+m^2\varepsilon^2};\\
        0 , & \text{if } |v| > \frac{1}{1+m^2\varepsilon^2}.
    \end{cases}
    $$
    \item[(C)] For each integer $r \ge 0$ we have
    \\
    $\lim\limits_{\substack{t \to \infty \\ t \in \varepsilon\mathbb{Z}}}\sum\limits_{x \in \varepsilon\mathbb{Z}}\frac{x^r}{t^r}P(x,t,m,\varepsilon,u_{\varepsilon})=\int\limits_{-1}^{1}v^r F^{\prime}(v) dv.$
\end{enumerate}

\end{theorem}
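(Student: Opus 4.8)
The plan is to pass to momentum space and argue by the method of moments, exactly as in the field-free case treated in \cite[\S 12.1]{article}; the only genuinely new ingredient is that the period-$2$ field forces one to diagonalize the \emph{two-step} evolution operator rather than the one-step operator. By Remark~\ref{Ueq} it suffices to treat $\varepsilon=1$ and afterwards replace $m$ by $m\varepsilon$, so write $\psi_t(x)=(a_1,a_2)(x,t,m,1,u_1)$ as a two-component spinor. Since $u_1$ is invariant under the shifts $(x,t)\mapsto(x+2,t)$ and $(x,t)\mapsto(x,t+2)$, I would group the lattice into a period-$2$ cell and take the discrete Fourier transform $\hat\psi_t(p)=\sum_{x}e^{-ipx}\psi_t(x)$ (with $p$ in the reduced Brillouin zone, the spinor acquiring an extra two-fold index from the doubled spatial cell). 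Translation invariance of the two-step dynamics then gives $\hat\psi_{t+2}(p)=V(p)\hat\psi_t(p)$ for an explicit unitary $V(p)$, namely the product of the two single-step Dirac matrices with the field signs of $u_1$ inserted. Diagonalizing $V(p)=\sum_{\pm}e^{\pm 2i\omega(p)}\Pi_{\pm}(p)$ yields the dispersion relation $\omega(p)$ and the spectral projections $\Pi_\pm(p)$; note that the initial spinor is localized at the origin, so $\hat\psi_0(p)$ is essentially $p$-independent and the band weights $\|\Pi_\pm(p)\hat\psi_0\|^2$ depend on $p$ only through $\Pi_\pm(p)$.

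The key structural point, which I would check by computing $\operatorname{tr}V(p)$ directly, is that the eigenvalues of $V(p)$ coincide with the \emph{squares} of the eigenvalues of the field-free one-step operator at the renormalized mass $m_0$ determined by $1+m_0^2=(1+m^2\varepsilon^2)^2$. Thus $\omega(p)$ equals the field-free dispersion at mass $m_0$; in particular the group velocity $v=\mp\omega'(p)$ ranges over the interval $[-1/(1+m^2\varepsilon^2),\,1/(1+m^2\varepsilon^2)]=[-1/\sqrt{1+m_0^2},\,1/\sqrt{1+m_0^2}]$, matching the support in the statement. This is the precise mechanism behind the mass renormalization announced after Theorem~\ref{wl-intr}.

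Next I would prove (C). By Parseval, multiplication of $\psi$ by $x$ corresponds to $i\partial_p$ on $\hat\psi$, so up to the Parseval normalization
\begin{align*}
\sum_{x}\frac{x^r}{t^r}\,P(x,t,m,\varepsilon,u_\varepsilon)=\frac{1}{2\pi\,t^r}\int_{-\pi}^{\pi}\big\langle\hat\psi_t(p),(i\partial_p)^r\hat\psi_t(p)\big\rangle\,dp .
\end{align*}
Writing $\hat\psi_t=\sum_{\pm}e^{\pm i\omega(p)t}\Pi_\pm\hat\psi_0$ and letting $\partial_p$ act on the rapidly oscillating phases, the leading contribution is $(\pm\,\omega'(p)\,t)^r$; every term in which $\partial_p$ hits an amplitude is $O(t^{r-1})$ and drops out after division by $t^r$, while the interband cross terms carry a factor $e^{\pm 2i\omega(p)t}$ and vanish by the Riemann--Lebesgue lemma. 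Hence
\begin{align*}
\lim_{t\to\infty}\sum_{x}\frac{x^r}{t^r}\,P(x,t,m,\varepsilon,u_\varepsilon)=\frac{1}{2\pi}\int_{-\pi}^{\pi}\omega'(p)^r\,w(p)\,dp=\int_{-1}^{1}v^r\,F'(v)\,dv ,
\end{align*}
the last equality being the change of variables $v=\omega'(p)$, with $w$ the weight assembled from the band amplitudes $\|\Pi_\pm(p)\hat\psi_0\|^2$. Because the renormalization reduces every quantity to the field-free computation at mass $m_0$, the resulting density is automatically Konno's density with $m$ replaced by $m_0$, i.e.\ the expression $F'$ in part (B).

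Finally, (A) and (B) follow from (C) by the method of moments: the limiting measure $F'(v)\,dv$ is supported in the compact interval $[-1,1]$, hence is uniquely determined by its moments, so convergence of all moments in (C) is equivalent to weak convergence of the distribution of $X_t/t$. This gives convergence of the cumulative distribution function at every point (the limit $F$ is continuous), which is (A), and the stated local form of the density, which is (B); one checks that the integral of $F'$ indeed equals $\tfrac{1}{\pi}\arccos\frac{1-(1+m^2\varepsilon^2)^2v}{(1+m^2\varepsilon^2)(1-v)}$. The main obstacle is the bookkeeping forced by the broken translation invariance: correctly building the doubled unit cell, computing $V(p)$ and its spectral decomposition, and---since the number of steps $t/\varepsilon$ has a fixed but arbitrary parity---verifying that neither the projection of the initial spinor onto the two bands nor the neglected oscillatory cross terms affect the limit. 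A secondary point is the behaviour at the band edges $v=\pm 1/(1+m^2\varepsilon^2)$, where $\omega''$ vanishes and $F'$ acquires the integrable square-root singularities visible in the formula; these are harmless for the moment method but would require care if one argued instead through stationary phase.
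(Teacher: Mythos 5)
Your overall strategy coincides with the paper's: a momentum-space representation of the walk, Parseval plus differentiation of the oscillatory phases to prove (C), the change of variables $v=\omega_p'$, and the method of moments to deduce (A) and (B) (the paper does this via Proposition~\ref{IntForm} together with \cite[Lemma 2]{article}; your Floquet--Bloch derivation of the momentum representation is exactly the ``Fourier method'' route that the paper mentions but replaces by an inductive verification of Proposition~\ref{IntForm}). However, several of your specific claims fail, and one of them is load-bearing. First, the symmetry bookkeeping is wrong: $u_1$ is \emph{not} invariant under $(x,t)\mapsto(x+2,t)$ or $(x,t)\mapsto(x,t+2)$. The field equals $-1$ exactly on the up-right auxiliary edges lying on the diagonals $(t-x)/\varepsilon\equiv 0\pmod 4$, so a shift by $2$ in $x$ alone (or in $t$ alone) moves these stripes onto the complementary diagonals; the true symmetries are $(x,t)\mapsto(x+2,t+2)$ and shifts by $4$. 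Consequently the two-step evolution operator commutes with the shift by $4$ but not with the shift by $2$ (the mixed up-right/up-left transition amplitudes carry a single field factor, which changes sign under the shift by $2$), so the Bloch cell you propose is half the correct size and your $V(p)$ is not well defined until this is repaired.

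Second, and decisively, your ``key structural point'' is false: by Proposition~\ref{IntForm} the two-step quasi-energies satisfy $\sin 2\omega_p\varepsilon=\frac{\sin 2p\varepsilon}{1+m^2\varepsilon^2}$, so at $p=0$ the two-step operator has eigenvalue $e^{0}=1$, whereas the square of the field-free one-step operator at mass $m_0$ has eigenvalues $e^{\pm 2i\arccos(1/(1+m^2\varepsilon^2))}\neq 1$ for $m>0$; the two spectra are different sets, and the trace computation you propose would refute, not confirm, your claim. The correct relation is $2\omega_p\varepsilon=\frac{\pi}{2}-\omega^{(m_0)}_{q}\varepsilon$ with $q=\frac{\pi}{2\varepsilon}-2p$, i.e. momentum and quasi-energy are both shifted by a quarter period, so the eigenvalues agree only after multiplication by $\pm i$. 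The consequence you actually need --- that $\omega_p'$ coincides, after this relabeling, with the free group velocity at mass $m_0$, hence has the same range and the same distribution under uniform $p$ --- does survive, since constant shifts do not affect derivatives; but not for the reason you state. Third, even granting the dispersion, the formula for $F'$ is not ``automatic'': the asymmetric factor $\frac{1}{1-v}$ in $F'$ comes from the band weights, which enter the odd moments through the factor $\frac{m^2\varepsilon^2+\cos 2p\varepsilon}{(1+m^2\varepsilon^2)\cos 2\omega_p\varepsilon}$, and verifying that these weights combine with the pushforward of $dp$ under $v=\omega_p'$ to yield exactly the stated density is the computational core of the paper's proof (its fourth equality, checked in \cite[Section 2]{computations}). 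Nothing in the renormalization shows a priori that the field-model weights match the free-model weights at mass $m_0$ --- the two models are genuinely inequivalent, as the comparison of Theorems~\ref{problim} and~\ref{fproblim} shows --- so this step is a real gap, not bookkeeping.
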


\begin{theorem}({Cf. \cite[Theorem 2]{IB-20}})
\label{prob-rev-field}
For each $m,\varepsilon, t > 0$, where $t\in \varepsilon\mathbb{Z}$, we have
\begin{align*}
    \sum_{\substack{x \in \varepsilon\mathbb{Z}}}a_1^2(x,t,m,\varepsilon,u_{\varepsilon}) = 
    \begin{cases}
        \frac{m\varepsilon}{(1+m^2\varepsilon^2)\sqrt{2 + m^2\varepsilon^2}} + O_{m,\varepsilon}\left(t^{-1/3}\right) ,& \text{if } \frac{t}{\varepsilon} \equiv_{2} 1;\\
       \frac{m\varepsilon}{\sqrt{2 + m^2\varepsilon^2}} + O_{m,\varepsilon}\left(t^{-1/3}\right),& \text{if } \frac{t}{\varepsilon} \equiv_{2} 0.
    \end{cases} 
\end{align*}
\end{theorem}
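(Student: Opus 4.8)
The plan is to pass to the momentum representation, apply Plancherel, and split the resulting integral into a secular part producing the limit and an oscillatory part producing the $O(t^{-1/3})$ error.

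First I would invoke the scaling identity of Remark~\ref{Ueq}, $a(\varepsilon X,\varepsilon T,m,\varepsilon,u_\varepsilon)=a(X,T,m\varepsilon,1,u_1)$, to pass to the unit lattice. Writing $\mu:=m\varepsilon$ and $T:=t/\varepsilon\in\mathbb{Z}$, the quantity becomes $\sum_{X\in\mathbb{Z}}a_1(X,T,\mu,1,u_1)^2$; the asserted constants are the images of those in the statement under $m\varepsilon\mapsto\mu$, the two cases correspond to the parity of $T$, and $O_{m,\varepsilon}(t^{-1/3})=O_\mu(T^{-1/3})$ since $\varepsilon$ is fixed. I would then introduce the Fourier transform of the real sequence $X\mapsto a_1(X,T,\mu,1,u_1)$,
\[
\widehat{a}_1(p,T):=\sum_{X\in\mathbb{Z}}a_1(X,T,\mu,1,u_1)\,e^{ipX},\qquad p\in[-\pi,\pi],
\]
so that by Plancherel
\[
\sum_{X\in\mathbb{Z}}a_1(X,T,\mu,1,u_1)^2=\frac{1}{2\pi}\int_{-\pi}^{\pi}\bigl|\widehat{a}_1(p,T)\bigr|^2\,dp .
\]
Because $u_1$ is periodic of period $2$ in both space and time, the momentum evolution is a Floquet system whose two-step transfer matrix has unimodular Bloch eigenvalues; diagonalizing it expresses $\widehat{a}_1(p,T)$ as a finite combination $\sum_k\beta_k(p)\,e^{i\omega_k(p)T}$ of oscillating modes with real phases $\omega_k$, the coefficients $\beta_k$ depending on the parity of $T$ through the micro-motion inside one period. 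This parity dependence is exactly the source of the two cases in the statement.

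Expanding $|\widehat a_1|^2=\sum_{k,l}\beta_k\overline{\beta_l}\,e^{i(\omega_k-\omega_l)T}$, the terms with $\omega_k\equiv\omega_l$ form a $T$-independent secular part and the remaining off-diagonal terms carry genuine oscillation. I would evaluate the secular integral $\frac1{2\pi}\int_{-\pi}^{\pi}\sum_k|\beta_k(p)|^2\,dp$ explicitly: after the change of variable dictated by the dispersion relation it is elementary, and I expect it to equal $\frac{\mu}{(1+\mu^2)\sqrt{2+\mu^2}}$ for odd $T$ and $\frac{\mu}{\sqrt{2+\mu^2}}$ for even $T$, thereby recovering the limit of Theorem~\ref{chir-rev-f} and pinning down the constants. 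As an independent check one may use probability conservation $\sum_X(a_1^2+a_2^2)=1$ (the normalization in Definition~\ref{def1} makes the evolution unitary and the field only inserts signs) together with the summed identity $\sum_X a_1^2=\tfrac12+\tfrac12\operatorname{Re}\sum_X a^2$ to rewrite the secular part as $\operatorname{Re}\frac1{2\pi}\int_{-\pi}^{\pi}\widehat a(p,T)\,\widehat a(-p,T)\,dp$.

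It remains to bound the oscillatory contribution, a finite sum of integrals $\frac1{2\pi}\int_{-\pi}^{\pi}\beta_k(p)\overline{\beta_l(p)}\,e^{i(\omega_k(p)-\omega_l(p))T}\,dp$ over the genuinely oscillating pairs, and this I expect to be the main obstacle. Away from the critical points of each phase $\phi:=\omega_k-\omega_l$, repeated integration by parts gives arbitrary decay, so the rate is controlled by the stationary points $\phi'(p)=0$. At a nondegenerate stationary point van der Corput's lemma yields $O(T^{-1/2})$, but the dispersion of this model has degenerate stationary points at which $\phi''$ also vanishes --- the caustic points responsible for the Airy asymptotics of Theorem~\ref{Airy} --- and there van der Corput with nonvanishing third derivative gives only $O(T^{-1/3})$, which dominates and produces the stated error. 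The technical work will be to verify that no stationary point is more degenerate than this (so the exponent $1/3$ is not worsened), that the amplitudes $\beta_k$ remain integrable up to any points where $\omega_k'$ blows up or where two phases coalesce, and to localize the estimate near the finitely many degenerate points with smooth cut-offs, the contribution of the complementary region being $O(T^{-1/2})$.
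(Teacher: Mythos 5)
Your proposal is correct and its architecture coincides with the paper's own proof: the paper likewise passes to momentum space and applies Parseval's theorem --- using the explicit Fourier representation of Proposition~\ref{IntForm} --- then splits $|\hat a_1(p,t)|^2$ by the power-reduction formula into a $t$-independent part, evaluated in closed form to give the two constants, and an oscillatory part with phase $2\omega_p t$, bounded by Lemma~\ref{cos-approx1}. The genuine differences are in execution. First, you obtain the momentum representation by Floquet--Bloch diagonalization of the two-step transfer matrix; the paper instead states the resulting formulas (Proposition~\ref{IntForm}) and verifies them by induction on $t/\varepsilon$ via the lattice Dirac equation (Lemma~\ref{dirak}), remarking explicitly that the Fourier method you describe is the natural derivation. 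Your route explains where the formulas come from, but it forces the check you yourself flag: at band crossings, where the two Bloch eigenvalues coalesce, naive diagonalization degenerates, and one must see that the coefficients recombine into expressions with the bounded denominator $\cos 2\omega_p\varepsilon$ (they do). Second, you invoke van der Corput's lemma for the oscillatory integrals, whereas the paper proves the cruder Lemma~\ref{cos-approx1} by hand: it removes $t^{-1/3}$-neighbourhoods of the stationary points and integrates by parts once on the remaining region.

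One factual correction, which does not affect the validity of your plan: the exponent $1/3$ is not caused by degenerate (Airy-type) stationary points. The oscillatory phase here is $2\omega_p t$ --- the spatial variable has been summed out --- and its stationary points, located where $\cos 2p\varepsilon=0$, are nondegenerate: there
\begin{equation*}
|\omega_p''| = \frac{2\varepsilon}{\sqrt{(1+m^2\varepsilon^2)^2-1}} \ne 0.
\end{equation*}
The caustic degeneracy you have in mind concerns the phase $px-\omega_p t$ with $|x/t|$ near the maximal group velocity, which drives Theorem~\ref{Airy} but never enters this theorem. Consequently the verification step you describe would reveal only second-order stationary points, van der Corput gives $O_{m,\varepsilon}(t^{-1/2})$, and the stated $O_{m,\varepsilon}(t^{-1/3})$ follows a fortiori; the $1/3$ in the statement reflects the paper's rough estimate, not a sharp obstruction.
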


For $m=\varepsilon=1$ this result confirms the conjecture stated by I. Bogdanov in \cite[Hypothesis 1]{IB-20}.

\bluer{Hereafter the notation $f(t,m,\varepsilon) = g(t,m,\varepsilon) + O_{m,\varepsilon}(h(t))$ means that there exists a constant $C(m,\varepsilon)$ (depending on $m,\varepsilon$, but independent of $t$) such that for each $t$ satisfying the conditions of the theorem, we have  $|f(t,m,\varepsilon) - g(t,m,\varepsilon)| \le C(m,\varepsilon)h(t)$.}

\bluer{We announce the following uniform asymptotic formula for the wave function. It is based on Theorem \ref{Airy_Zak} proved in \cite{Zakorko}, which is not publicly available, but we expect another proof to appear in \cite{Skopenkov-Ustinov-Kuyanov-Drmota-23}.}
\begin{theorem}
\label{Airy}
    For $x/2\varepsilon$ and $t/2\varepsilon + 1$ even, $0<m\varepsilon < 1$, and $\left|\frac{x}{t}\right| < \frac{1}{1+m^2\varepsilon^2}$ we have
    \begin{align*}
    &a_1(x,t) =(-1)^{\left\lfloor\frac{x+t+\bluevar{4\varepsilon}}{4\varepsilon} \right\rfloor}\frac{\sqrt{2}m^{1/2}\varepsilon (1+m^2\varepsilon^2)^{1/2}\left(-12\tilde{\theta}(x/t)\right)^{1/6}}{((2 + m^2\varepsilon^2)(1-(x/t)^2(1+m^2\varepsilon^2)^2))^{1/4}} \cdot \\
    &\cdot \left(\frac{1}{t}\right)^{1/3} \mathrm{Ai }\left(-\left(-\frac{3}{2}\tilde{\theta} \left(\frac{x}{t}\right)t\right) ^{2/3}\right) +O_{m,\varepsilon}\left(\frac{1}{t}\right),  
    \end{align*}
    \normalsize
    where 
    $$\tilde{\theta} (v) := \frac{1}{2\varepsilon}\left(|v| \arccos \frac{|v|\sqrt{(1+m^2\varepsilon^2)^2-1}}{\sqrt{1-v^2}} - \arccos \frac{\sqrt{(1+m^2\varepsilon^2)^2-1}}{(1+m^2\varepsilon^2)\sqrt{1-v^2}}\right).$$
\end{theorem}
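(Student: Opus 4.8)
The plan is to deduce Theorem~\ref{Airy} from the field-free uniform Airy asymptotic of Zakorko (Theorem~\ref{Airy_Zak}) by the same \emph{mass renormalization} that underlies Theorems~\ref{wavelimit}--\ref{prob-rev-field}. I set the renormalized mass $m_0$ by
\begin{equation*}
1 + m_0^2\varepsilon^2 = (1+m^2\varepsilon^2)^2, \qquad\text{equivalently}\qquad m_0\varepsilon = \sqrt{(1+m^2\varepsilon^2)^2 - 1}.
\end{equation*}
The observation driving the proof is that $\sqrt{(1+m^2\varepsilon^2)^2-1}$ appearing throughout $\tilde{\theta}$ is exactly $m_0\varepsilon$, while $1+m^2\varepsilon^2 = \sqrt{1+m_0^2\varepsilon^2}$; substituting these, $\tilde{\theta}(v)$ turns into the phase function $\theta$ of the \emph{field-free} model of mass $m_0$ and step $\varepsilon$ (for $m_0=\varepsilon=1$ it reduces to the $\theta$ of Theorem~\ref{prob_Zak}). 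Hence the entire Airy profile on the right-hand side is the field-free profile at mass $m_0$, and it suffices to match the field wave function $a_1(x,t,m,\varepsilon,u_\varepsilon)$ to a single component of the field-free wave function at the renormalized mass.

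First I would make this matching \emph{exact} and finite-$t$, before taking any limit. By Remark~\ref{Ueq} I reduce to unit step with field mass $m\varepsilon$, so that Proposition~\ref{hyper-form} writes $a_1$ as an elementary prefactor times ${}_2F_1\left(-\lfloor(\eta-1)/2\rfloor,-\lfloor\xi/2\rfloor;1;z\right)$ with $z = 1-(1+m^2\varepsilon^2)^2 = -(m_0\varepsilon)^2$. The field-free wave function of mass $m_0\varepsilon$ admits, by the cited \cite[Remark 3]{article}, a hypergeometric expression with the same argument $-(m_0\varepsilon)^2$. Comparing the two under the parity constraints of the theorem ($x/2\varepsilon$ and $t/2\varepsilon+1$ even, which in the variables $\xi,\eta$ single out one fixed case of Proposition~\ref{hyper-form}), I expect an identity of the form
\begin{equation*}
a_1(x,t,m,\varepsilon,u_\varepsilon) = c(m,\varepsilon)\,(-1)^{\sigma}\, a_{\iota}(x',t',m_0,\varepsilon)
\end{equation*}
for an explicit factor $c(m,\varepsilon)$, a sign $(-1)^{\sigma}$, a component index $\iota\in\{1,2\}$, and shifted coordinates $(x',t')$ (the shifts $x\mapsto x-1$, $t\mapsto t+1$ and the floors in the theorem record this reindexing). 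Obtaining this clean \emph{pointwise} correspondence — rather than an equality only of probabilities or of limiting distributions, as in the earlier theorems — is the crux, since it is what lets one transfer a wave-function (not merely a probability) asymptotic.

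With the identity in hand I would invoke Theorem~\ref{Airy_Zak} for the field-free component $a_{\iota}(x',t',m_0,\varepsilon)$, then substitute $m_0\varepsilon=\sqrt{(1+m^2\varepsilon^2)^2-1}$ and $1+m_0^2\varepsilon^2=(1+m^2\varepsilon^2)^2$ and simplify. The field-free amplitude prefactor and the $(1/t)^{1/3}$ normalization carry over verbatim, and the algebraic substitution should turn them into
\begin{equation*}
\frac{\sqrt2\,m^{1/2}\varepsilon(1+m^2\varepsilon^2)^{1/2}\left(-12\tilde{\theta}(x/t)\right)^{1/6}}{\left((2+m^2\varepsilon^2)\left(1-(x/t)^2(1+m^2\varepsilon^2)^2\right)\right)^{1/4}},
\end{equation*}
while the argument of $\mathrm{Ai}$ becomes $-\left(-\tfrac32\tilde{\theta}(x/t)t\right)^{2/3}$ precisely because $\tilde{\theta}$ is the renormalized $\theta$. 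Since $c(m,\varepsilon)$ and all renormalization factors depend only on $m,\varepsilon$ and not on $t$, the field-free error $O(1/t)$ transfers to the claimed $O_{m,\varepsilon}(1/t)$.

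The main obstacle will be the exact matching of the second paragraph: pinning down the component index $\iota$, the factor $c(m,\varepsilon)$, and especially the phase $(-1)^{\lfloor(x+t+4\varepsilon)/4\varepsilon\rfloor}$ together with the coordinate shifts, all of which demand a careful case analysis of Proposition~\ref{hyper-form} under the stated parities and a matching reading of the field-free hypergeometric formula. A secondary point is that Theorem~\ref{Airy_Zak} must be available at \emph{general} mass $m_0$ and step $\varepsilon$ (the publicly stated Theorem~\ref{prob_Zak} is only the unit-mass probability version); I would quote the wave-function form from \cite{Zakorko} or, as the paper notes, from the expected \cite{Skopenkov-Ustinov-Kuyanov-Drmota-23}, and use the field-free scaling $a(x,t,m_0,\varepsilon)=a(x/\varepsilon,t/\varepsilon,m_0\varepsilon,1)$ to reduce to unit step.
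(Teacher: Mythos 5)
Your route is genuinely different from the paper's, and its core idea is not hopeless: an exact pointwise renormalization identity between the field model and the field-free model does exist. But two concrete problems keep the proposal from working as written.

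First, the coordinate matching is not a shift --- it is a \emph{halving} of coordinates (equivalently, a doubling of the lattice step). Comparing Proposition~\ref{hyper-form} with the field-free formula of \cite[Remark 3]{article}: the field model's hypergeometric parameters are $-\lfloor\xi/2\rfloor,\,-\lfloor(\eta-1)/2\rfloor$ with $\xi,\eta\approx(t\pm x)/2\varepsilon$, while the field-free parameters at a point $(x',t')$ are $1-\frac{x'+t'}{2\varepsilon},\,1-\frac{t'-x'}{2\varepsilon}$; equality of the Jacobi polynomials forces $x'\approx x/2$, $t'\approx t/2$. Concretely, after reducing to unit step by Remark~\ref{Ueq}, for $x\equiv_4 0$, $t\equiv_4 2$ and $1+m_0^2=(1+m^2)^2$ one can check
\begin{equation*}
a_1(x,t,m,1,u_1) \;=\; -\,\frac{m\sqrt{1+m^2}}{m_0}\;a_1\!\left(\frac{x}{2},\,\frac{t}{2}+1,\,m_0,\,1\right),
\end{equation*}
which is confirmed, e.g., by the values in Table~\ref{table-amu}. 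Your shifts-only guess ($x\mapsto x-1$, $t\mapsto t+1$) contradicts your own renormalization arithmetic: since $m_0\sim\sqrt{2}\,m$, unshrunk coordinates would produce Bessel/Airy arguments $\sim\sqrt{2}\,m\sqrt{t^2-x^2}$ instead of the $m\sqrt{(t^2-x^2)/2}$ of Theorem~\ref{fproblim}, and the prefactor $\frac{1}{2\varepsilon}$ in $\tilde\theta$ (so that $\tilde\theta\cdot t$ is a phase times $t/2\varepsilon$, not $t/\varepsilon$) shows the effective lattice time is halved. You did flag the matching as the crux, so this is fixable, but the set-up as stated would fail.

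Second, and more seriously, the input you plan to cite does not exist in the form you need. Theorem~\ref{Airy_Zak} is \emph{not} ``the field-free uniform Airy asymptotic'': it is an abstract statement about oscillatory integrals $\int g(u)e^{itf(u,\alpha)}du$ and cannot be ``invoked for the field-free component'' without first writing that component as such an integral and verifying hypotheses 1--4. The only field-free asymptotic actually stated, Theorem~\ref{prob_Zak}, is a unit-mass, probability-level result, whereas your argument needs a wave-function-level statement at the renormalized mass parameter $m_0\varepsilon=m\varepsilon\sqrt{2+m^2\varepsilon^2}$ --- which exceeds $1$ for $m\varepsilon$ close to $1$, so even a hypothetical general statement restricted to mass parameter at most $1$ would not cover your range. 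To manufacture the missing input you would have to apply Theorem~\ref{Airy_Zak} to the field-free Fourier integral and check its conditions; but that is exactly the work the paper performs, only directly for the field model: it takes the exact Fourier representation of Proposition~\ref{IntForm} (case $t/\varepsilon\equiv_4 2$), collapses the integrand to $e^{i(ux-\omega_u t)}/\cos 2\omega_u\varepsilon$ over a half-period, and applies Theorem~\ref{Airy_Zak} with $g(u)=1/\cos 2\omega_u\varepsilon$ and $\alpha=\frac{1}{1+m^2\varepsilon^2}-\frac{x}{t}$, the hypotheses being verified in Lemma~\ref{con_check}. So your detour through the renormalization identity, once made rigorous, requires the same analytic work plus the identity itself; its genuine value would be conceptual --- exhibiting the exact lattice-level renormalization behind Theorems~\ref{wavelimit}--\ref{prob-rev-field} --- rather than a shorter proof of Theorem~\ref{Airy}.
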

\normalsize

Our method of the proof of Theorem \ref{prob-rev-field} differs from the \bluer{approach} suggested in \cite[\S 12.7]{article} and \cite[\S 3]{IB-20}. In order to prove Theorems~\ref{large-time-lim}, \ref{prob-rev-field}, and \ref{Airy} we apply the method suggested in~\cite[~\S 2]{Grimmet-Janson-Scudo-04}, \cite{Anikin2019, Zakorko}, and the following new result, which will be proved by induction in \ref{IntFormProof}.

\begin{proposition}({Cf.~\cite[Proposition 12]{article}})
\label{IntForm}
Set $$\omega_p:= \frac{1}{2\varepsilon}\arcsin{\frac{\sin{2p\varepsilon}}{1+m^2\varepsilon^2}}.$$ Then
for each $m > 0$  and $(x,t)\in \varepsilon\mathbb{Z}^2$, where $t>0$, we have
\begin{align*}
&a_1(x,t,m,\varepsilon,u_{\varepsilon}) =\\
&=\footnotesize{\begin{cases}
    (-1)^{\frac{x-\varepsilon}{2\varepsilon} + \left\lfloor\frac{x+t}{4\varepsilon}\right\rfloor}\frac{m\varepsilon^2}{\pi(1+m^2\varepsilon^2)}\int\limits_{-\pi/ \varepsilon}^{\pi/ \varepsilon}e^{ipx}\sin{\omega_p(t-\varepsilon)}\cdot\frac{\sin{p\varepsilon}}{\cos{2\omega_p\varepsilon}} dp, &\text{if } \frac{t}{\varepsilon} \equiv_4 1,\\
    (-1)^{\frac{x}{2\varepsilon} + \left\lfloor\frac{x+t}{4\varepsilon}\right\rfloor}\frac{m\varepsilon^2}{2\pi\sqrt{1+m^2\varepsilon^2}}\int\limits_{-\pi/\varepsilon}^{\pi/\varepsilon}e^{ipx}\cdot\frac{i\sin{\omega_p(t-2\varepsilon) - \cos{\omega_pt} }}{\cos{2\omega_p\varepsilon}} dp, &\text{if } \frac{t}{\varepsilon} \equiv_4 2,\\
    (-1)^{\frac{x-\varepsilon}{2\varepsilon} + \left\lfloor\frac{x+t}{4\varepsilon}\right\rfloor}\frac{im\varepsilon^2}{\pi(1+m^2\varepsilon^2)}\int\limits_{-\pi/\varepsilon}^{\pi/\varepsilon}e^{ipx}\cos{\omega_p(t-\varepsilon)}\cdot\frac{\sin{p\varepsilon}}{\cos{2\omega_p\varepsilon}} dp, &\text{if } \frac{t}{\varepsilon} \equiv_4 3,\\
    (-1)^{\frac{x+2\varepsilon}{2\varepsilon}+\left\lfloor\frac{x+t}{4\varepsilon}\right\rfloor}\frac{m\varepsilon^2}{2\pi\sqrt{1+m^2\varepsilon^2}}\int\limits_{-\pi/\varepsilon}^{\pi/\varepsilon}e^{ipx}\frac{\cos{\omega}_p(t-2\varepsilon) - i\sin{\omega}_pt}{\cos{2\omega_p\varepsilon}} dp, &\text{if } \frac{t}{\varepsilon} \equiv_4 0.
\end{cases}}\\
&a_2(x,t,m,\varepsilon,u_{\varepsilon}) = \\
&=\footnotesize{\begin{cases}
    (-1)^{\frac{x-\varepsilon}{2\varepsilon}+\left\lfloor\frac{x+t}{4\varepsilon}\right\rfloor}\frac{\varepsilon}{2\pi}\int\limits_{-\pi/ \varepsilon}^{\pi/ \varepsilon}e^{ip(x-\varepsilon)}\left(i\sin{\omega_p(t-\varepsilon)}\cdot\frac{m^2\varepsilon^2 + \cos{2p\varepsilon}}{(1+m^2\varepsilon^2)\cos{2\omega_p\varepsilon}} - \cos{\omega_p(t-\varepsilon)}\right)dp,\\ \hspace{8cm} \text{if } \frac{t}{\varepsilon} \equiv_4 1,\\
    (-1)^{\frac{x-2\varepsilon}{2\varepsilon}+\left\lfloor\frac{x+t}{4\varepsilon}\right\rfloor}\frac{\varepsilon}{2\pi\sqrt{1+m^2\varepsilon^2}}\int\limits_{-\pi/\varepsilon}^{\pi/\varepsilon}e^{ip(x-2\varepsilon)}\cdot\frac{ie^{2ip\varepsilon}\sin{\omega_p(t-2\varepsilon)}  - \cos{\omega_pt} }{\cos{2\omega_p\varepsilon}} dp, \\ \hspace{8cm} \text{if } \frac{t}{\varepsilon} \equiv_4 2,\\
    (-1)^{\frac{x+\varepsilon}{2\varepsilon}+\left\lfloor\frac{x+t}{4\varepsilon}\right\rfloor}\frac{\varepsilon}{2\pi}\int\limits_{-\pi/\varepsilon}^{\pi/\varepsilon}e^{ip(x-\varepsilon)}\left(\cos{\omega_p(t-\varepsilon)}\cdot\frac{m^2\varepsilon^2 +\cos{2p\varepsilon}}{(1+m^2\varepsilon^2)\cos{2\omega_p\varepsilon}} - i\sin{\omega_p(t-\varepsilon)}\right) dp, \\ \hspace{8cm} \text{if } \frac{t}{\varepsilon} \equiv_4 3,\\
    (-1)^{\frac{x}{2\varepsilon}+\left\lfloor\frac{x+t}{4\varepsilon}\right\rfloor}\frac{\varepsilon}{2\pi\sqrt{1+m^2\varepsilon^2}}\int\limits_{-\pi/\varepsilon}^{\pi/\varepsilon}e^{ip(x-2\varepsilon)}\cdot
    \frac{e^{2ip\varepsilon}\cos{\omega_p(t-2\varepsilon)} - i\sin{\omega_pt}}{\cos{2\omega_p\varepsilon}}
     dp,\\ \hspace{8cm} \text{if } \frac{t}{\varepsilon} \equiv_4 0.
\end{cases}}
\end{align*}
\end{proposition}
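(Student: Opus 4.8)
The plan is to prove Proposition~\ref{IntForm} by induction on $t/\varepsilon$, treating the four residue classes $t/\varepsilon\equiv_4 0,1,2,3$ simultaneously. First I would extract from Definition~\ref{def1} the one-step recurrence satisfied by the wave function: a checker path to $(x,t)$ arrives from $(x-\varepsilon,t-\varepsilon)$ or $(x+\varepsilon,t-\varepsilon)$, and splitting the combinatorial sum according to the direction of the penultimate step (straight continuation contributing a factor $1$, a turn contributing $-im\varepsilon$), normalizing by $(1+m^2\varepsilon^2)^{-1/2}$ per step, and inserting the field factor $u_\varepsilon$, turns $a=a_1+ia_2$ into a solution of a lattice Dirac equation with position-dependent signs. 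Since $u_\varepsilon$ depends only on $(t-x)\bmod 4\varepsilon$, its sign pattern has period $4\varepsilon$ in both $x$ and $t$, which is exactly what forces the split into the four cases. I would record once and for all the two identities defining $\omega_p$, namely $\sin 2\omega_p\varepsilon=\sin(2p\varepsilon)/(1+m^2\varepsilon^2)$ and hence $\cos 2\omega_p\varepsilon=\sqrt{(1+m^2\varepsilon^2)^2-\sin^2 2p\varepsilon}/(1+m^2\varepsilon^2)$; the second is the Jacobian $\cos 2\omega_p\varepsilon$ appearing in every integrand's denominator.

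Conceptually, the stated integrals are just the inverse Fourier transform in $x$ of the solution of this recurrence: passing to $\hat a_j(p,t)=\sum_x a_j(x,t)e^{-ipx}$ replaces spatial shifts by $e^{\pm ip\varepsilon}$, and because the coefficients are $4\varepsilon$-periodic in $t$ the natural object is the two-step transfer matrix, whose eigenvalues are $e^{\pm 2i\omega_p\varepsilon}$; this is where the factors $\sin\omega_p(t-\ldots)$ and $\cos\omega_p(t-\ldots)$ originate. I would use this computation only to locate $\omega_p$ and guess the closed forms, and then verify them rigorously by induction, exactly as for the field-free case in \cite[Proposition 12]{article}.

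For the base of the induction I would evaluate the four integrals at $t=\varepsilon,2\varepsilon,3\varepsilon,4\varepsilon$ and match them against the explicit values in Table~\ref{table-amu}, thereby seeding all four residue classes. For the inductive step, assuming the formula at time $t$, I would substitute the integral representations of $a_1(x\pm\varepsilon,t)$ and $a_2(x\pm\varepsilon,t)$ into the Dirac recurrence for $a_j(x,t+\varepsilon)$. Under a single integral this produces combinations of $e^{ipx}e^{\pm ip\varepsilon}$ times $\sin\omega_p(t-\ldots)$ and $\cos\omega_p(t-\ldots)$; applying the addition formulas for $\omega_p(t+\varepsilon)$ and simplifying with the two dispersion identities should collapse each combination to the integrand of the next case (residue $t/\varepsilon+1$). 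Running once around the cycle $1\to 2\to 3\to 0\to 1$ verifies all four transitions, after which induction on the number of completed cycles finishes the argument.

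The hard part will be the bookkeeping rather than the analysis. The prefactors carry parity exponents together with the floor $\lfloor(x+t)/4\varepsilon\rfloor$, and these jump discontinuously when $x\to x\pm\varepsilon$ or $t\to t+\varepsilon$; I must check that these jumps combine with the explicit sign flips of $u_\varepsilon$ to reproduce exactly the $(-1)^{\cdots}$ prefactor of the target case, with no leftover sign. A further delicate point is that the $t/\varepsilon\equiv_4 1,3$ integrands carry a factor $\sin p\varepsilon$ whereas the $\equiv_4 0,2$ integrands carry $m^2\varepsilon^2+\cos 2p\varepsilon$ and bare exponentials, so passing between the two forms during a step forces one to rewrite $\sin p\varepsilon$ and $\cos 2p\varepsilon$ through $\omega_p$ using the dispersion relation. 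Ensuring that all four cases close up consistently around the mod-$4$ cycle, so that the induction is neither over- nor under-determined, is the real content of the proof.
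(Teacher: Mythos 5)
Your plan is, in substance, the paper's own proof (given in Appendix~B): induction on $t/\varepsilon$ with the lattice Dirac equation (Lemma~\ref{dirak}) as the recurrence, the dispersion identities $\sin 2\omega_p\varepsilon=\sin(2p\varepsilon)/(1+m^2\varepsilon^2)$ and $\cos 2\omega_p\varepsilon=\sqrt{(1+m^2\varepsilon^2)^2-\sin^2 2p\varepsilon}\,/(1+m^2\varepsilon^2)$ driving the trigonometric collapse of each step, and the floor/parity sign bookkeeping as the main labor. (A cosmetic difference: the paper seeds only $t=\varepsilon$ and lets the step run around the cycle $1\to2\to3\to0\to1$, so your four base cases at $t=\varepsilon,2\varepsilon,3\varepsilon,4\varepsilon$ are redundant.)

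There is, however, one genuine omission. The proposition is asserted for \emph{every} $(x,t)\in\varepsilon\mathbb{Z}^2$ with $t>0$, in particular for points with $x/\varepsilon+t/\varepsilon$ odd, which no checker path reaches; there $a_1=a_2=0$, and the content of the claim is that the stated integrals \emph{vanish}. Your scheme cannot reach these points: matching against Table~\ref{table-amu} checks only the nonzero entries, and the inductive step via Lemma~\ref{dirak} is confined to points with $x/\varepsilon\equiv t/\varepsilon \pmod 2$ --- for opposite parity the edges joining $(x\pm\varepsilon,t-\varepsilon)$ to $(x,t)$ are not auxiliary edges, so the field factor $u_\varepsilon$ in the recurrence is undefined, the exponents such as $\frac{x-\varepsilon}{2\varepsilon}$ in the prefactors are half-integers (the paper has to adopt the convention $(-1)^{n/2}:=i^n$ precisely for this reason), and the floor identities for $\lfloor\frac{x+t}{4\varepsilon}\rfloor$ used in the sign bookkeeping fail. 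The paper disposes of this case separately (its Case~1) by the substitution $p\mapsto p+\pi/\varepsilon$: since $\omega_{p+\pi/\varepsilon}=\omega_p$, while $e^{ipx}$ acquires the factor $(-1)^{x/\varepsilon}$ and $\sin p\varepsilon$ changes sign, each integrand is multiplied by $(-1)^{x/\varepsilon+1}$ for odd $t/\varepsilon$ and by $(-1)^{x/\varepsilon}$ for even $t/\varepsilon$, i.e.\ is odd under a shift by half the period, so each integral is zero. Add this short symmetry argument (and run your induction only over same-parity points); with that patch your proof is complete and coincides with the paper's.
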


Here for definiteness we set $(-1)^{n/2}:=i^n$ for odd $n$ (although this expression occurs only in the combination $(-1)^{n/2}\cdot 0 =0$).

These integrals represent a wave, emitted by a point source as a superposition of waves with wavelength $2\pi/p$ and frequency $\omega_p$.

This proposition can be derived by solving the lattice Dirac equation (Lemma \ref{dirak}) by Fourier method, but we prefer to give a direct cheking by induction in \ref{IntFormProof}.

\begin{remark}
Our expression for $\omega_p$ is equivalent to $\omega_{\pm}(k)$ from \cite[Equation (7)]{Cedzich-et-al-2013} for $m = 4$, $a= 1/\sqrt{1+m^2\varepsilon^2}$ (where is in the left sides we use the notation from \cite{Cedzich-et-al-2013} different from ours).
\end{remark}


In order to prove Propositions~\ref{formula} and \ref{IntForm} we use the following known result. It is proved analogously to~\cite[Proposition 5]{article}.  
\begin{lemma}[Dirac equation in an electromagnetic field on a lattice] (Cf.~\cite[Propositions 5 and 14]{article})
\label{dirak}
For each \bluer{$\xi, \eta \in \varepsilon\mathbb{Z}$ such that $\xi, \eta >0$} we have
\footnotesize{\begin{align*}
&a_1(\xi - \eta + \varepsilon,\xi + \eta + \varepsilon,m,\varepsilon,u_{\varepsilon}) =\frac{1}{\sqrt{1+m^2\varepsilon^2}} u_{\varepsilon}\left(\xi - \eta +\frac{3\varepsilon}{2}, \xi + \eta +\frac{\varepsilon}{2}\right)\cdot\\
&\cdot(a_1(\xi - \eta +2\varepsilon, \xi +\eta , m,\varepsilon,u_{\varepsilon}) + m\varepsilon a_2(\xi -\eta +2\varepsilon, \xi + \eta,m,\varepsilon,u_{\varepsilon})),\\
&a_2(\xi - \eta + \varepsilon,\xi + \eta + \varepsilon,m,\varepsilon,u_{\varepsilon}) =\frac{1}{\sqrt{1+m^2\varepsilon^2}} u_{\varepsilon}\left(\xi - \eta +\frac{\varepsilon}{2}, \xi + \eta +\frac{\varepsilon}{2}\right)\cdot\\
&\cdot(-m\varepsilon a_1(\xi - \eta ,\xi + \eta,m,\varepsilon,u_{\varepsilon}) + a_2(\xi - \eta,\xi + \eta,m,\varepsilon,u_{\varepsilon})).
\end{align*}}
\end{lemma}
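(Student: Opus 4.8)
The plan is to derive both identities at once directly from the path-sum definition of $a$, by classifying the checker paths to $(x,t):=(\xi-\eta+\varepsilon,\xi+\eta+\varepsilon)$ according to their last move, exactly as in the field-free case \cite[Proposition 5]{article}. I would introduce the two partial sums $a_{\rightarrow}(x,t)$ and $a_{\leftarrow}(x,t)$, defined like $a(x,t,m,\varepsilon,u_\varepsilon)$ but restricting the summation to paths whose last move is up-right, resp. up-left, so that $a=a_{\rightarrow}+a_{\leftarrow}$. Since the first move $s_0s_1$ is up-right and every turn flips the current direction, a path ends with an up-right (resp. up-left) move if and only if $\mathrm{turns}(s)$ is even (resp. odd). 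As the outer factor is $i$ and $(-im\varepsilon)^{\mathrm{turns}(s)}$ is real for even and purely imaginary for odd exponent (while $u(\cdots)=\pm1$ and the power of $(1+m^2\varepsilon^2)$ are real), this shows $a_{\rightarrow}$ is purely imaginary and $a_{\leftarrow}$ is purely real; hence $a_{\leftarrow}=a_1$ and $a_{\rightarrow}=ia_2$.

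Next I would peel off the last edge. Every path counted in $a_{\leftarrow}(x,t)$ reaches $(x,t)$ by an up-left move from $(x+\varepsilon,t-\varepsilon)=(\xi-\eta+2\varepsilon,\xi+\eta)$; this edge is shared by all of them, so its field value $u_\varepsilon$ and the factor $(1+m^2\varepsilon^2)^{-1/2}$ produced by comparing the normalizing powers $(1+m^2\varepsilon^2)^{(1-t/\varepsilon)/2}$ at times $t$ and $t-\varepsilon$ both factor out. What remains is a sum over paths to $(x+\varepsilon,t-\varepsilon)$: those ending up-left contribute with no new turn, giving $a_{\leftarrow}(x+\varepsilon,t-\varepsilon)$, while those ending up-right create one extra turn at $(x+\varepsilon,t-\varepsilon)$, contributing the factor $(-im\varepsilon)$ times $a_{\rightarrow}(x+\varepsilon,t-\varepsilon)$. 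The same bookkeeping for $a_{\rightarrow}(x,t)$ uses the up-right predecessor $(x-\varepsilon,t-\varepsilon)=(\xi-\eta,\xi+\eta)$. This yields
\begin{align*}
a_{\leftarrow}(x,t)&=\frac{u_\leftarrow}{\sqrt{1+m^2\varepsilon^2}}\bigl(a_{\leftarrow}(x+\varepsilon,t-\varepsilon)-im\varepsilon\,a_{\rightarrow}(x+\varepsilon,t-\varepsilon)\bigr),\\
a_{\rightarrow}(x,t)&=\frac{u_\rightarrow}{\sqrt{1+m^2\varepsilon^2}}\bigl(a_{\rightarrow}(x-\varepsilon,t-\varepsilon)-im\varepsilon\,a_{\leftarrow}(x-\varepsilon,t-\varepsilon)\bigr),
\end{align*}
where $u_\leftarrow,u_\rightarrow$ are the values of $u_\varepsilon$ on the two possible last edges.

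Finally I would substitute $a_{\leftarrow}=a_1$, $a_{\rightarrow}=ia_2$ and divide the second identity by $i$; the cancellation $-im\varepsilon\cdot i=m\varepsilon$ turns the first line into $a_1=\frac{u_\leftarrow}{\sqrt{1+m^2\varepsilon^2}}(a_1+m\varepsilon a_2)$ and the second into $a_2=\frac{u_\rightarrow}{\sqrt{1+m^2\varepsilon^2}}(-m\varepsilon a_1+a_2)$, which are the asserted equalities once one checks that the midpoints of the two last edges are $(\xi-\eta+\frac{3\varepsilon}{2},\xi+\eta+\frac{\varepsilon}{2})$ and $(\xi-\eta+\frac{\varepsilon}{2},\xi+\eta+\frac{\varepsilon}{2})$, so that $u_\leftarrow,u_\rightarrow$ coincide with the fields written in the statement. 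The hypotheses $\xi,\eta>0$ guarantee that the predecessor points still lie strictly above the first lattice step, so the decomposition does not interfere with the fixed initial move $s_0s_1=(\varepsilon,\varepsilon)$ and no degenerate sums arise. I expect the only delicate point to be this bookkeeping of the imaginary unit: one must track simultaneously the global factor $i$, the per-turn factor $-im\varepsilon$, and the parity of $\mathrm{turns}(s)$ to be certain that the real recurrences for $a_1$ and $a_2$ come out with exactly the signs shown, rather than with a spurious factor of $\pm i$.
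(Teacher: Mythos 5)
Your proof is correct: the sign bookkeeping ($a_\leftarrow=a_1$, $a_\rightarrow=ia_2$, the factor $-im\varepsilon\cdot i=m\varepsilon$), the predecessor points, the edge midpoints, and the normalization ratio $(1+m^2\varepsilon^2)^{-1/2}$ all check out against Definition~\ref{def1}. This is essentially the paper's own argument, since the paper proves Lemma~\ref{dirak} by exactly this last-move decomposition of the path sum, carried out ``analogously to \cite[Proposition 5]{article}'' with the only new ingredient being the common field factor $u$ on the peeled-off last edge, which you handle correctly.
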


\normalsize
\section{First proof of Proposition~\ref{formula}}
\label{second-proof}

In this section we use the following notation:
\begin{align*}
    \hat{A}_1(p,q) := \sum_{\xi,\eta \ge 0}(1+m^2)^{\frac{\xi+\eta}{2}}a_1(\xi-\eta + 1, \xi + \eta + 1,m,1,u_1)p^{\xi}q^{\eta},\\
    \hat{A}_2(p,q) := \sum_{\xi,\eta \ge 0} (1+m^2)^{\frac{\xi + \eta}{2}}a_2(\xi-\eta + 1, \xi +\eta + 1,m,1,u_1)p^{\xi}q^{\eta} .
\end{align*}
\normalsize
\begin{lemma}
\label{generSum}
We have the following equalities of formal power series:

$$ \hat{A}_1(p,q) = -\frac{m q \left(1 - p + q - pq\left(m^2+1\right)\right)}{1-\left(q^2 + p^2 -\left(m^2+1\right)^2p^2q^2\right) },$$

$$
 \hat{A}_2(p,q) =  \frac{(1-q) \left(1 - p + q - pq\left(m^2+1\right)\right)}{1-\left(q^2 + p^2 -\left(m^2+1\right)^2p^2q^2\right) }.
$$
\end{lemma}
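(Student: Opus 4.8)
The plan is to extract from the lattice Dirac equation (Lemma~\ref{dirak}) a closed system of functional equations for the two generating functions and then solve it. Throughout I set $\varepsilon=1$, abbreviate $c:=1+m^2$, and write $b_i(\xi,\eta):=(1+m^2)^{(\xi+\eta)/2}a_i(\xi-\eta+1,\xi+\eta+1,m,1,u_1)$, so that $\hat{A}_i(p,q)=\sum_{\xi,\eta\ge0}b_i(\xi,\eta)\,p^\xi q^\eta$. The weight $(1+m^2)^{(\xi+\eta)/2}$ is chosen exactly to absorb the normalizing prefactor of Definition~\ref{def1}, so that Lemma~\ref{dirak} becomes a recursion with constant coefficients.

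First I would evaluate the field $u_1$ on the two auxiliary edges occurring in Lemma~\ref{dirak}. From the formula for $u_\varepsilon$ in Definition~\ref{def1} one checks that the auxiliary edge with midpoint $(\xi-\eta+\tfrac32,\xi+\eta+\tfrac12)$, which enters the equation for $a_1$, always carries $+1$ (its $(t-x)/4$ is a half-odd integer), whereas the edge with midpoint $(\xi-\eta+\tfrac12,\xi+\eta+\tfrac12)$, entering the equation for $a_2$, carries $(-1)^{\eta+1}$. Rewriting Lemma~\ref{dirak} in the variables $(\xi,\eta)$ and multiplying by $(1+m^2)^{(\xi+\eta)/2}$, the prefactors cancel and I obtain
\begin{align*}
b_1(\xi,\eta)&=b_1(\xi,\eta-1)+m\,b_2(\xi,\eta-1)\qquad(\eta\ge1),\\
b_2(\xi,\eta)&=(-1)^{\eta+1}\bigl(b_2(\xi-1,\eta)-m\,b_1(\xi-1,\eta)\bigr)\qquad(\xi\ge1).
\end{align*}
The initial data come from the unique checker path reaching each of the two boundary rays of the light cone (cf.~Table~\ref{table-amu}): along $x=t$ the single straight path gives a purely imaginary amplitude, so $b_1(\xi,0)=0$; along the opposite ray the single one-turn path gives a real amplitude, so $b_2(0,\eta)=0$ for $\eta\ge1$, while the first move alone gives $b_2(0,0)=-1$.

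Next I would multiply each recursion by $p^\xi q^\eta$ and sum. The $b_1$-recursion together with $b_1(\xi,0)=0$ yields the clean relation $(1-q)\hat{A}_1(p,q)=mq\,\hat{A}_2(p,q)$. The $b_2$-recursion is the delicate point: the factor $(-1)^{\eta+1}$, reflecting the genuine space-time dependence of the field, rewrites $(-1)^{\eta+1}q^\eta=-(-q)^\eta$ under summation, so the sum produces the arguments $\hat{A}_i(p,-q)$ rather than $\hat{A}_i(p,q)$. Explicitly it gives $\hat{A}_2(p,q)=-1+mp\,\hat{A}_1(p,-q)-p\,\hat{A}_2(p,-q)$. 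Eliminating $\hat{A}_1(p,-q)=\tfrac{-mq}{1+q}\hat{A}_2(p,-q)$ by means of the clean relation, this collapses to
$$\hat{A}_2(p,q)=-1-p\,\frac{1+cq}{1+q}\,\hat{A}_2(p,-q).$$

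Finally I would close the system: writing this last identity again with $q$ replaced by $-q$ gives a second equation, and the pair is a $2\times2$ linear system in the unknowns $\hat{A}_2(p,q)$ and $\hat{A}_2(p,-q)$, with determinant $1-p^2\tfrac{1-c^2q^2}{1-q^2}$. Solving for $\hat{A}_2(p,q)$ and clearing the factor $1-q^2$ produces the denominator $1-q^2-p^2+c^2p^2q^2=1-\bigl(q^2+p^2-(1+m^2)^2p^2q^2\bigr)$ and a numerator divisible by $(1-q)$, yielding the stated closed form for $\hat{A}_2$; the expression for $\hat{A}_1$ then follows immediately from $(1-q)\hat{A}_1=mq\,\hat{A}_2$. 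I expect the main obstacle to be precisely this $q\mapsto-q$ twist: unlike the field-free case, the recursion for $\hat{A}_2$ does not close on $\hat{A}_2(p,q)$ alone, so one must carry the pair $(\hat{A}_2(p,q),\hat{A}_2(p,-q))$ and track the parity-dependent signs (and the initial value $b_2(0,0)$) with care.
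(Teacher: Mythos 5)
Your route is essentially the paper's own. The paper splits each $\hat{A}_i$ into its even and odd parts in $\eta$ (four series $B^g_i$) and solves the resulting $4\times 4$ linear system coming from Lemma~\ref{dirak}; you instead eliminate $\hat{A}_1$ first and close a $2\times 2$ system in the pair $(\hat{A}_2(p,q),\hat{A}_2(p,-q))$. Since $B^0_i$ and $B^1_i$ are exactly $\tfrac12\bigl(\hat{A}_i(p,q)\pm\hat{A}_i(p,-q)\bigr)$, your $q\mapsto -q$ twist is the same bookkeeping in different coordinates; your field evaluations, your two recursions, and your initial data ($b_1(\xi,0)=0$, $b_2(0,\eta)=0$ for $\eta\ge 1$, $b_2(0,0)=-1$) coincide with the rows and the right-hand side of the paper's system.

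There is, however, one concrete problem: your derivation does not yield the stated formula for $\hat{A}_2$ --- it yields its negative, and your own intermediate relation already shows that the stated pair of formulas cannot both hold. The displayed formulas satisfy $(1-q)\hat{A}_1=-mq\,\hat{A}_2$, contradicting your (correct) clean relation $(1-q)\hat{A}_1=mq\,\hat{A}_2$. Carrying your elimination to the end, with $c:=1+m^2$, $N:=1-p+q-cpq$ and $D:=1-q^2-p^2+c^2p^2q^2$, one gets
\begin{align*}
\Bigl(1-p^2\tfrac{1-c^2q^2}{1-q^2}\Bigr)\hat{A}_2(p,q)&=-1+p\,\tfrac{1+cq}{1+q},\\
\hat{A}_2(p,q)&=\frac{(1-q)\bigl(-(1+q)+p(1+cq)\bigr)}{D}=-\frac{(1-q)N}{D},
\end{align*}
and then the clean relation gives $\hat{A}_1=-mqN/D$, which does match the lemma. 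The sign you are forced to get for $\hat{A}_2$ is in fact the correct one: the constant coefficient of $\hat{A}_2$ is $a_2(1,1,m,1,u_1)=\mathrm{Im}\,(-i)=-1$ (the single edge from $(0,0)$ to $(1,1)$ carries field $-1$; see Table~\ref{table-amu}), whereas the printed formula has constant term $+1$; solving the paper's own $4\times4$ system also produces $-(1-q)N/D$. So the lemma as printed contains a sign error in $\hat{A}_2$ (harmless downstream, since the $a_2$-part of Proposition~\ref{formula} is rederived from the $a_1$-part via Lemma~\ref{dirak}); your method is sound and proves the corrected statement. But as written, your proposal claims to reach ``the stated closed form'', which is precisely what your equations rule out: you should carry the sign explicitly and flag the discrepancy rather than assert agreement.
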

\begin{proof}\textbf{({\bluer{Cf. \cite[Proposition 18]{article}}})}
For $g=0$ or $1$ and $i=1$ or $2$ denote
$$B^{g}_i(p,q):= \sum\limits_{\xi, \eta \ge 0: \eta \equiv_{2} g}(1+m^2)^{\frac{\xi + \eta}{2}}a_i(\xi - \eta + 1, \xi + \eta +1 ,m,u_1)p^{\xi}q^{\eta}, $$
Then, by Dirac's equation (Lemma~\ref{dirak}) we have
$$
\begin{pmatrix}
-1& 0& q& mq\\
pm&-1-p&0&0\\
q& mq&-1&0\\
0&0&-pm& -1+p\\
\end{pmatrix}
\begin{pmatrix}
B^0_1(p,q)\\
B^0_2(p,q)\\
B^1_1(p,q)\\
B^1_2(p,q)\\
\end{pmatrix}
=-
\begin{pmatrix}
B^0_1(p,0)\\
B^0_2(0,q)\\
B^1_1(p,0)\\
B^1_2(0,q)\\
\end{pmatrix}=
\begin{pmatrix}
0\\
1\\
0\\
0\\
\end{pmatrix}.
$$

Solving this system and expressing $ \hat{A}_i(p,q)$ through $B_i^{j}$, where $j=0,1$, we obtain the lemma.
\end{proof}

\begin{lemma}For each integer $\xi,\eta \ge 0$ we have
\label{firstclosedf}
\begin{align*}
&(1+m^2)^{\frac{\xi+\eta}{2}}a_1(\xi - \eta +1, \xi + \eta +1,m,1,u_1) =\\
&=m(1 + m^2)^{\xi + \eta-2 + \delta_2((1+\xi)\eta)}(-1)^{\lfloor\frac{3\xi}{2}\rfloor + \lfloor\frac{\eta+1}{2}\rfloor}\cdot \\
&\cdot \sum\limits_{l=\max\left(\lfloor\frac{\xi}{2}\rfloor,\lfloor\frac{\eta-1}{2}\rfloor\right)}^{\lfloor\frac{\xi}{2}\rfloor + \lfloor\frac{\eta-1}{2}\rfloor}{l\choose \lfloor{\frac{\xi}{2}}\rfloor}{\lfloor{\frac{\xi}{2}}\rfloor\choose \lfloor\frac{\xi}{2}\rfloor + \lfloor\frac{\eta-1}{2}\rfloor - l}(1 + m^2)^{-2l}(-1)^{l}.
\end{align*}
\end{lemma}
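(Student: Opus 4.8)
The plan is to read off the coefficient $[p^{\xi}q^{\eta}]\hat A_1(p,q)$ from the rational closed form supplied by Lemma~\ref{generSum} and to match it term by term against the asserted expression; note that by definition of $\hat A_1$ this coefficient is exactly the left-hand side $(1+m^2)^{(\xi+\eta)/2}a_1(\xi-\eta+1,\xi+\eta+1,m,1,u_1)$. First I would expand the denominator as a geometric series,
$$\frac{1}{1-\left(p^2+q^2-(1+m^2)^2p^2q^2\right)}=\sum_{n\ge 0}\left(p^2+q^2-(1+m^2)^2p^2q^2\right)^n,$$
and apply the multinomial theorem to the $n$-th power. Collecting the monomial $p^{2a}q^{2b}$ forces the multinomial index $(i,j,k)$ to satisfy $i+k=a$ and $j+k=b$, so that
$$\left[p^{2a}q^{2b}\right]\frac{1}{1-\left(p^2+q^2-(1+m^2)^2p^2q^2\right)}=\sum_{k}\frac{(a+b-k)!}{(a-k)!\,(b-k)!\,k!}\left(-(1+m^2)^2\right)^{k}.$$
Reindexing by $l:=a+b-k$ turns the multinomial coefficient into the product $\binom{l}{a}\binom{a}{a+b-l}$, exposes the factor $(-1)^{l}(1+m^2)^{-2l}$, and pulls out a global $(-1)^{a+b}(1+m^2)^{2(a+b)}$; the range $0\le k\le\min(a,b)$ becomes $\max(a,b)\le l\le a+b$. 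Setting $a=\lfloor\xi/2\rfloor$ and $b=\lfloor(\eta-1)/2\rfloor$, this already reproduces the inner sum of the lemma.

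Next I would incorporate the numerator $-mq\bigl(1-p+q-pq(1+m^2)\bigr)$. Because the reciprocal of the denominator carries only even powers of $p$ and $q$, exactly one of its four terms $-mq,\ +mpq,\ -mq^2,\ +m(1+m^2)pq^2$ can contribute to a prescribed monomial $p^{\xi}q^{\eta}$, and which one is selected solely by the parities of $\xi$ and $\eta$. The key observation is that in all four parity cases the surviving term draws on $[p^{2a}q^{2b}]$ of the reciprocal with the \emph{same} values $a=\lfloor\xi/2\rfloor$, $b=\lfloor(\eta-1)/2\rfloor$, so the inner sum is parity-independent; only the scalar prefactor (the sign, the single power of $m$, and the powers of $1+m^2$) varies between cases.

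The main work, and the only genuine obstacle, is the bookkeeping that unifies these four prefactors into the single expression $m(1+m^2)^{\xi+\eta-2+\delta_2((1+\xi)\eta)}(-1)^{\lfloor 3\xi/2\rfloor+\lfloor(\eta+1)/2\rfloor}$. Concretely, I would combine the global $(-1)^{a+b}(1+m^2)^{2(a+b)}$ with the sign and the occasional extra factor $(1+m^2)$ coming from the numerator, substitute $2(a+b)=\xi+\eta-1,\ \xi+\eta-2,$ or $\xi+\eta-3$ according to the case, and verify in each of the four parity classes that the exponent of $1+m^2$ collapses to $\xi+\eta-2+\delta_2((1+\xi)\eta)$ while the sign collapses to $(-1)^{\lfloor 3\xi/2\rfloor+\lfloor(\eta+1)/2\rfloor}$. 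The discrepancies between the competing sign exponents always turn out to be even, so the match is exact. This is a routine but delicate parity computation; arranging it so that the expressions $\delta_2((1+\xi)\eta)$, $\lfloor 3\xi/2\rfloor$, and $\lfloor(\eta+1)/2\rfloor$ emerge uniformly across the four cases is precisely where the care is needed.
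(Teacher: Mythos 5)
Your proposal is correct and follows essentially the same route as the paper: both extract the coefficient of $p^{\xi}q^{\eta}$ from the closed form in Lemma~\ref{generSum} by expanding the denominator as a geometric series, observe that only even powers occur there so that exactly one of the four numerator terms $1,\,-p,\,q,\,-(1+m^2)pq$ contributes according to the parities of $\xi$ and $\eta$, and then reindex the resulting binomial/multinomial sum to reach the stated form. The only cosmetic difference is that the paper leaves the answer as four parity cases (with the unification into the single prefactor $(1+m^2)^{\xi+\eta-2+\delta_2((1+\xi)\eta)}(-1)^{\lfloor 3\xi/2\rfloor+\lfloor(\eta+1)/2\rfloor}$ implicit), whereas you carry out that parity bookkeeping explicitly.
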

\begin{proof} 
By Lemma~\ref{generSum} we have
$$
 \hat{A}_1(p,q) = -mq\left(1 - p + q - pq\left(m^2+1\right)\right)\sum\limits_{l=0}^{\infty}\left(p^2 + q^2 - p^2q^2(1+m^2)^2 \right)^l.
$$
Now we compute the coefficient at $p^{\xi}q^{\eta}$. \bluer{Depending} on the choice of $1, p, q, (1+m^2)pq$ from the factor $(1 - p+q -(1+m^2)pq)$, we are left to compute the coefficient at $p^{\xi}q^{\eta-1},p^{\xi-1}q^{\eta-1},$ $p^{\xi}q^{\eta-2},p^{\xi-1}q^{\eta-2}$  in the remaining sum respectively. Note that in the remaining sum there are only monomials with even powers. Then there are 4 cases:

\begin{align*}
&\frac{(1+m^2)^{\frac{\xi+\eta}{2}}}{m}a_1(\xi - \eta + 1, \xi + \eta + 1, m,1,u_1) =\\
&=\begin{cases}
-\sum\limits_{l=\max(\frac{\xi}{2}, \frac{\eta-2}{2})}^{\frac{\xi+\eta-2}{2}}{l\choose\frac{\xi}{2}}{\frac{\xi}{2}\choose\frac{\xi + \eta -2}{2}-l}\left(-(1+m^2)^2\right)^{\frac{\xi + \eta -2}{2}-l},\\
\hspace{8cm} \text{if } \xi \equiv_2 \eta \equiv_2 0, \\
(1+m^2)\sum\limits_{l=\max (\frac{\xi-1}{2}, \frac{\eta-2}{2})}^{\frac{\xi+\eta -3}{2}}{l \choose\frac{\xi-1}{2}}{\frac{\xi-1}{2}\choose\frac{\xi + \eta -3}{2} - l}\left(-(1+m^2)^2\right)^{\frac{\xi+\eta-3}{2}-l},\\
\hspace{8cm}\text{if } \xi-1\equiv_2\eta\equiv_2 0,  \\
-\sum\limits_{l=\max(\frac{\xi}{2}, \frac{\eta-1}{2})}^{\frac{\xi+\eta-1}{2}}{l \choose \frac{\xi}{2}}{\frac{\xi}{2}\choose\frac{\xi+\eta-1}{2}-l}\left(-(1+m^2)^2\right)^{\frac{\xi+\eta-1}{2}-l},\\
\hspace{8cm}\text{if } \xi-1 \equiv_2 \eta\equiv_2 1, \\
\sum\limits_{l=\max(\frac{\xi-1}{2}, \frac{\eta-1}{2})}^{\frac{\xi+\eta-2}{2}}{l\choose\frac{\xi-1}{2}}{\frac{\xi-1}{2}\choose\frac{\xi+\eta-2}{2}-l}\left(-(1+m^2)^2\right)^{\frac{\xi+\eta-2}{2}-l}, \\
\hspace{8cm}\text{if }\xi\equiv_2\eta\equiv_2 1. \\
\end{cases}
\end{align*}
\end{proof}

Now, to prove the Proposition~\ref{formula} we use hypergeometric functions and some known identities for them.
Function ${}_2F_1\left(a,b;c;z\right)$ for integer $c<b\le 0$ is defined analogously to the case $c > 0$ (see~\S\ref{results}), but the summation is from $k=0$ till $k=|b|$ only.

\begin{lemma}
\label{hyper}
For each $m \ge 0$ and integer $\xi,\eta \ge 0$ we have
\begin{align*}
&a_1(\xi - \eta +1, \xi+\eta+1,m,1,u_1) =\\
&=(-1)^{\xi+1}\frac{m(1+m^2)^{\delta_2((1+\eta)\xi)}}{(1+m^2)^{\frac{\xi+\eta}{2}}}{\lfloor\frac{\xi}{2}\rfloor + \lfloor\frac{\eta-1}{2}\rfloor \choose \lfloor \frac{\xi}{2}\rfloor} \cdot\\
&\cdot{}_2F_1\left(- \left\lfloor\frac{\eta-1}{2}\right\rfloor  ,-\left\lfloor\frac{\xi}{2}\right\rfloor; -\left\lfloor\frac{\xi}{2}\right\rfloor - \left\lfloor\frac{\eta-1}{2}\right\rfloor ;(1+m^2)^2\right).
\end{align*}
\end{lemma}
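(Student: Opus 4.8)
The plan is to derive Lemma~\ref{hyper} directly from the closed form of Lemma~\ref{firstclosedf} by recognizing the finite sum there as a terminating Gauss series. Write $a := \lfloor \xi/2\rfloor$ and $b := \lfloor(\eta-1)/2\rfloor$, so that the sum in Lemma~\ref{firstclosedf} runs over $l$ from $\max(a,b)$ to $a+b$. First I would reverse the order of summation via $l = a+b-k$; the index $k$ then runs from $0$ to $\min(a,b)$, and the summand becomes $\binom{a+b-k}{a}\binom{a}{k}(1+m^2)^{-2(a+b-k)}(-1)^{a+b-k}$. Pulling the $k$-independent factors $(-1)^{a+b}(1+m^2)^{-2(a+b)}$ out front leaves the sum $\sum_k \binom{a+b-k}{a}\binom{a}{k}(-1)^k (1+m^2)^{2k}$.

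The key algebraic step is the Pochhammer identity
$$\binom{a+b-k}{a}\binom{a}{k} = (-1)^k\binom{a+b}{a}\,\frac{(-a)_k(-b)_k}{(-a-b)_k\,k!},$$
which one checks by writing each binomial coefficient and each Pochhammer symbol as a ratio of factorials and cancelling. Substituting this in, the two factors of $(-1)^k$ cancel and the remaining sum is exactly $\binom{a+b}{a}\,{}_2F_1\!\left(-a,-b;-a-b;(1+m^2)^2\right)$; the series terminates at $k=\min(a,b)$, consistent with the truncated definition of ${}_2F_1$ recalled just before the lemma, and the denominators $(-a-b)_k$ never vanish in that range. Thus the sum in Lemma~\ref{firstclosedf} equals $(-1)^{a+b}(1+m^2)^{-2(a+b)}\binom{a+b}{a}\,{}_2F_1(-a,-b;-a-b;(1+m^2)^2)$, which already produces both the binomial prefactor and (by symmetry of the upper parameters) the hypergeometric factor of the target formula.

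It remains to reconcile the two scalar prefactors: the accumulated power of $1+m^2$ and the overall sign. Using $2a = \xi - \delta_2(\xi)$ and $2b = \eta - 1 - \delta_2(\eta-1)$, the exponent of $1+m^2$ carried over from Lemma~\ref{firstclosedf}, after dividing through by $(1+m^2)^{(\xi+\eta)/2}$, becomes $-1 + \delta_2(\xi) + \delta_2(\eta-1) + \delta_2((1+\xi)\eta) - \tfrac{\xi+\eta}{2}$, and I would verify the identity $-1 + \delta_2(\xi) + \delta_2(\eta-1) + \delta_2((1+\xi)\eta) = \delta_2((1+\eta)\xi)$ by running through the four parity classes of $(\xi,\eta)$. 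For the sign, note that $\lfloor 3\xi/2\rfloor + \lfloor\xi/2\rfloor = \xi + 2\lfloor\xi/2\rfloor \equiv \xi$ and $\lfloor(\eta+1)/2\rfloor = b+1$, so the exponent $\lfloor 3\xi/2\rfloor + \lfloor(\eta+1)/2\rfloor + a + b$ assembled from Lemma~\ref{firstclosedf} together with the index reversal reduces modulo $2$ to $\xi + (b+1) + b = \xi + 1$, matching $(-1)^{\xi+1}$. The main obstacle is purely the bookkeeping: keeping the parity-dependent powers of $1+m^2$ and the three floor-function sign contributions straight. Splitting into the four cases according to the residues of $\xi$ and $\eta$ modulo $2$ is the cleanest way to discharge both verifications at once.
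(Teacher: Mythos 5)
Your proposal is correct and takes essentially the same route as the paper's proof: both start from Lemma~\ref{firstclosedf}, reverse the summation index $l\mapsto \lfloor\xi/2\rfloor+\lfloor(\eta-1)/2\rfloor-l$, recognize the resulting sum as $\binom{a+b}{a}\,{}_2F_1\left(-a,-b;-a-b;(1+m^2)^2\right)$ (the paper by matching consecutive-term ratios, you by the equivalent closed-form Pochhammer identity), and finish with the same parity bookkeeping for the sign $(-1)^{\xi+1}$ and the exponent of $1+m^2$. The Pochhammer identity you invoke is exactly what the paper's ratio check verifies, so the two arguments differ only cosmetically.
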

\begin{proof}
Let us transform the expression from Lemma~\ref{firstclosedf}.
Note that $$\xi + \eta - 2 + \delta_2((1+\xi)\eta) = 2\left(\left\lfloor\frac{\xi}{2}\right\rfloor + \left\lfloor\frac{\eta-1}{2}\right\rfloor\right) + \delta_2((1+\eta)\xi)$$ and change the summation index ($l$ to $\lfloor\frac{\xi}{2}\rfloor + \lfloor\frac{\eta-1}{2}\rfloor - l = j$). We obtain

\begin{align*}
&(1+m^2)^{\frac{\xi+\eta}{2}-\delta_2((1+\eta)\xi)}a_1(\xi-\eta+1,\xi+\eta+1,m,1,u_1)=\\
&=(-1)^{\xi+1}m\sum_{j=0}^{\min(\lfloor\frac{\eta-1}{2}\rfloor, \lfloor\frac{\xi}{2}\rfloor)}\frac{(\lfloor\frac{\xi}{2}\rfloor + \lfloor\frac{\eta-1}{2}\rfloor - j)!}{j!(\lfloor\frac{\eta-1}{2}\rfloor -j)!(\lfloor\frac{\xi}{2}\rfloor-j)!}(1 + m^2)^{2j}(-1)^{-j}.
\end{align*}
\normalsize
 Now we bring this series to the hypergeometric form. The ratio of the $(j+1)$-th and the $j$-th term in the obtained sum is equal to
 $$- \frac{( \lfloor\frac{\eta-1}{2}\rfloor  -j)(\lfloor\frac{\xi}{2}\rfloor-j)}{(\lfloor\frac{\xi}{2}\rfloor + \lfloor\frac{\eta-1}{2}\rfloor-j)(j+1)}(1+m^2)^2.
 $$
This is the same as the ratio of the $(j+1)$-th and the $j$-th term for the hypergeometric function with $a = - \lfloor\frac{\eta-1}{2}\rfloor$,  $b = -\lfloor\frac{\xi}{2}\rfloor$, $c = -\lfloor\frac{\xi}{2}\rfloor - \lfloor\frac{\eta-1}{2}\rfloor$. 
Thus we have
 \begin{align*}
 &(1+m^2)^{\frac{\xi+\eta}{2}-\delta_2((1+\eta)\xi)}a_1(\xi-\eta+1,\xi+\eta+1,m,1,u_1)=(-1)^{\xi+1}m\binom{\left\lfloor\frac{\xi}{2}\right\rfloor + \left\lfloor\frac{\eta-1}{2}\right\rfloor}{\left\lfloor\frac{\xi}{2}\right\rfloor}\cdot\\
 &\cdot {}_2F_1\left(- \left\lfloor\frac{\eta-1}{2}\right\rfloor ,-\left\lfloor\frac{\xi}{2}\right\rfloor;-\left\lfloor\frac{\xi}{2}\right\rfloor - \left\lfloor\frac{\eta-1}{2}\right\rfloor;(1+m^2)^2\right).
 \end{align*}
\end{proof}

\textbf{Proof of Propositions~\ref{formula} and \ref{hyper-form}.}
Cases $\eta=0,1,2$ are checked by the direct substitution of $\eta$ in Lemma~\ref{firstclosedf}, hence we consider $\eta \ge 3$.
By the known identity $${}_2F_1\left( a,-n, c;z\right) =\left(\prod\limits_{l=0}^{n-1}\frac{l - c+ a}{l + c}\right){}_2F_1\left( a,-n, 1 - n + a - c; 1- z\right)(see~\cite[(5.105)]{math})$$  
we get
\begin{multline*}
{}_2F_1\left(- \left\lfloor\frac{\eta-1}{2}\right\rfloor ,-\left\lfloor\frac{\xi}{2}\right\rfloor;-\left\lfloor\frac{\xi}{2}\right\rfloor - \left\lfloor\frac{\eta-1}{2}\right\rfloor;(1+m^2)^2\right)=\\
=\left(\prod\limits_{l=0}^{\lfloor\frac{\xi}{2}\rfloor-1}\frac{l-\lfloor\frac{\xi}{2}\rfloor}{l-\lfloor\frac{\xi}{2}\rfloor - \lfloor\frac{\eta-1}{2}\rfloor }\right)\cdot{}_2F_1\left(- \left\lfloor\frac{\eta-1}{2}\right\rfloor ,-\left\lfloor\frac{\xi}{2}\right\rfloor;1;1-(1+m^2)^2\right).
\end{multline*}
Since $$\prod\limits_{l=0}^{\lfloor\frac{\xi}{2}\rfloor}\frac{l-\lfloor\frac{\xi}{2}\rfloor}{l-\lfloor\frac{\xi}{2}\rfloor - \lfloor\frac{\eta}{2}\rfloor} = \frac{1}{{\lfloor\frac{\xi}{2}\rfloor + \lfloor\frac{\eta-1}{2}\rfloor\choose \lfloor \frac{\xi}{2}\rfloor}},$$  

by Lemma~\ref{hyper} we have

\begin{align*}
&a_1(\xi-\eta+1,\xi+\eta+1,m,1,u_1)= \\
&=\frac{(-1)^{ \xi+1}m}{(1+m^2)^{\frac{\xi+\eta}{2}-\delta_2((1+\eta)\xi)}}\cdot{}_2F_1\left( - \left\lfloor\frac{\eta-1}{2}\right\rfloor ,-\left\lfloor\frac{\xi}{2}\right\rfloor;1;1-(1+m^2)^2\right).
\end{align*}

This proves the formula for $a_1(\xi-\eta+1,\xi+\eta+1,m,1,u_1)$ from Proposition~\ref{hyper-form}. Now, representing the hypergeometric function as series with a finite number of non-zero terms, we obtain the formula from Proposition~\ref{formula} for $\eta \ge 3$.

The formula for $a_2(\xi-\eta+1,\xi+\eta+1,m,1,u_1)$ is obtained from the one for $a_1(\xi-\eta+1,\xi+\eta+1,m,1,u_1)$ using Dirac's equation (Lemma~\ref{dirak}). $\square$

\section{Proofs of Theorems~\ref{fproblim} and~\ref{wavelimit}}
In this section we use the following notation: $A:=\left\lfloor{\frac{x}{4\varepsilon}}\right\rfloor + \left\lfloor{\frac{ t}{4\varepsilon}}\right\rfloor,$ $B:= \left\lfloor{\frac{t}{4\varepsilon}}\right\rfloor - \left\lfloor{\frac{ x}{4\varepsilon}}\right\rfloor$.
 
\smallskip \textbf{The proof of Theorem~\ref{wavelimit} modulo Lemmas~\ref{norm} --~\ref{binom}.}
The first equality in the theorem follows from chain of formulae
\begin{align*}
&\frac{1}{2\varepsilon}a_1\left(4\varepsilon\left\lfloor\frac{x}{4\varepsilon}\right\rfloor, 4\varepsilon\left\lfloor\frac{t}{4\varepsilon}\right\rfloor, m, \varepsilon, u_{\varepsilon}\right) = \frac{1}{2\varepsilon}a_1\left(2(A-B), 2(A+B), m\varepsilon, 1, u_1\right) =  \\
&=\frac{1}{2\varepsilon}m\varepsilon\left(1+(m\varepsilon)^2\right)^{3/2-2\left\lfloor\frac{ t}{4\varepsilon}\right\rfloor} \sum_{j=0}^{A-1}(-1)^j{A - 1\choose j}{B - 1\choose j}\left(2 + (m\varepsilon)^2\right)^j(m\varepsilon)^{2j} \sim \\
&\sim\frac{m}{2}\sum_{j=0}^{\infty}(-1)^j{A - 1\choose j}{B - 1\choose j}\left(2 + (m\varepsilon)^2\right)^j(m\varepsilon)^{2j} =\\
&=\frac{m}{2}\sum_{2|j}{A - 1\choose j}{B - 1\choose j}\left(2 + (m\varepsilon)^2\right)^j(m\varepsilon)^{2j} -\\
&-\frac{m}{2}\sum_{2 \nmid j}{A - 1\choose j}{B - 1\choose j}\left(2 + (m\varepsilon)^2\right)^j(m\varepsilon)^{2j} \to  \\
&\to \frac{m}{2}\left(\sum_{2 |j}\frac{(x+t)^j(t-x)^j}{8^j(j!)^2}m^{2j} - \sum_{2 \nmid j}\frac{(x+t)^j(t-x)^j}{8^j(j!)^2}m^{2j}\right) = \\
&=\frac{m}{2}J_0\left(m \sqrt{\frac{t^2-x^2}{2}}\right).
\end{align*}
\normalsize
Here the first equality holds by Remark~\ref{Ueq}, the second one follows from Proposition~\ref{formula} applied to $\xi=2A-1$ and $\eta = 2B$, and the equivalence follows from Lemma~\ref{norm}. The next equality holds because all three sums are finite, while the limit transition follows from Lemmas~\ref{lim20}--\ref{binom} (analogues of \cite[Lemma 20 and Lemma 21]{article}) and \cite[Lemma 22]{article}. Lemmas~\ref{norm}--\ref{binom} are proved below.

For $a_2(x,t,m,\varepsilon,u_{\varepsilon})$,  the theorem is proved analogously.
$\square$

\begin{lemma}
\label{norm}
For each $m,t > 0$ we have $$\lim_{\varepsilon \to 0}(1+(m\varepsilon)^2)^{3/2-2\left\lfloor\frac{t}{4\varepsilon}\right\rfloor} = 1.$$ 
\end{lemma}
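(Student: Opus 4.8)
The plan is to pass to logarithms and reduce the claim to a single limit. Since $\exp$ is continuous, it suffices to show that
\[
L(\varepsilon) := \left(\tfrac{3}{2} - 2\left\lfloor \tfrac{t}{4\varepsilon}\right\rfloor\right)\ln\!\bigl(1+(m\varepsilon)^2\bigr) \longrightarrow 0 \quad\text{as } \varepsilon \to 0,
\]
after which the statement follows because the original quantity equals $e^{L(\varepsilon)}$. Here the base $1+(m\varepsilon)^2\to 1$ while the exponent $\tfrac32-2\lfloor t/(4\varepsilon)\rfloor\to-\infty$, so this is an indeterminate form $1^{-\infty}$; the whole content of the lemma is that the base approaches $1$ quickly enough to overwhelm the divergence of the exponent.

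First I would restrict to $\varepsilon$ small enough that $\lfloor t/(4\varepsilon)\rfloor\ge 1$, so that the coefficient $c(\varepsilon):=\tfrac32-2\lfloor t/(4\varepsilon)\rfloor$ is negative. Then I would apply the elementary inequalities $0\le\ln(1+u)\le u$ (with $u=(m\varepsilon)^2\ge 0$) and $0\le\lfloor t/(4\varepsilon)\rfloor\le t/(4\varepsilon)$ to squeeze $L(\varepsilon)$ from both sides. Because $c(\varepsilon)<0$, upper-bounding the logarithm by $(m\varepsilon)^2$ lowers the product, and lower-bounding $c(\varepsilon)$ by $\tfrac32-\tfrac{t}{2\varepsilon}$ gives
\[
\left(\tfrac{3}{2} - \tfrac{t}{2\varepsilon}\right)(m\varepsilon)^2 \;\le\; L(\varepsilon) \;\le\; \tfrac{3}{2}(m\varepsilon)^2 .
\]
The right-hand side is $\tfrac32 m^2\varepsilon^2\to 0$, and the left-hand side equals $\tfrac32 m^2\varepsilon^2-\tfrac{t m^2}{2}\varepsilon\to 0$, the crucial point being that the quadratic smallness of $(m\varepsilon)^2$ beats the $\varepsilon^{-1}$ growth hidden in $\lfloor t/(4\varepsilon)\rfloor$, leaving only an $O(\varepsilon)$ contribution. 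By the squeeze theorem $L(\varepsilon)\to 0$, hence $e^{L(\varepsilon)}\to 1$.

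I do not expect a genuine obstacle here: the only thing to watch is the indeterminate form, which dissolves the moment one records that $\ln(1+(m\varepsilon)^2)=O(\varepsilon^2)$ whereas the exponent is $O(\varepsilon^{-1})$, so that their product is $O(\varepsilon)$ and the floor may be bounded crudely without affecting the conclusion.
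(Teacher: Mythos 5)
Your proof is correct and follows essentially the same route as the paper: both arguments bound $\left\lfloor t/(4\varepsilon)\right\rfloor$ by $t/(4\varepsilon)$, exploit that $\ln\bigl(1+(m\varepsilon)^2\bigr)=O(\varepsilon^2)$ beats the $O(\varepsilon^{-1})$ exponent, and conclude by the squeeze theorem. The only cosmetic difference is that you pass to logarithms, whereas the paper squeezes the (reciprocal of the) expression directly between $1$ and $\left(1+(m\varepsilon)^2\right)^{t/2\varepsilon}\sim\left(\exp\frac{m^2t}{2}\right)^{\varepsilon}\to 1$.
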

\begin{proof} This follows from the following chain of formulae:
$$
1 \le \left(1+(m\varepsilon)^2\right)^{2\left\lfloor\frac{t}{4\varepsilon}\right\rfloor - 3/2} \le (1+(m\varepsilon)^2)^{\frac{ t}{2\varepsilon}} = \left(1+(m\varepsilon)^2\right)^{\frac{t\varepsilon}{2\varepsilon^2}} \sim \left(\exp{\frac{m^2t}{2}}\right)^{\varepsilon} \underset{\varepsilon \to 0}{\to} 1.$$
Applying the squeeze theorem we obtain the required result.
\end{proof}

\begin{lemma}
\label{lim20}
For each integer $r>0$ we have
$$\lim\limits_{\varepsilon \to 0}{A-1\choose r}{B-1 \choose r}(2 + (m\varepsilon)^2)^{r}m^{2r}\varepsilon^{2r} = \frac{(x+t)^{r}(x-t)^{r}m^{2r}}{2^{3r}(r!)^2}.$$
\end{lemma}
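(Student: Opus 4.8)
The plan is to take the limit factor by factor, using that $r$ is a fixed positive integer while the two binomial upper indices tend to infinity. First I would record the asymptotics of $A$ and $B$. Since $\lfloor y\rfloor = y + O(1)$, we have $4\varepsilon A = 4\varepsilon\lfloor x/4\varepsilon\rfloor + 4\varepsilon\lfloor t/4\varepsilon\rfloor \to x+t$ and $4\varepsilon B = 4\varepsilon\lfloor t/4\varepsilon\rfloor - 4\varepsilon\lfloor x/4\varepsilon\rfloor \to t-x$ as $\varepsilon\searrow 0$. Because $|x|<t$, both limits $x+t$ and $t-x$ are strictly positive, so $A,B\to+\infty$ and, for small $\varepsilon$, each binomial coefficient is a genuine product of $r$ consecutive positive integers.

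Next I would invoke the elementary asymptotics, valid for fixed $r$ as $N\to\infty$,
$$\binom{N-1}{r} = \frac{(N-1)(N-2)\cdots(N-r)}{r!} = \frac{N^r}{r!}\bigl(1+O(N^{-1})\bigr).$$
Applying this with $N=A$ and with $N=B$, and inserting the prefactor $\varepsilon^{2r}$, gives
$$\varepsilon^{2r}\binom{A-1}{r}\binom{B-1}{r} = \frac{(\varepsilon A)^r(\varepsilon B)^r}{(r!)^2}\bigl(1+o(1)\bigr),$$
which by the first step converges to $\dfrac{1}{(r!)^2}\left(\dfrac{x+t}{4}\right)^r\left(\dfrac{t-x}{4}\right)^r = \dfrac{(x+t)^r(t-x)^r}{16^r(r!)^2}$. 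Meanwhile $(2+(m\varepsilon)^2)^r\to 2^r$ and the factor $m^{2r}$ is constant in $\varepsilon$.

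Finally I would multiply the three limits: since $2^r/16^r = 2^{-3r}$, the product equals $\dfrac{(x+t)^r(t-x)^r m^{2r}}{2^{3r}(r!)^2}$, which is the asserted value. There is no serious obstacle here; the only point worth care is that the $O(1)$ discrepancies introduced by the floor functions in $A$ and $B$ must be absorbed into the error term before one multiplies by $\varepsilon$, so that $\varepsilon A$ and $\varepsilon B$ genuinely converge to $(x+t)/4$ and $(t-x)/4$ rather than merely staying bounded. Once this is observed the computation is routine, and this lemma — together with Lemma~\ref{binom} controlling the tail of the series — feeds directly into the termwise passage to the limit in the proof of Theorem~\ref{wavelimit}.
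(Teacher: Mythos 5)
Your proof is correct and takes essentially the same route as the paper's: both expand the binomial coefficients as falling factorials, attach one factor of $\varepsilon$ to each term, and pass to the limit factor by factor using $\varepsilon A \to (x+t)/4$, $\varepsilon B \to (t-x)/4$ and $(2+(m\varepsilon)^2)^r \to 2^r$. Note that your computation (like the paper's own proof) yields $(t-x)^r$, so the $(x-t)^r$ in the lemma statement is a sign typo; this is harmless in the application to Theorem~\ref{wavelimit}, where the alternating sign $(-1)^j$ is supplied separately.
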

 
\begin{proof}
 We have 
\begin{align*}
 &{A-1 \choose r}{B-1 \choose r}(2 + (m\varepsilon)^2)^{r}m^{2r}\varepsilon^{2r} = \\ 
 &=\frac{(A-1)\dots(A-r)\cdot(B-1)\dots(B-r)}{(r!)^2}\cdot(2 + (m\varepsilon)^2)^{r}m^{2r}\varepsilon^{2r} \to \\
 &\to \left(\frac{x+t}{4}\right)^r\left(\frac{t-x}{4}\right)^r\frac{2^rm^{2r}}{(r!)^2}
 \end{align*}
as $\varepsilon \to 0$, because for each $1 \le j \le r$
\begin{align*}
    \lim\limits_{\varepsilon \to 0}(A-j)\varepsilon =  \lim\limits_{\varepsilon \to 0}\left(\left\lfloor\frac{x}{4\varepsilon}\right\rfloor + \left\lfloor\frac{t}{4\varepsilon}\right\rfloor - j\right)\varepsilon = \lim\limits_{\varepsilon \to 0}\left(\frac{x}{4\varepsilon} + \frac{t}{4\varepsilon}\right)\varepsilon = \frac{x+t}{4}.
\end{align*}
Analogously, $\lim\limits_{\varepsilon \to 0}(B-j)\varepsilon = \frac{t-x}{4}.$
\end{proof} 

\begin{lemma}
\label{binom}
For each $|x|<t$ for small enough $\varepsilon > 0$ for each integer $j \ge 0$ we have
$$
{A - 1\choose j}{B - 1 \choose j}(2 + (m\varepsilon)^2)^j(m\varepsilon)^{2j} \le \frac{(x+t)^j(t-x)^jm^{2j}}{2^{3j}(j!)^2}.
$$
\end{lemma}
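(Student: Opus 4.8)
The plan is to reduce the claimed bound, which must hold uniformly in $j$, to a single per-factor inequality. First I would dispose of the trivial case: if $\binom{A-1}{j}=0$ or $\binom{B-1}{j}=0$, then the left-hand side vanishes while the right-hand side is nonnegative (as $t^2-x^2>0$ by $|x|<t$), so the inequality holds. Otherwise $A-1\ge j$ and $B-1\ge j$, and I may write $\binom{A-1}{j}\binom{B-1}{j}=\frac{1}{(j!)^2}\prod_{i=1}^{j}(A-i)(B-i)$ with every factor $\ge 1$. The goal then becomes to find a single constant $C$ with $(A-i)(B-i)\le C$ for all $i\in\{1,\dots,j\}$, so that $\prod_{i=1}^{j}(A-i)(B-i)\le C^{j}$, and then to verify that this $C$ satisfies $C\,(2+(m\varepsilon)^2)(m\varepsilon)^{2}\le\frac{(x+t)(t-x)m^{2}}{8}$, which is the per-factor content of the right-hand side (this target is exactly the $j$-th root of the limit computed in Lemma~\ref{lim20}).

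The key step is the estimate of $(A-i)(B-i)$, and here the naive bound $A-1\le\frac{x+t}{4\varepsilon}$, $B-1\le\frac{t-x}{4\varepsilon}$ is \emph{not} enough: it produces a deficit factor $\bigl((2+(m\varepsilon)^2)/2\bigr)^{j}>1$ pointing the wrong way. The fix, which I expect to be the crux, is to retain the genuine slack hidden in the floor functions. Since $A=\lfloor x/4\varepsilon\rfloor+\lfloor t/4\varepsilon\rfloor\le\frac{x+t}{4\varepsilon}$, one gets $A-i\le A-1\le\frac{x+t}{4\varepsilon}-1$, whereas $B=\lfloor t/4\varepsilon\rfloor-\lfloor x/4\varepsilon\rfloor\le\frac{t-x}{4\varepsilon}+1$ gives $B-i\le B-1\le\frac{t-x}{4\varepsilon}$. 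For $j\ge1$ all these bounds are nonnegative (because then $A-1\ge1$), so multiplying yields $(A-i)(B-i)\le\bigl(\tfrac{x+t}{4\varepsilon}-1\bigr)\tfrac{t-x}{4\varepsilon}=:C$. The extra $-1$ contributes a negative term $-\tfrac{t-x}{4\varepsilon}$, which is precisely what compensates the unwanted factor.

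It then remains to check that this $C$ satisfies $C\,(2+(m\varepsilon)^2)(m\varepsilon)^2\le\frac{(x+t)(t-x)m^{2}}{8}$ for all small $\varepsilon$. Dividing by $(t-x)m^2>0$ and clearing denominators, this simplifies to $m^{2}\varepsilon(x+t-4\varepsilon)\le 8$, equivalently $\frac{(t+x)m^2\varepsilon}{2+m^2\varepsilon^2}\le 4$; since the left-hand side tends to $0$ as $\varepsilon\searrow0$ while the right-hand side is a fixed positive constant, it holds below an explicit threshold (e.g.\ $\varepsilon\le 8/((t+x)m^{2})$ already suffices, using $2+m^2\varepsilon^2\ge 2$). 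Combining everything, $\prod_{i=1}^{j}(A-i)(B-i)\le C^{j}$ gives the left-hand side $\le\frac{1}{(j!)^2}\bigl(C(2+(m\varepsilon)^2)(m\varepsilon)^2\bigr)^{j}\le\frac{(x+t)^{j}(t-x)^{j}m^{2j}}{2^{3j}(j!)^2}$, which is the desired bound, and the case $j=0$ is immediate. The only real subtlety is the sign bookkeeping around the floors in the middle step; the rest is a routine geometric-series estimate.
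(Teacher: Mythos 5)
Your proof is correct and follows essentially the same route as the paper's: bound the binomial coefficients by powers using the floor estimates, then absorb the excess factor $\left(2+(m\varepsilon)^2\right)/2$ into the $-4\varepsilon$ slack, which works precisely when $m^2\varepsilon(x+t)\le 8$ --- the same threshold the paper uses. If anything, your bookkeeping is slightly more careful than the paper's: the paper bounds $B-1$ by $\frac{t-x-4\varepsilon}{4\varepsilon}$, which requires $\lfloor t/4\varepsilon\rfloor-\lfloor x/4\varepsilon\rfloor\le\frac{t-x}{4\varepsilon}$ and can fail by a fractional part, whereas your bound $B-1\le\frac{t-x}{4\varepsilon}$ is always valid and you correctly keep all the needed slack on the $A$ side.
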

\begin{proof} The lemma follows from the following chain of inequalities:
\begin{align*}
&{A - 1\choose j}{B - 1 \choose j}(2 + (m\varepsilon)^2)^j(m\varepsilon)^{2j} \le \\
&\le \frac{\left(x+t-4\varepsilon\right)^j\left(t-x-4\varepsilon\right)^j}{(4\varepsilon)^{2j}(j!)^2}\left(2 + (m\varepsilon)^2\right)^j(m\varepsilon)^{2j} =\\
&=\frac{\left(x+t-4\varepsilon\right)^j\left(t-x-4\varepsilon\right)^j}{(4)^{2j}(j!)^2}\left(2 + (m\varepsilon)^2\right)^jm^{2j} \le \frac{(x+t)^j(t-x)^jm^{2j}}{2^{3j}(j!)^2},
\end{align*}
where the last inequality holds for $\varepsilon \le \min\left\{\frac{8}{m^2(x+t)}, \frac{x+t}{4}, \frac{t-x}{4}\right\},$
because for such $\varepsilon$ the following chain of inequalities holds:
$$
\left(1 - \frac{4\varepsilon}{x+t}\right) \left(1 + \frac{(m\varepsilon)^2}{2}\right) \le 1-\frac{4\varepsilon}{x+t} + \frac{(m\varepsilon)^2}{2}\le 1.
$$

\end{proof}
\textbf{Proof of Theorem~\ref{fproblim}.}
 It follows directly from Theorem~\ref{wavelimit}.
$\square$

\section{Proof of Theorems~\ref{wl-intr}~and~\ref{large-time-lim}}
We use the following notation for the \bluer{''normalized``} integrands from Proposition~\ref{IntForm}:\remove{}\begin{align}
\begin{split}
 \label{Fourier_form}
    \hat a_1(p,t)& :=
    \begin{dcases}
    2m\varepsilon\sin{\omega_p(t-\varepsilon)}\cdot \frac{\sin{p\varepsilon}}{(1+m^2\varepsilon^2)\cos{2\omega_p\varepsilon}}, &\text{ if } \frac{t}{\varepsilon} \equiv_{4} 1 ;\\
    m\varepsilon\cdot\frac{i\sin{\omega_p(t-2\varepsilon) - \cos{\omega_pt} }}{\sqrt{1+m^2\varepsilon^2}\cos{2\omega_p\varepsilon}}, &\text{ if } \frac{t}{\varepsilon} \equiv_{4} 2;\\
     2im\varepsilon\cos{\omega_p(t-\varepsilon)}\cdot \frac{\sin{p\varepsilon}}{(1+m^2\varepsilon^2)\cos{2\omega_p\varepsilon}}, &\text{ if } \frac{t}{\varepsilon} \equiv_{4} 3;\\
     m\varepsilon\cdot\frac{\cos{\omega}_p(t-2\varepsilon) - i\sin{\omega}_pt}{\sqrt{1+m^2\varepsilon^2}\cos{2\omega_p\varepsilon}}, &\text{ if } \frac{t}{\varepsilon} \equiv_{4} 0;
    \end{dcases}\\
    \hat{a}_2(p,t) &:= 
    \begin{dcases}
     \bluevar{e^{-ip\varepsilon}}\left(i\sin{\omega_p(t-\varepsilon)}\cdot\frac{m^2\varepsilon^2 + \cos{2p\varepsilon}}{(1+m^2\varepsilon^2)\cos{2\omega_p\varepsilon}}  - \cos{\omega_p(t-\varepsilon)}\right), &\text{ if } \frac{t}{\varepsilon} \equiv_{4} 1;\\
    \bluevar{e^{-2ip\varepsilon}}\cdot\frac{ie^{2ip\varepsilon}\sin{\omega_p(t-2\varepsilon)} - \cos{\omega_pt} }{\sqrt{1+m^2\varepsilon^2}\cos{2\omega_p\varepsilon}}, &\text{ if } \frac{t}{\varepsilon} \equiv_{4} 2;\\
     \bluevar{e^{-ip\varepsilon}}\left(\cos{\omega_p(t-\varepsilon)}\cdot\frac{m^2\varepsilon^2 + \cos{2p\varepsilon}}{(1+m^2\varepsilon^2)\cos{2\omega_p\varepsilon}}  - i\sin{\omega_p(t-\varepsilon)}\right), &\text{ if } \frac{t}{\varepsilon} \equiv_{4} 3;\\
     \bluevar{e^{-2ip\varepsilon}}\cdot
    \frac{e^{2ip\varepsilon}\cos{\omega_p(t-2\varepsilon)} - i\sin{\omega_pt}}{\sqrt{1+m^2\varepsilon^2}\cos{2\omega_p\varepsilon}}, &\text{ if } \frac{t}{\varepsilon} \equiv_{4} 0;
    \end{dcases}
\end{split}
\end{align}
\normalsize
Then $a_1(x,t,m,\varepsilon,u_{\varepsilon})$ and $a_2(x,t,m,\varepsilon,u_{\varepsilon})$ are up to sign the coefficients of the Fourier series for functions $\hat a_1(p,t)$ and $\hat a_2(p,t)$ respectively.

\textbf{Proof of Theorem~\ref{large-time-lim}.}
By~\cite[Lemma 2]{article} it suffices to prove (C). It follows from the following chain of equalities, which we are going to comment below:

\begin{align*}
    &\sum\limits_{x \in \varepsilon\mathbb{Z}}\frac{x^r}{t^r}P(x,t,m,\varepsilon,u_{\varepsilon})= \sum\limits_{x \in\varepsilon\mathbb{Z}}\begin{pmatrix}
    a_1(x,t,m,\varepsilon,u_{\varepsilon})\\
    a_2(x,t,m,\varepsilon,u_{\varepsilon})
    \end{pmatrix}^{*}\frac{x^r}{t^r}\begin{pmatrix}
    a_1(x,t,m,\varepsilon,u_{\varepsilon})\\
    a_2(x,t,m,\varepsilon,u_{\varepsilon}) \\
    \end{pmatrix} = \\
    &=
    \frac{\varepsilon}{2\pi}\int\limits_{-\pi/\varepsilon}^{\pi/\varepsilon}\begin{pmatrix}
    \hat{a}_1(p,t)\\
    \hat{a}_2(p,t)
    \end{pmatrix}^{*}\frac{i^r}{t^r}\frac{\partial^r}{\partial p^r}\begin{pmatrix}
    \hat{a}_1(p,t)\\
    \hat{a}_2(p,t) \\
    \end{pmatrix}d p =\\
     &=\begin{dcases}
        \frac{\varepsilon}{2\pi}\int\limits_{-\pi/\varepsilon}^{\pi/\varepsilon}\left( \omega_p^{\prime}\right)^{r} 
\left|\begin{pmatrix}
    \hat{a}_1(p,t)\\
    \hat{a}_2(p,t)
    \end{pmatrix}\right|^{2} dp + O_{m,\varepsilon,r}\left(\frac{1}{t}\right), & \text{ for even } r,\\
    \frac{\varepsilon}{2\pi}\int\limits_{-\pi/\varepsilon}^{\pi/\varepsilon}\left( \omega_p^{\prime}\right)^{r-1}
\begin{pmatrix}
    \hat{a}_1(p,t)\\
    \hat{a}_2(p,t)
    \end{pmatrix}^{*}\bluevar{\frac{i}{t}}\frac{\partial}{\partial p}\begin{pmatrix}
    \hat{a}_1(p,t)\\
    \hat{a}_2(p,t)
    \end{pmatrix} dp + O_{m,\varepsilon,r}\left(\frac{1}{t}\right),& \text{ for odd } r,
    \end{dcases} =
\end{align*}
\begin{align*}
    &=\begin{dcases}
       \frac{\varepsilon}{2\pi}\int\limits_{-\pi/\varepsilon}^{\pi/\varepsilon}\left( \omega_p^{\prime} \right)^{r}dp +O_{m,\varepsilon,r}\left(\frac{1}{t}\right),& \text{ for even } r,\\
       \frac{\varepsilon}{2\pi}\int\limits_{-\pi/\varepsilon}^{\pi/\varepsilon}\left( \omega_p^{\prime} \right)^{r} \frac{m^2\varepsilon^2 + \cos 2p\varepsilon}{(1+m^2\varepsilon^2)\cos 2\omega_p\varepsilon} dp+O_{m,\varepsilon,r}\left(\frac{1}{t}\right),& \text{ for odd } r,
    \end{dcases} =\\
    &=\frac{\varepsilon}{2\pi}\int\limits_{-\pi/\varepsilon}^{\pi/\varepsilon}\left( \omega_p^{\prime} \right)^{r+\delta_2(r)}dp +O_{m,\varepsilon,r}\left(\frac{1}{t}\right) = \frac{2\varepsilon}{\pi}\int\limits_{0}^{\pi/2\varepsilon}\left( \omega_p^{\prime} \right)^{r+\delta_2(r)}dp +O_{m,\varepsilon,r}\left(\frac{1}{t}\right)=\\
&=\int\limits_{-\frac{1}{1+m^2\varepsilon^2}}^{\frac{1}{1+m^2\varepsilon^2}}
    v^{r+\delta_2(r)}\frac{\sqrt{(1+m^2\varepsilon^2)^2-1}}{\pi(1-v^2)\sqrt{1 - (1+m^2\varepsilon^2)^2v^2}} dv + O_{m,\varepsilon,r}\left(\frac{1}{t}\right) =\\ &=\int\limits_{-\frac{1}{1+m^2\varepsilon^2}}^{\frac{1}{1+m^2\varepsilon^2}}
    (v^{r+\delta_2(r)}+v^{r+\delta_2(r+1)})\frac{\sqrt{(1+m^2\varepsilon^2)^2-1}}{\pi(1-v^2)\sqrt{1 - (1+m^2\varepsilon^2)^2v^2}}dv +O_{m,\varepsilon,r}\left(\frac{1}{t}\right) = \\
    &=\int\limits_{-1}^{1}
    v^{r}F^{\prime}(v) dv +O_{m,\varepsilon,r}\left(\frac{1}{t}\right).
\end{align*}
Here the first equality is true by the definition of $P(x,t,m,\varepsilon,u_{\varepsilon})$. The second one follows from Proposition~\ref{IntForm}, derivative property of Fourier series and Parseval's theorem. The third one follows from the following asymptotic formula for the derivative:
\begin{align*}
    \frac{1}{t^r}\cdot\frac{\partial^r}{\partial p^r} \hat{a}_k(p,t) 
   = \begin{cases}
   (-1)^{r/2}\left( \omega_p^{\prime} \right)^r \hat{a}_k(p,t) +O_{m,\varepsilon,r}\left(\frac{1}{t}\right),& \text{for } r \text{ even},\\
   (-1)^{(r-1)/2}\left( \omega_p^{\prime} \right)^{r-1} \bluevar{\frac{1}{t}}\frac{\partial}{\partial p}\hat{a}_k(p,t) +O_{m,\varepsilon,r}\left(\frac{1}{t}\right), & \text{for } r \text{ odd}.
   \end{cases}
\end{align*}
Indeed, we differentiate expression~(\ref{Fourier_form}) $r$ times with \bluer{the} help of Leibniz rule. In order to obtain the main term, \bluer{each time} we have to differentiate the factor containing $t$, that is, either $\sin\omega_p(t-k\varepsilon)$ or $\cos\omega_p(t-k\varepsilon)$, where $k=0,1$, or $2$. The remaining terms give $O_{m,\varepsilon,r}\left(\frac{1}{t}\right)$, since they are smooth periodic functions.  

The fourth equality is verified by a direct computation in \cite[Section 2]{computations}. The fifth and the sixth equalities follow from the equalities $\omega_p^{\prime} =\frac{\cos 2p\varepsilon}{(1+m^2\varepsilon^2)\cos2\omega_p\varepsilon} $ and $\omega_{p+\pi /2\varepsilon}^{\prime} = -\omega_p^{\prime} $.
The seventh one is obtained by the change of variable $v = \omega_p^{\prime}=\frac{\cos 2p\varepsilon}{\sqrt{(1+m^2\varepsilon^2)^2 - \sin^2 2p\varepsilon}}$ (see \cite[Sections 1 and 4]{computations}).
Such change of variable is possible, because $\omega_p^{\prime}$ decreases on $(0, \pi/2\varepsilon)$ (see \cite[Section 1]{computations}). The eighth equality holds because the following function is odd and does not affect the integral:
$$
v^{r+\delta_2(r+1)}\frac{\sqrt{(1+m^2\varepsilon^2)^2-1}}{\pi(1-v^2)\sqrt{1 - (1+m^2\varepsilon^2)^2v^2}}.
$$
Thus Theorem~\ref{large-time-lim} is proved. $\square$
\textbf{Proof of Theorem~\ref{wl-intr}.} It follows directly from Theorem~\ref{large-time-lim}. $\square$

\section{Proof of Theorems~\ref{chir-rev-f} and \ref{prob-rev-field}}
\textbf{Proof of Theorem \ref{prob-rev-field} modulo Lemma~\ref{cos-approx1} below.} Consider 4 cases:

\textit{Case 1:} $\frac{t}{\varepsilon} \equiv_{4} 1$.
We have the following chain of equalities:
\begin{align*}
    &\sum\limits_{x \in \varepsilon\mathbb{Z}}a_1^2(x,t,m,\varepsilon,u_{\varepsilon})= \sum\limits_{x \in\varepsilon\mathbb{Z}}\left|a_1(x,t,m,\varepsilon, u_{\varepsilon})\right|^2 = \\
    &=
    \frac{\varepsilon}{2\pi}\int\limits_{-\pi/\varepsilon}^{\pi/\varepsilon} \left|\hat{a}_1(p,t)\right|^2 d p =   \frac{\varepsilon}{2\pi}\int\limits_{-\pi/\varepsilon}^{\pi/\varepsilon} \frac{4m^2\varepsilon^2 \sin^2 \omega_p(t-\varepsilon) \sin^2 p\varepsilon}{(1+m^2\varepsilon^2)^2 \cos^2 2\omega_p\varepsilon} d p= \\
      &=
    \frac{\varepsilon}{2\pi}\int\limits_{-\pi/\varepsilon}^{\pi/\varepsilon} \frac{2m^2\varepsilon^2 \sin^2 \omega_p(t-\varepsilon)}{(1+m^2\varepsilon^2)^2 \cos^2 2\omega_p\varepsilon} d p =  \frac{\varepsilon}{2\pi}\int\limits_{-\pi/\varepsilon}^{\pi/\varepsilon} \frac{m^2\varepsilon^2  d p}{(1+m^2\varepsilon^2)^2 \cos^2 2\omega_p\varepsilon} -\\
    &-\frac{\varepsilon}{2\pi}\int\limits_{-\pi/\varepsilon}^{\pi/\varepsilon} \frac{m^2\varepsilon^2\cos 2 \bluevar{\omega_p(t-\varepsilon)} d p}{(1+m^2\varepsilon^2)^2 \cos^2 2\omega_p\varepsilon} = \frac{\varepsilon}{2\pi}\int\limits_{-\pi/\varepsilon}^{\pi/\varepsilon} \frac{m^2\varepsilon^2d p}{(1+m^2\varepsilon^2)^2 \cos^2 2\omega_p\varepsilon}  + \\
    &+O_{m,\varepsilon}\left(t^{-1/3}\right)=\frac{2\varepsilon}{\pi}\int\limits_{-\pi/4\varepsilon}^{\pi/4\varepsilon} \frac{m^2\varepsilon^2d p}{(1+m^2\varepsilon^2)^2 - \sin^2 2p\varepsilon}  + O_{m,\varepsilon}\left(t^{-1/3}\right) 
    =\\
    &=\frac{m\varepsilon}{(1+m^2\varepsilon^2)\sqrt{2+m^2\varepsilon^2}} + O_{m,\varepsilon}\left(t^{-1/3}\right).
\end{align*}
The first equality holds because $a_1(x,t,m,\varepsilon, u_{\varepsilon}) \in \mathbb{R}$. The second equality follows from Proposition~\ref{IntForm}, notation \eqref{Fourier_form}, and Parseval's theorem. The third one follows from the definition of $\hat{a}_1(p,t)$. The fourth one follows from the power-reduction formula and the computation $\omega_{p +\pi/2\varepsilon} = \frac{1}{2\varepsilon}\arcsin \frac{\sin{(2p\varepsilon + \pi)}}{1+m^2\varepsilon^2} = -\omega_p$. The fifth one follows from the power-reduction formula and the linearity of the integral. The sixth one follows from Lemma \ref{cos-approx1}, which we prove below. The seventh one follows from $\omega_p=-\omega_{p+\pi/2\varepsilon}$ and the definition of $\omega_p$. The eighth one is a direct computation performed in \cite[Section 3]{computations}.

\textit{Case 2:} $\frac{t}{\varepsilon} \equiv_{4} 2$.
We have the following chain of equalities:
\begin{align*}
    &\sum\limits_{x \in \varepsilon\mathbb{Z}}a_1^2(x,t,m,\varepsilon,u_{\varepsilon})= \sum\limits_{x \in\varepsilon\mathbb{Z}}\left|a_1(x,t,m,\varepsilon, u_{\varepsilon})\right|^2 =\\
    &=
    \frac{\varepsilon}{2\pi}\int\limits_{-\pi/\varepsilon}^{\pi/\varepsilon} \left|\hat{a}_1(p,t) \right|^2d p =   \frac{\varepsilon}{2\pi}\int\limits_{-\pi/\varepsilon}^{\pi/\varepsilon} \frac{m^2\varepsilon^2 (\sin^2 \omega_p(t-2\varepsilon) +\cos^2 \omega_pt)}{(1+m^2\varepsilon^2) \cos^2 2\omega_p\varepsilon} d p=
\end{align*}
\begin{align*}
      &=
    \frac{2\varepsilon}{\pi}\left(\int\limits_{-\pi/4\varepsilon}^{\pi/4\varepsilon} \frac{m^2\varepsilon^2 d p }{(1+m^2\varepsilon^2) \cos^2 2\omega_p\varepsilon}  - \int\limits_{-\pi/4\varepsilon}^{\pi/4\varepsilon} \frac{m^2\varepsilon^2 \cos2 \omega_p(t-2\varepsilon) d p}{2(1+m^2\varepsilon^2) \cos^2 2\omega_p\varepsilon}  + \right. \\
    &+ \left. \int\limits_{-\pi/4\varepsilon}^{\pi/4\varepsilon} \frac{m^2\varepsilon^2 \cos 2\omega_pt \,d p}{2(1+m^2\varepsilon^2) \cos^2 2\omega_p\varepsilon}  \right) =\\
    &=\frac{2\varepsilon}{\pi}\int\limits_{-\pi/4\varepsilon}^{\pi/4\varepsilon} \frac{(1+m^2\varepsilon^2) m^2\varepsilon^2 d p}{(1+m^2\varepsilon^2)^2 - \sin^2 2p\varepsilon}+O_{m,\varepsilon}\left(t^{-1/3}\right)=\frac{m\varepsilon}{\sqrt{2+m^2\varepsilon^2}} + O_{m,\varepsilon}\left(t^{-1/3}\right).
\end{align*}
\normalsize
The first two equalities are analogous to Case 1. The third one follows from notation \eqref{Fourier_form}. The fourth one follows from the power-reduction formula and the computation $\omega_{p +\pi/2\varepsilon} = \frac{1}{2\varepsilon}\arcsin \frac{\sin{(2p\varepsilon + \pi)}}{1+m^2\varepsilon^2} = -\omega_p$. The fifth one follows from Lemma \ref{cos-approx1}, which we prove later, and the definition of $\omega_p$. The sixth equality is analogous to the seventh one in Case 1.

\textit{Cases $\frac{t}{\varepsilon}\equiv_{4} 3$ and $\frac{t}{\varepsilon}\equiv_{4} 0$} are analogous to Cases 1 and 2 respectively, only $\sin \omega_p(t-k\varepsilon)$ is replaced by $\cos \omega_p(t-k\varepsilon)$, where $k = 1$ and $2$, and vice versa. 
To finish the proof it remains to prove Lemma~\ref{cos-approx1}, which is a rough estimate in the spirit of \bluer{the} stationary phase method \cite[Lemma 3]{article}. $\square$
\begin{lemma}
\label{cos-approx1} 
Assume $m,\varepsilon,t >0$, then 
\begin{align*}
    \int\limits_{-\pi/4\varepsilon}^{\pi/4\varepsilon} \frac{\cos2 \omega_pt}{\cos^2 2\omega_p\varepsilon} d p = O_{m,\varepsilon}\left(t^{-1/3}\right).
\end{align*}
\end{lemma}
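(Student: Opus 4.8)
The plan is to treat the integral as an oscillatory integral $\int_{-\pi/4\varepsilon}^{\pi/4\varepsilon} w(p)\cos(2\omega_p t)\,dp$ with phase $2\omega_p t$ and amplitude $w(p):=1/\cos^2 2\omega_p\varepsilon$, and to estimate it in the van der Corput / stationary phase spirit of \cite[Lemma 3]{article}. First I would record two structural facts. Since $\sin 2\omega_p\varepsilon=\sin(2p\varepsilon)/(1+m^2\varepsilon^2)$ on the whole interval, we have $\cos^2 2\omega_p\varepsilon = 1-\sin^2 2p\varepsilon/(1+m^2\varepsilon^2)^2 \ge ((1+m^2\varepsilon^2)^2-1)/(1+m^2\varepsilon^2)^2>0$, so the amplitude $w$ is smooth and bounded, together with its derivative, on $[-\pi/4\varepsilon,\pi/4\varepsilon]$ by a constant depending only on $m,\varepsilon$. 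Second, from $\omega_p'=\cos 2p\varepsilon/((1+m^2\varepsilon^2)\cos 2\omega_p\varepsilon)$ the phase derivative $2\omega_p'$ is positive in the open interval and vanishes exactly at the endpoints $p=\pm\pi/4\varepsilon$, where $\cos 2p\varepsilon=0$; moreover it vanishes there only to first order, so these are nondegenerate boundary stationary points.

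Next I would split the interval into the two $\delta$-neighbourhoods of the endpoints $\pm\pi/4\varepsilon$ and the complementary middle part, with the choice $\delta=t^{-1/3}$. On the endpoint neighbourhoods I would bound the integrand trivially by $\max|w|$, so their contribution is $O_{m,\varepsilon}(\delta)=O_{m,\varepsilon}(t^{-1/3})$. On the middle part the first-order vanishing of $\omega_p'$ gives $|\omega_p'|\ge c(m,\varepsilon)\,\delta$ uniformly, so the phase is genuinely non-stationary there. Splitting the middle part at the unique maximum of $\omega_p'$ to make $\omega_p'$ monotone on each piece, I would integrate by parts once (equivalently apply van der Corput's first-derivative test), writing the middle integral as boundary terms of the form $w\sin(2\omega_p t)/(2t\,\omega_p')$ plus $\tfrac1t\int (w/2\omega_p')'\sin(2\omega_p t)\,dp$; both are controlled by $1/(t\,\min|\omega_p'|)=O_{m,\varepsilon}(1/(t\delta))=O_{m,\varepsilon}(t^{-2/3})$. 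Adding the two contributions yields the claimed $O_{m,\varepsilon}(t^{-1/3})$ (the choice $\delta=t^{-1/3}$ is deliberately not optimal—balancing would give $t^{-1/2}$—but the weaker bound is all that is needed, and it matches the Airy-scale error terms elsewhere).

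The main obstacle is the interaction of the integration by parts with the endpoints: the factor $1/\omega_p'$ blows up like $1/s$ and its derivative like $1/s^2$ as one approaches an endpoint at distance $s$, so one must check that truncating at distance $\delta$ keeps both the boundary term $w/(2t\,\omega_p')$ and the tail integral $\tfrac1t\int_\delta|(w/2\omega_p')'|$ of size $O(1/(t\delta))$; the linear vanishing of $\omega_p'$ and the monotone splitting are exactly what make these go through. A cleaner alternative, which I would mention, is to substitute $\omega=\omega_p$, a diffeomorphism of the open interval onto $(-\Omega,\Omega)$ with $\Omega:=\frac{1}{2\varepsilon}\arcsin\frac{1}{1+m^2\varepsilon^2}$; since $dp=d\omega/\omega_p'$ and $\cos 2p\varepsilon=\sqrt{1-(1+m^2\varepsilon^2)^2\sin^2 2\omega\varepsilon}$, the integral becomes $(1+m^2\varepsilon^2)\int_{-\Omega}^{\Omega}\frac{\cos 2\omega t}{\cos 2\omega\varepsilon\,\sqrt{1-(1+m^2\varepsilon^2)^2\sin^2 2\omega\varepsilon}}\,d\omega$, a pure Fourier integral whose amplitude has integrable inverse-square-root singularities at $\pm\Omega$; the $O(t^{-1/3})$ decay then follows from the standard Fourier-decay estimate for endpoint square-root singularities, again by cutting out $\delta$-neighbourhoods of $\pm\Omega$.
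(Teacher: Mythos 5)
Your proposal is correct and follows essentially the same route as the paper's proof: both split the integral at distance $t^{-1/3}$ from the endpoints $\pm\pi/4\varepsilon$ (where $\omega_p'$ vanishes), bound the endpoint pieces trivially by the uniform bound $\frac{(1+m^2\varepsilon^2)^2}{(1+m^2\varepsilon^2)^2-1}$ on the integrand, and handle the middle piece by a single integration by parts after dividing by $\omega_p'$, checking that the boundary term and the integral of $\bigl(1/(\omega_p'\cos^2 2\omega_p\varepsilon)\bigr)'$ stay of size $O_{m,\varepsilon}(t^{-1/3})$ thanks to the first-order vanishing of $\cos 2p\varepsilon$ at the endpoints. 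The only cosmetic differences (splitting at the maximum of $\omega_p'$ for monotonicity, and the alternative substitution $\omega=\omega_p$) do not change the substance of the argument.
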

\begin{proof}
Take $t > (\frac{4\varepsilon}{\pi})^3$, otherwise Lemma is obvious, because the integrand is bounded from above:
\begin{align*}
    \left|\frac{\cos2 \omega_pt}{\cos^2 2\omega_p\varepsilon}\right| \le \left|\frac{1}{1 - \frac{\sin^2 2p\varepsilon}{(1+m^2\varepsilon^2)^2}}\right| \le \frac{(1+m^2\varepsilon^2)^2}{(1+m^2\varepsilon^2)^2 - 1}.
\end{align*}
\normalsize
Let us split the integral into three parts $I_1, I_2, I_3$:
\begin{align*}
    I_1 := \hspace{-0.5cm}\int\limits_{-\pi/4\varepsilon}^{-\pi/4\varepsilon + t^{-1/3}} \hspace{-0.5cm}\frac{\cos2 \omega_pt}{\cos^2 2\omega_p\varepsilon} d p,~ 
    I_2 := \hspace{-0.5cm}\int\limits_{-\pi/4\varepsilon+t^{-1/3}}^{\pi/4\varepsilon-t^{-1/3}} \hspace{-0.5cm}\frac{\cos2 \omega_pt}{\cos^2 2\omega_p\varepsilon} d p,~
    I_3 := \hspace{-0.5cm}\int\limits_{\pi/4\varepsilon-t^{-1/3}}^{\pi/4\varepsilon} \hspace{-0.5cm}\frac{\cos2 \omega_pt}{\cos^2 2\omega_p\varepsilon} d p.
\end{align*}
Note that $I_1 + I_3 = O_{m,\varepsilon}(t^{-1/3})$, because integrand is bounded from above by $\frac{(1+m^2\varepsilon^2)^2}{(1+m^2\varepsilon^2)^2 - 1}$.

Let us find an asymptotic formula for $I_2$. Multiply and divide the integrand by $\omega_p^{\prime}$ and integrate by parts. This operation is possible because $\omega_p^{\prime} = \frac{\cos 2p\varepsilon}{(1+m^2\varepsilon^2)\cos2\omega_p\varepsilon} $ does not vanish on $[-\pi/4\varepsilon + t^{-1/3}, \pi/4\varepsilon - t^{-1/3}]$.

We get
\begin{align}
\label{integr}
    &I_2 = \left.\frac{\sin2\omega_pt}{2t\omega_p^{\prime}\cos^2 2\omega_p\varepsilon }\right|^{\pi/4\varepsilon-t^{-1/3}}_{-\pi/4\varepsilon+t^{-1/3}} - \int\limits^{\pi/4\varepsilon-t^{-1/3}}_{-\pi/4\varepsilon+t^{-1/3}} \frac{\sin2\omega_pt}{2t}\cdot\left(\frac{1}{\omega_p^{\prime}\cos^2 2\omega_p\varepsilon}\right)^{\prime} dp
\end{align}
The first term in the right side of \eqref{integr} is estimated as follows:
\begin{align*}
      & \left.\frac{\sin2\omega_pt}{2t\omega_p^{\prime}\cos^2 2\omega_p\varepsilon }\right|^{\pi/4\varepsilon-t^{-1/3}}_{-\pi/4\varepsilon+t^{-1/3}} = \left.\frac{(1+m^2\varepsilon^2)^2 \sin2\omega_{p}t}{t\cos2p\varepsilon\sqrt{(1+m^2\varepsilon^2)^2 - \sin^2 2p\varepsilon}}\right|_{p=-\pi/4\varepsilon+t^{-1/3}} \le\\
      & \le \frac{(1+m^2\varepsilon^2)^2}{2t\sin(2\varepsilon t^{-1/3})\sqrt{(1+m^2\varepsilon^2)^2 - 1}} \le \frac{(1+m^2\varepsilon^2)^2}{2\varepsilon t^{2/3}\sqrt{(1+m^2\varepsilon^2)^2 - 1}} = O_{m,\varepsilon}\left(t^{-2/3}\right).
\end{align*}
Here the first equality follows from the definition of $\omega_p^{\prime}$. The second one holds because the sine is bounded. The third one follows from $|\sin{2\varepsilon t^{-1/3}}| \ge |\varepsilon t^{-1/3}|$ for $t > \left(\frac{4\varepsilon}{\pi}\right)^3$. Note, that we made an assumption that the first term in the right side of \eqref{integr}
is non-negative. The case of non-positive term is analogous.

To estimate the second term in the right side of the \eqref{integr},
note that
\begin{align*}
    &\left(\frac{1}{\omega_p^{\prime}\cos^2 2\omega_p\varepsilon}\right)^{\prime} = \left(\frac{1+m^2\varepsilon^2}{\cos 2p\varepsilon\cos 2\omega_p\varepsilon}\right)^{\prime}= \\
    &=2\varepsilon(1+m^2\varepsilon^2)\cdot\frac{\sin2p\varepsilon}{\cos^2 2p\varepsilon\cos 2\omega_p\varepsilon} + 2\varepsilon \frac{\sin2\omega_p\varepsilon}{\cos^3 2\omega_p\varepsilon}.
\end{align*}
Since $|\cos 2\omega_p\varepsilon| \ge \frac{(1+m^2\varepsilon^2)^2}{(1+m^2\varepsilon^2)^2-1}$ and $$\cos^2{2p\varepsilon} \ge \cos^2{(\pi/2 - 2\varepsilon t^{-1/3})} = \sin^2(2\varepsilon t^{-1/3}) \ge \varepsilon^2t^{-2/3}$$ on $[-\pi/4\varepsilon + t^{-1/3}, \pi/4\varepsilon - t^{-1/3}]$ for $t \ge \left(\frac{4\varepsilon}{\pi}\right)^3$, it follows that $I_2 = O_{m,\varepsilon}\left(t^{-1/3}\right)$.

Now, adding the bounds for $I_1, I_2,I_3$, we obtain the required bound.
\end{proof}

\textbf{Proof of Theorem~\ref{chir-rev-f}}
It follows directly from Theorem~\ref{prob-rev-field}. $\square$

\section{Proof of Theorem \ref{Airy}}

In order to prove Theorem \ref{Airy} we need the following result, obtained by P. Zakorko.

\begin{theorem}[{\cite[Theorem 3]{Zakorko}}]
\label{Airy_Zak}
Let $f \colon \left[ -U,U\right] \times \left[0, A\right] \to \mathbb{R}$ be a real analytic function such that
\begin{itemize}
    \item[1.] $f'_u(0,0)=0, f''_{uu}(0,0)=0, f'''_{uuu}\ne 0, f''_{\alpha u}(0,0) \ne 0$;
    \item[2.] $f(u,\alpha) = -f(-u,\alpha)$ for each $u \in \left[ -U,U\right]$ and $\alpha \in \left[0,A\right]$;
    \item[3.] for each $\alpha \in \left(0, A\right]$ there are precisely two solutions $\pm u_0(\alpha)\in \left[-U, U\right]$ of the equation $f'_u(u,\alpha)=0$, where $u_0(\alpha) \in \left( 0, U\right)$, and we have $f''_{uu}(u_0(\alpha),\alpha)>0$;
    \item[4.] for $\alpha = 0$ there is only one solution $u_0(0)=0$ of the equation $f'_u(u,\alpha)=0$.
\end{itemize}
Let $g \in C^{\infty}\left[-U,U\right]$ be even, and $t$ be a positive real number. Then for each $\alpha \ge 0$
\begin{align*}
    \int\limits_{-U}^U g(u)e^{it f(u,\alpha)}du = \frac{2\pi g(u_0\bluevar{(\alpha)})}{t^{1/3}h(\alpha)}\mathrm{Ai}\left(-\left(-\frac{3}{2}f(u_0(\alpha),\alpha) t\right)^{2/3}\right) + O_{f,g}\left(\frac{1}{t}\right),
\end{align*}
where 
$$h(\alpha) := 
\begin{dcases}
    \frac{f''_{uu}(u_0(\alpha),\alpha)^{1/2}}{(-12f(u_0(\alpha),\alpha))^{1/6}}, &\text{ if } 0 < \alpha \le A,\\
     \left(\frac{1}{2}f'''_{uuu}(0,0)\right)^{1/3}, &\text{if } \alpha = 0.
\end{dcases}$$
\end{theorem}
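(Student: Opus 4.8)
The plan is to prove this by the uniform stationary-phase method of Chester--Friedman--Ursell for two coalescing saddles, reducing the integral to the Airy integral through a cubic normal form. The geometry forced by hypotheses 1--4 is exactly the coalescence scenario: for $\alpha>0$ the phase $f(\cdot,\alpha)$ has two nondegenerate critical points $\pm u_0(\alpha)$ which merge into a single cubic critical point at $u=0$ as $\alpha\searrow 0$ (this is why the usual $t^{-1/2}$ scaling degenerates to $t^{-1/3}$). Since $f$ is odd in $u$, the natural model is the odd cubic $\phi(s):=\tfrac{s^3}{3}-\zeta s$, whose critical points $s=\pm\sqrt\zeta$ and critical values $\mp\tfrac23\zeta^{3/2}$ should correspond to $\pm u_0(\alpha)$ and $f(\mp u_0(\alpha),\alpha)=\pm\tfrac23\zeta^{3/2}$. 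Matching critical values forces
\[
\zeta(\alpha):=\Bigl(-\tfrac32 f(u_0(\alpha),\alpha)\Bigr)^{2/3}\ge 0 ,
\]
and a Taylor computation at the coalescence, using $f'''_{uuu}(0,0)\ne0$ and $f''_{\alpha u}(0,0)\ne0$ from hypothesis 1, shows $u_0(\alpha)\sim c\,\alpha^{1/2}$, hence $f(u_0(\alpha),\alpha)\sim c'\alpha^{3/2}$ and $\zeta(\alpha)\sim c''\alpha$ with $\zeta'(0)\ne0$; in particular $\zeta$ extends analytically through $\alpha=0$.

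The first main step is to construct an analytic change of variable $u\mapsto s(u,\alpha)$, odd in $u$, defined on all of $[-U,U]$ and depending analytically on $\alpha\in[0,A]$ uniformly through $\alpha=0$, with
\[
f(u,\alpha)=\frac{s(u,\alpha)^3}{3}-\zeta(\alpha)\,s(u,\alpha),\qquad s(\pm u_0(\alpha),\alpha)=\pm\sqrt{\zeta(\alpha)} .
\]
Here oddness of both $f$ and $\phi$ is what removes the affine term $\eta$ of the general normal form. Because both sides are odd and monotone with matching critical points, the map is a well-defined increasing homeomorphism by comparison of level sets; away from $\pm u_0(\alpha)$ its smoothness is the implicit function theorem, while at the (possibly merged) critical points the existence of a smooth non-vanishing $\partial s/\partial u$ is precisely the Chester--Friedman--Ursell content, obtained from contour-integral formulas for the coefficients. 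Differentiating the normal form twice at $u=u_0(\alpha)$ gives $f''_{uu}(u_0,\alpha)=2\sqrt\zeta\,(\partial s/\partial u)^2$, so a short computation using $\zeta=(-\tfrac32 f)^{2/3}$ yields
\[
\frac{\partial u}{\partial s}\Big|_{s=\sqrt\zeta}=\sqrt{\frac{2\sqrt\zeta}{f''_{uu}(u_0,\alpha)}}=\frac{(-12 f(u_0(\alpha),\alpha))^{1/6}}{f''_{uu}(u_0(\alpha),\alpha)^{1/2}}=\frac{1}{h(\alpha)} ,
\]
and the same Taylor expansion shows this Jacobian tends to $\bigl(\tfrac12 f'''_{uuu}(0,0)\bigr)^{-1/3}$ as $\alpha\searrow0$, the continuous extension giving the stated $h(0)$. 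I expect this normal-form step, and above all the uniformity of every subsequent estimate as $\alpha\searrow0$, to be the main obstacle.

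With the normal form in hand I substitute $u=u(s,\alpha)$ to get $\int_{-U}^{U}g(u)e^{itf}\,du=\int_{-S}^{S}G(s,\alpha)e^{it\phi(s)}\,ds$, where $S=s(U,\alpha)$ and $G(s,\alpha):=g(u(s,\alpha))\,\partial u/\partial s$ is \emph{even} in $s$ (since $g$ is even and $u(\cdot,\alpha)$ odd). Because $G$ is even, its interpolant at the two critical points is constant, and Hadamard's lemma (applied in the variable $\sigma=s^2$) gives
\[
G(s,\alpha)=P(\alpha)+(s^2-\zeta)\,H(s,\alpha),\qquad P(\alpha):=G(\sqrt\zeta,\alpha)=\frac{g(u_0(\alpha))}{h(\alpha)},
\]
with $H$ smooth and uniformly bounded in $s,\alpha$. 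For the leading term I substitute $s=t^{-1/3}k$ and complete $[-S,S]$ to the whole line, using $\int_{-\infty}^{\infty}e^{i(k^3/3-\zeta t^{2/3}k)}\,dk=2\pi\,\mathrm{Ai}(-\zeta t^{2/3})$ together with $\zeta t^{2/3}=(-\tfrac32 f(u_0,\alpha)t)^{2/3}$; the added tails $\int_{|s|>S}$ carry no stationary point (there $\phi'=s^2-\zeta$ is bounded away from $0$, uniformly in $\alpha$ since $S-\sqrt\zeta$ is bounded below), so integration by parts bounds them by $O(1/t)$.

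For the remainder I use $(s^2-\zeta)e^{it\phi}=\tfrac1{it}\tfrac{d}{ds}e^{it\phi}$ and integrate by parts once: the boundary terms are $O(1/t)$, and the leftover $\tfrac1{it}\int_{-S}^{S}H'(s,\alpha)e^{it\phi}\,ds$ is $O(1/t)$ simply because $|e^{it\phi}|=1$ and $\int|H'|\,ds$ is bounded uniformly in $\alpha$. Collecting the pieces gives $\int_{-U}^{U}g(u)e^{itf}\,du=P(\alpha)\cdot 2\pi t^{-1/3}\mathrm{Ai}(-\zeta t^{2/3})+O_{f,g}(1/t)$, which, upon inserting $P(\alpha)=g(u_0(\alpha))/h(\alpha)$ and $\zeta t^{2/3}=(-\tfrac32 f(u_0(\alpha),\alpha)t)^{2/3}$, is the claimed formula. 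The only delicate point throughout is to carry out each of these steps with constants independent of $\alpha$ right down to the coalescence $\alpha=0$, which again reduces to the analyticity and uniform control of the normal-form map $s(u,\alpha)$ established in the first step.
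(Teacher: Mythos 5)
This statement is not proved in the paper at all: it is quoted verbatim from Zakorko's preprint \cite{Zakorko} (which the author explicitly says is not publicly available, with an alternative proof only expected to appear in \cite{Skopenkov-Ustinov-Kuyanov-Drmota-23}), and the paper merely \emph{uses} it, via Lemma \ref{con_check}, to deduce Theorem \ref{Airy}. So there is no in-paper proof to compare against; your proposal must be judged on its own merits, with the only hint being that Zakorko follows the uniform-Airy method of Anikin et al.\ \cite{Anikin2019} --- which is exactly the Chester--Friedman--Ursell route you take, so your approach is fully in the spirit of the intended proof.

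On the merits, your outline is correct and well organized. The bookkeeping checks out: oddness kills the affine term of the cubic normal form, matching critical values forces $\zeta=\bigl(-\tfrac32 f(u_0(\alpha),\alpha)\bigr)^{2/3}$ (note $f(u_0,\alpha)<0$ follows from hypothesis 3, since $f'_u<0$ on $(-u_0,u_0)$ and $f(0,\alpha)=0$, so the fractional powers are legitimate); differentiating the normal form twice at $u_0$ gives $f''_{uu}=2\sqrt\zeta\,(\partial s/\partial u)^2$ and hence $\partial u/\partial s=(-12f)^{1/6}/(f''_{uu})^{1/2}=1/h(\alpha)$, with the correct limit $(\tfrac12 f'''_{uuu}(0,0))^{-1/3}$ at $\alpha=0$; the even-function splitting $G=P(\alpha)+(s^2-\zeta)H$ via Hadamard in $\sigma=s^2$ (using Whitney's theorem that an even smooth function is smooth in $s^2$) is the right replacement for the usual two-point interpolant; and the tail and remainder estimates ($\phi'=s^2-\zeta$ bounded away from $0$ for $|s|>S$ uniformly in $\alpha$, since $\max_{[0,A]}u_0<U$ by continuity and compactness, plus one integration by parts for the $(s^2-\zeta)H$ term) each give a uniform $O_{f,g}(1/t)$. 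The one genuinely hard step --- existence of the change of variables $s(u,\alpha)$, \emph{analytic jointly in $(u,\alpha)$ with nonvanishing $\partial s/\partial u$ through the coalescence $\alpha=0$, globally on $[-U,U]$} --- you invoke as ``the Chester--Friedman--Ursell content'' rather than prove; away from the critical points the level-set matching needs the observation that $f$ and $\phi$ have the same monotonicity pattern and that the ranges match at the endpoint (solve $\phi(S)=f(U,\alpha)$ with $S>\sqrt\zeta$), and at the merging saddles the uniform regularity is precisely the nontrivial kernel of CFU. Since that result is classical and you identify it correctly as the crux, this is an acceptable citation rather than a gap, but a complete write-up would need either the contour-integral coefficient formulas you allude to or a Malgrange-preparation argument, together with a word on the sign constraint (hypothesis 3 implicitly forces $f'''_{uuu}(0,0)>0$, so that $h(0)$ is a real cube root and $\zeta'(0)>0$).
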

\bluer{Here $O_{f,g}\left(\frac{1}{t}\right)$ for two functions $f,g$ is defined analogously to $O_{m,\varepsilon}\left(\frac{1}{t}\right)$.}
We have the following lemma.

\begin{lemma}
\label{con_check}
    For $u \in \left[-\pi/2\varepsilon, \pi/2\varepsilon \right]$ and $\alpha \in \left[0, \frac{1}{1+m^2\varepsilon^2}\right]$ the function $$f(u,\alpha) := u\left(\frac{1}{1+m^2\varepsilon^2} - \alpha\right) - \frac{1}{2\varepsilon}\arcsin \frac{\sin 2u\varepsilon}{1+m^2\varepsilon^2}$$ satisfies conditions 1--4 of Theorem \ref{Airy_Zak}.
\end{lemma}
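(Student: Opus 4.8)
The plan is to verify conditions 1--4 directly by computing the derivatives $f'_u,f''_{uu},f'''_{uuu},f''_{\alpha u}$ explicitly and then analyzing the single-variable function to which the critical-point equation reduces. Throughout I abbreviate $c := 1 + m^2\varepsilon^2 > 1$, so that $A = 1/c$ and $U = \pi/2\varepsilon$, and I observe that $f(u,\alpha) = \left(\frac1c - \alpha\right)u - \omega_u$, where $\omega_u$ is the dispersion relation $\omega_p$ of Proposition~\ref{IntForm} evaluated at $p=u$. First I would record that $f$ is real analytic on $[-U,U]\times[0,A]$: the argument of the arcsine satisfies $|\sin 2u\varepsilon|/c \le 1/c < 1$, hence stays strictly inside $(-1,1)$ where $\arcsin$ is analytic. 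Condition 2 is then immediate, since $u\mapsto \left(\frac1c-\alpha\right)u$ is odd and $\arcsin\frac{\sin 2u\varepsilon}{c}$ is odd in $u$, being the composition of the odd functions $\arcsin$ and $\sin$.

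The computational heart is the formula $f'_u(u,\alpha) = \left(\frac1c - \alpha\right) - g(u)$, where $g(u) := \frac{\cos 2u\varepsilon}{\sqrt{c^2 - \sin^2 2u\varepsilon}}$; this $g(u)$ is precisely $\omega_u'$ from Proposition~\ref{IntForm}. From this, $f''_{\alpha u} \equiv -1 \ne 0$ immediately, and $g$ is even, so $f'_u$ is even in $u$. Differentiating gives $g'(u) = -\frac{2\varepsilon(c^2-1)\sin 2u\varepsilon}{(c^2 - \sin^2 2u\varepsilon)^{3/2}}$, which I would use repeatedly. Evaluating at the origin yields $g(0) = 1/c$, $g'(0) = 0$, and (differentiating once more, noting the numerator of $g'$ vanishes at $0$) $g''(0) = -\frac{4\varepsilon^2(c^2-1)}{c^3}$. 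Hence $f'_u(0,0) = 0$, $f''_{uu}(0,0) = -g'(0) = 0$, and $f'''_{uuu}(0,0) = -g''(0) = \frac{4\varepsilon^2(c^2-1)}{c^3} > 0$ since $c > 1$; together with $f''_{\alpha u}(0,0) = -1$ this establishes condition 1.

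For conditions 3 and 4 I would analyze $g$ on $[0,U]$. Since $g'(u) < 0$ for $u \in (0,U)$ (its sign is that of $-\sin 2u\varepsilon$), $g$ strictly decreases from $g(0) = 1/c$ to $g(U) = -1/c$, passing through $g(\pi/4\varepsilon)=0$. The critical-point equation $f'_u(u,\alpha)=0$ reads $g(u) = 1/c - \alpha$. For $\alpha = 0$ the value $1/c$ is attained by the strictly decreasing $g$ only at $u=0$, giving the unique solution $u_0(0)=0$ (condition 4). For $\alpha \in (0, 1/c]$ the target $1/c-\alpha$ lies in $[0, 1/c)$, so by strict monotonicity and the intermediate value theorem there is a unique $u_0(\alpha) \in (0,\pi/4\varepsilon] \subset (0, U)$ with $g(u_0(\alpha)) = 1/c-\alpha$, hence strictly interior; evenness of $f'_u$ then gives the second solution $-u_0(\alpha)$, while $u=0$ is excluded because $f'_u(0,\alpha) = -\alpha \ne 0$, so the solutions are exactly $\pm u_0(\alpha)$. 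Finally $f''_{uu}(u_0(\alpha),\alpha) = -g'(u_0(\alpha)) > 0$ because $\sin 2u_0(\alpha)\varepsilon > 0$, completing condition 3.

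The work is a sequence of elementary derivative computations, so there is no deep obstacle; the one place demanding care is condition 3, where I must simultaneously guarantee that the positive critical point is unique, lies in the open interval $(0,U)$, and carries a positive second derivative. The strict monotonicity of $g$ (equivalently, the one-signed $g'$) underlies all three of these claims, so I would isolate and prove it first.
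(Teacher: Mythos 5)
Your proposal is correct, and its computational core coincides with the paper's: both verify conditions 1--4 by computing $f'_u$, $f''_{uu}$, $f'''_{uuu}$, $f''_{u\alpha}$ explicitly (your $g(u)=\cos 2u\varepsilon/\sqrt{c^2-\sin^2 2u\varepsilon}$ is exactly the paper's $\frac{1}{1+m^2\varepsilon^2}\cdot\frac{\cos 2u\varepsilon}{\cos 2\omega_u\varepsilon}$, and your values $f'''_{uuu}(0,0)=4\varepsilon^2(c^2-1)/c^3$ and $f''_{u\alpha}=-1$ agree with the paper's formulas). The one genuine difference is in condition 3: the paper cites an explicit closed form $u_0(\alpha)=\frac{1}{2\varepsilon}\arccos\bigl(v\sqrt{(c^2-1)/(1-v^2)}\bigr)$, $v=\frac1c-\alpha$, obtained by inverting the critical-point equation (with all computations deferred to an external notebook), and then remarks that arccosine is a bijection of $[0,1]$ onto $[0,\pi/2]$; you instead prove strict monotonicity of $g=\omega'_u$ on $(0,U)$ from the sign of $g'$ and invoke the intermediate value theorem. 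Your route is more self-contained and actually makes the "precisely two solutions $\pm u_0(\alpha)$, both interior, with $f''_{uu}>0$ there" claim airtight, whereas the paper's bijection remark leaves uniqueness somewhat implicit; on the other hand, the paper's explicit arccosine formula for $u_0(\alpha)$ is not wasted effort, since it is what allows the asymptotics of Theorem \ref{Airy} to be written in the closed form involving $\tilde{\theta}(x/t)$, so it must be produced at some point anyway. You also check real analyticity of $f$ (the arcsine argument stays in $[-1/c,1/c]\subset(-1,1)$), a hypothesis of Theorem \ref{Airy_Zak} that the paper passes over in silence.
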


\begin{proof}
    The direct computations in \cite[Section 5]{computations} show that
    \begin{align*}
        f^{\prime}_u(u,\alpha) &= \frac{1}{1+m^2\varepsilon^2}-\alpha - \frac{1}{1+m^2\varepsilon^2}\cdot\frac{\cos 2u\varepsilon}{\cos 2\omega_u \varepsilon};\\
        f^{\prime\prime}_{uu}(u,\alpha) &= \frac{(1+m^2\varepsilon^2)^2 -1 }{(1+m^2\varepsilon^2)^3}\cdot\frac{2\varepsilon \sin 2u\varepsilon}{\cos^3 2\omega_u\varepsilon}; \\
         f^{\prime\prime\prime}_{uuu}(u,\alpha) &= \frac{(1+m^2\varepsilon^2)^2 -1 }{(1+m^2\varepsilon^2)^5}\cdot\frac{4\varepsilon^2 \cos 2u\varepsilon \left((1+m^2\varepsilon^2)^2 + 2\sin^2 2u\varepsilon\right)}{\cos^{5} 2\omega_u\varepsilon}; \\
         f^{\prime\prime}_{u\alpha} &= -1;\\
         \pm u_0(\alpha) &= \pm\frac{1}{2\varepsilon}\arccos\left( v \sqrt{\frac{(1+m^2\varepsilon^2)^2-1}{1-v^2}}\right).
    \end{align*}
    Here $v:= \frac{1}{1+m^2\varepsilon^2} - \alpha$. So, condition 1 is satisfied, since $\omega_0 = 0$. Condition 2 is satisfied since $\omega_u = \frac{1}{2\varepsilon}\arcsin \frac{\sin 2\bluevar{u}\varepsilon}{1+m^2\varepsilon^2} = - \frac{1}{2\varepsilon}\arcsin \frac{\sin (-2\bluevar{u}\varepsilon)}{1+m^2\varepsilon^2} = -\omega_{-u}$. Condition 3 holds since arccosine is a bijection between $[0,1]$ and $[0, \pi/2]$ and $f^{\prime\prime}_{uu}(u,\alpha) > 0 $ for $0<u,\alpha\bluevar{<\frac{\pi}{2\varepsilon}}$. Condition 4 is obvious.
\end{proof}

\textbf{Proof of Theorem \ref{Airy}}
We have the following chain of equalities:
\begin{align*}
    &a_1(x,t,m,\varepsilon, u_{\varepsilon}) = \\
    &=(-1)^{\frac{x}{2\varepsilon}+\left\lfloor \frac{x + t}{4\varepsilon}\right\rfloor}\frac{m\varepsilon^2}{2\pi\sqrt{1+m^2\varepsilon^2}}\int\limits^{\pi/\varepsilon}_{-\pi/\varepsilon} e^{iux} \frac{i\sin \omega_u (t-2\varepsilon) - \cos\omega_u t}{\cos 2\omega_u \varepsilon} du = \\
   &= (-1)^{\left\lfloor \frac{x + t+4\varepsilon}{4\varepsilon}\right\rfloor}\frac{m\varepsilon^2}{2\pi\sqrt{1+m^2\varepsilon^2}} \int\limits^{\pi/\varepsilon}_{-\pi/\varepsilon} \frac{e^{iux - \omega_u t}}{\cos 2\omega_u \varepsilon}du =\\
   &= (-1)^{\left\lfloor \frac{x + t+4\varepsilon}{4\varepsilon}\right\rfloor}\frac{m\varepsilon^2}{\pi\sqrt{1+m^2\varepsilon^2}} \int\limits^{\pi/2\varepsilon}_{-\pi/2\varepsilon} \frac{e^{iu(x/t - \omega_u) t}}{\cos 2\omega_u \varepsilon}du =\\
    &=(-1)^{\left\lfloor\frac{x+t+4\varepsilon}{4\varepsilon} \right\rfloor}\frac{\sqrt{2}m^{1/2}\varepsilon(1+m^2\varepsilon^2)^{1/2}\left(-12\tilde{\theta}(x/t)\right)^{1/6}}{((2 + m^2\varepsilon^2)(1-(x/t)^2(1+m^2\varepsilon^2)^2))^{1/4}}\cdot \\
    &\cdot\left(\frac{1}{t}\right)^{1/3} \mathrm{Ai }\left(-\left(-\frac{3}{2}\tilde{\theta} (x/t)t\right) ^{2/3}\right)+ O_{m,\varepsilon}\left(\frac{1}{t}\right).
\end{align*}
Here the first equality follows from Proposition \ref{IntForm}. \bluer{The} second and \bluevar{the} third equalities hold since $\omega_{\bluevar{u}} = -\omega_{\bluevar{u}+\pi/2\varepsilon}$ and $e^{i(\bluevar{u}+\pi/2\varepsilon)x}= e^{i\bluevar{u}x}$ for $x \in 4\varepsilon\mathbb{Z}$. The last one is application of Theorem \ref{Airy_Zak} and Lemma \ref{con_check} for $\alpha = \frac{1}{1+m^2\varepsilon^2} - \frac{x}{t}$ and $g(u) = \frac{1}{\cos 2\omega_{\bluevar{u}} \varepsilon}$. $\square$

\section{Acknowledgements}
The work was supported by Russian Science Foundation, \newline grant No 22-41-05001, https://rscf.ru/en/project/22-41-05001/. \\
\bluer{The author is grateful to A. Ustinov, F. Kuyanov, and M. Dmitriev for useful discussions. Especially author wants to thank his scientific supervisor M. Skopenkov for numerous remarks and invaluable help in improving the quality of this text.}

\appendix
\section{Second proof of Proposition~\ref{formula}}
\label{first-proof}
\textbf{Proof of Proposition~\ref{formula}.}
We prove it by induction on~$\xi+\eta$. \bluer{Let us denote $n:=1 + m^2\varepsilon^2$ in order to simplify formulae below.}\\

\textit{Base} of induction $\xi=\eta=0$ is obvious. 

\textit{Step.} We have: \par
 \begin{align*}
  &a_1(\xi - \eta + 1, \xi + \eta + 1, m,1 ,u_1) = \\
  &=\frac{a_1(\xi - \eta + 2, \xi + \eta, m,1, u_1) + ma_2(\xi - \eta + 2, \xi + \eta, m,1, u_1)}{\sqrt{n}}=\\
  &=\frac{(-1)^{\xi+1}}{n^{\frac{\xi + \eta}{2}}}\left(mn^{\delta_2(\xi\eta) }\sum_{j=0}^{\lfloor\frac{\xi}{2}\rfloor}{\lfloor\frac{\xi}{2}\rfloor\choose j}{\lfloor\frac{\eta-2}{2}\rfloor\choose j}(1 - n^2)^j + \right.\\
 &+\left.m\sum_{j=0}^{\lfloor\frac{\xi}{2}\rfloor}\left( {\lfloor\frac{\eta-1}{2}\rfloor\choose j}n^{\delta_2(\xi(\eta-1))} - {\lfloor\frac{\eta-2}{2}\rfloor\choose j}n^{\delta_2(\xi\eta) }\right){\lfloor\frac{\xi}{2}\rfloor\choose j}(1 - n^2)^j \right)=\\
  &=\frac{(-1)^{\xi+1}m}{n^{\frac{\xi+\eta}{2}}}n^{\delta_2((\eta+1)\xi)}\sum_{j=0}^{\lfloor\frac{\xi}{2}\rfloor}{\lfloor\frac{\xi}{2}\rfloor\choose j}{\lfloor\frac{\eta-1}{2}\rfloor\choose j}(1 - n^2)^j.
  \end{align*}
\normalsize
Here the first equality holds by Dirac's equation (Lemma~\ref{dirak}) and the definition of the field $u_{
\varepsilon}$, the second one holds by the inductive hypothesis, the third one is just an expansion in $(1 - n^2)^j$.

Let us perform the induction step for $a_2(\xi - \eta+1, \xi + \eta+1,m,1,u_1)$. Consider 2 cases:

\textit{Case 1:} $\eta \equiv_2 1$ or $\xi \equiv_2 1$. Then

\begin{align*}
&a_2(\xi - \eta + 1, \xi + \eta + 1, m,1 ,u_1) =\\
&=\frac{(-1)^{\eta+1}(-ma_1(\xi - \eta, \xi + \eta, m,1, u_1) + a_2(\xi - \eta, \xi + \eta, m,1, u_1))}{\sqrt{n}}=\\
&=\frac{(-1)^{\xi+\eta+1}}{n^{\frac{\xi+\eta}{2}}}\left(-m^2\sum_{j=0}^{\lfloor \frac{\xi-1}{2}\rfloor}{\lfloor\frac{\xi-1}{2}\rfloor\choose j}{\lfloor\frac{\eta-1}{2}\rfloor\choose j}(1 - n^2)^j  + \right.\\
&+\left.\sum_{j=0}^{\lfloor\frac{\xi-1}{2}\rfloor}\left( {\lfloor\frac{\eta}{2}\rfloor\choose j}n^{\delta_2((\xi-1)\eta)} - {\lfloor\frac{\eta-1}{2}\rfloor\choose j}\right){\lfloor\frac{\xi-1}{2}\rfloor\choose j}(1 - n^2)^j \right)=\\
&=\frac{(-1)^{\xi+\eta+1}}{n^{\frac{\xi+\eta}{2}}}\sum_{j=0}^{\lfloor\frac{\xi-1}{2}\rfloor}\left( {\lfloor\frac{\eta}{2}\rfloor\choose j}n^{\delta_2((\xi-1)\eta)} - n{\lfloor\frac{\eta-1}{2}\rfloor\choose j}\right){\lfloor\frac{\xi-1}{2}\rfloor\choose j}(1 - n^2)^j =\\
&=\frac{(-1)^{\xi+1}}{n^{\frac{\xi+\eta}{2}}}\sum_{j=0}^{\lfloor\frac{\xi}{2}\rfloor}\left( {\lfloor\frac{\eta}{2}\rfloor\choose j}n^{\delta_2(\xi\eta)} - {\lfloor\frac{\eta-1}{2}\rfloor\choose j}n^{\delta_2(\xi(\eta+1))}\right){\lfloor\frac{\xi}{2}\rfloor\choose j}(1 - n^2)^j.
\end{align*}
\normalsize
Here the first equality, again, follows from Dirac's equation (Lemma~\ref{dirak}) and the definition of the field $u_{\varepsilon}$, the second one holds by the inductive hypothesis, the third one is an expansion in $(1 - n^2)^j$. In order to check the fourth one, we need to consider 3 subcases.

\textit{Subcase 1:} $\xi \equiv_2 \eta \equiv_2 1$. Then the left hand side and right hand side are equal term-wise:
\begin{align*}
&\frac{-1}{n^{\frac{\xi+\eta}{2}}}\sum_{j=0}^{\frac{\xi-1}{2}}\left( {\frac{\eta-1}{2}\choose j} - n{\frac{\eta-1}{2}\choose j}\right){\frac{\xi-1}{2}\choose j}(1 - n^2)^j =\\
&=\frac{1}{n^{\frac{\xi+\eta}{2}}}\sum_{j=0}^{\frac{\xi-1}{2}}\left( {\frac{\eta-1}{2}\choose j}n - {\frac{\eta-1}{2}\choose j}\right){\frac{\xi-1}{2}\choose j}(1 - n^2)^j.
\end{align*}
\normalsize

\textit{Subcase 2:} $\xi - 1 \equiv_2 \eta \equiv_2 1$. Then
\begin{align*}
&\frac{1}{n^{\frac{\xi+\eta}{2}}}\sum_{j=0}^{\frac{\xi-2}{2}}\left( n{\frac{\eta-1}{2}\choose j} - n{\frac{\eta-1}{2}\choose j}\right){\frac{\xi-2}{2}\choose j}(1 - n^2)^j = \\
 &=0=\frac{-1}{n^{\frac{\xi+\eta}{2}}}\sum_{j=0}^{\frac{\xi}{2}}\left( {\frac{\eta-1}{2}\choose j} - {\frac{\eta-1}{2}\choose j}\right){\frac{\xi}{2}\choose j}(1 - n^2)^j.
\end{align*}
 
\normalsize
\textit{Subcase 3:} $\xi \equiv_2 \eta-1 \equiv_2 1$. Then
\begin{align*}
&\frac{1}{n^{\frac{\xi+\eta}{2}}}\sum_{j=0}^{\frac{\xi-1}{2}}\left( {\frac{\eta}{2}\choose j} - n{\frac{\eta-2}{2}\choose j}\right){\frac{\xi-1}{2}\choose j}(1 - n^2)^j =\\
&=\frac{1}{n^{\frac{\xi+\eta}{2}}}\sum_{j=0}^{\frac{\xi-1}{2}}\left( {\frac{\eta}{2}\choose j} - {\frac{\eta-2}{2}\choose j}n\right){\frac{\xi-1}{2}\choose j}(1 - n^2)^j.
\end{align*}
\normalsize
This completes the proof of the first case. 

\textit{Case 2:} $\eta \equiv_2 \xi \equiv_2 0$. Then
\begin{align*}
&a_2(\xi - \eta + 1, \xi + \eta + 1, m, 1, u_1) = \\
&=\frac{(-1)^{\eta+1}(-ma_1(\xi - \eta, \xi + \eta, m,1, u) + a_2(\xi - \eta, \xi + \eta, m,1, u))}{\sqrt{n}}=\\
 &=\frac{(-1)^{\xi+\eta+1}}{n^{\frac{\xi+\eta}{2}}}\left(-m^2n\sum_{j=0}^{ \frac{\xi-2}{2}}{\frac{\xi-2}{2}\choose j}{\frac{\eta-2}{2}\choose j}(1 - n^2)^j  + \right.\\
&+\left.\sum_{j=0}^{\frac{\xi-2}{2}}\left( {\frac{\eta}{2}\choose j} - n{\frac{\eta-2}{2}\choose j}\right){\frac{\xi-2}{2}\choose j}(1 - n^2)^j \right)=\\
&=\frac{-1}{n^{\frac{\xi+\eta}{2}}}\sum_{j=0}^{\frac{\xi-2}{2}}\left( {\frac{\eta}{2}\choose j} - n^2{\frac{\eta-2}{2}\choose j}\right){\frac{\xi-2}{2}\choose j}(1 - n^2)^j =
\end{align*}
\begin{align*}
&=\frac{-1}{n^{\frac{\xi+\eta}{2}}}\left(\sum_{j=0}^{\frac{\xi-2}{2}}\left( {\frac{\eta}{2}\choose j} - {\frac{\eta-2}{2}\choose j}\right){\frac{\xi-2}{2}\choose j}(1 - n^2)^j \right. +\\
&+\left. \sum_{j=1}^{\frac{\xi}{2}}{\frac{\eta-2}{2}\choose j-1}{\frac{\xi-2}{2}\choose j-1}(1 - n^2)^j\right) =\\
&=\frac{-1}{n^{\frac{\xi+\eta}{2}}} \sum_{j=1}^{\frac{\xi}{2}} {\frac{\eta-2}{2}\choose j-1}\left({\frac{\xi-2}{2}\choose j}+{\frac{\xi-2}{2}\choose j-1}\right)(1 - n^2)^j  =\\
&=\frac{-1}{n^{\frac{\xi+\eta}{2}}}\sum_{j=0}^{\frac{\xi}{2}} \left({\frac{\eta}{2}\choose j}-{\frac{\eta-2}{2}\choose j}\right){\frac{\xi}{2}\choose j}(1 - n^2)^j. 
\end{align*}
 \normalsize
Here the first equality follows from Dirac's equation (Lemma~\ref{dirak}) and the definition of the field $u_{\varepsilon}$, the second one holds by the inductive hypothesis, the third one is an expansion in $(1 - n^2)^j$, the fourth one is obtained by adding and subtracting $$\sum\limits_{j=0}^{\frac{\xi-2}{2}}{\frac{\eta-2}{2}\choose j}{\frac{\xi-2}{2}\choose j}(1 - n^2)^j$$ and regrouping terms, the fifth one is the use of Pascal's rule and an expansion in $(1 - n^2)^j$, the sixth one is the application of Pascal's rule two times.

Thus Case 2 and hence the proposition is proved. $\square$

\section{Proof of Proposition~\ref{IntForm}}
\label{IntFormProof}
This is a direct computation by induction.

\textit{Case 1:} $x/\varepsilon$ and $t/\varepsilon$ have opposite parity.
Note that 
$$\omega_{p+\pi/\varepsilon} = \frac{1}{2\varepsilon}\arcsin \frac{\sin 2(p\varepsilon+\pi)}{1+m^2\varepsilon^2}= \frac{1}{2\varepsilon}\arcsin \frac{\sin 2p\varepsilon}{1+m^2\varepsilon^2} = \omega_{p}. $$
Hence, the integrand for $a_k(x,t,m,\varepsilon,u_{\varepsilon})$, where $k=1,2$, changes sign after change of variables $p \to p+\pi/\varepsilon$ (it is multiplied by $(-1)^{x/\varepsilon + 1}$ for odd $t/\varepsilon$ and by $(-1)^{x/\varepsilon}$ for even $t/\varepsilon$). Thus the integral equals $0$, and this completes the proof of the first case.

\textit{Case 2:} $x/\varepsilon$ and $t/\varepsilon$ have the same parity.
We prove the proposition by induction on $t/\varepsilon$. 

\textit{Base.} Let $t=\varepsilon$, $x\in \varepsilon\mathbb{Z}$. Since $\sin{\omega_p0}=0$, then the integrand for $a_1(x,t,m,\varepsilon,u_{\varepsilon})$ equals $0$. Thus, the induction base for $a_1(x,t,m,\varepsilon,u_{\varepsilon})$ is verified.
For $a_2(x,t,m,\varepsilon,u_{\varepsilon})$ we have
\begin{align*}
    &-\frac{\varepsilon(-1)^{\frac{x-\varepsilon}{2\varepsilon}+\left\lfloor\frac{x+\varepsilon}{4\varepsilon}\right\rfloor}}{2\pi}\int\limits_{-\pi/\varepsilon}^{\pi/\varepsilon}e^{ip(x-\varepsilon)}dp = \\
    &=\begin{dcases}
    -1, &\text{if  } x=\varepsilon,\\
    -\frac{\varepsilon(-1)^{\frac{x-\varepsilon}{2\varepsilon}+\left\lfloor\frac{x+\varepsilon}{4\varepsilon}\right\rfloor}\sin{\frac{\pi}{\varepsilon}(x-\varepsilon)}}{(x-\varepsilon)\pi}, &\text{if  } x\ne\varepsilon;\\
    \end{dcases} =\\
    &= \begin{dcases}
    -1, &\text{if  } x=\varepsilon,\\
    0, &\text{if  } x\ne\varepsilon.\\
    \end{dcases} = a_2(x,t,m,\varepsilon,u_{\varepsilon}).
\end{align*}
The equality before the last one holds because $x\in \varepsilon\mathbb{Z}$.

\textit{Step.} Consider the following 4 subcases.

\textit{Subcase 1:}  $t/\varepsilon \equiv_4 2$.
Then for $a_1(x,t,m,\varepsilon, u_{\varepsilon})$ we have the following chain of equalities:
\begin{align*}
    &a_1(x,t,m,\varepsilon, u_{\varepsilon}) = \frac{a_1(x+\varepsilon,t-\varepsilon,m,\varepsilon, u_{\varepsilon}) + m\varepsilon a_2(x+\varepsilon,t-\varepsilon,m,\varepsilon, u_{\varepsilon})}{\sqrt{1+m^2\varepsilon^2}} =\\
    &=\frac{(-1)^{\frac{x}{2\varepsilon}+\left\lfloor\frac{x+t}{4\varepsilon}\right\rfloor}}{\sqrt{1+m^2\varepsilon^2}}\left(\frac{m\varepsilon^2}{2\pi(1+m^2\varepsilon^2)}\int\limits_{-\pi/\varepsilon}^{\pi/\varepsilon}2 e^{ip(x+\varepsilon)}\sin\omega_p(t-2\varepsilon)\cdot\frac{\sin p\varepsilon}{\cos2\omega_p\varepsilon}dp \hspace{0.2cm}+ \right.\\
    &\left.+\frac{m\varepsilon^2}{2\pi}\int\limits_{-\pi/\varepsilon}^{\pi/\varepsilon}e^{ipx}\left(i\sin\omega_p(t-2\varepsilon)\cdot\frac{m^2\varepsilon^2+\cos 2p\varepsilon}{(1+m^2\varepsilon^2)\cos2\omega_p\varepsilon}-\cos\omega_p(t-2\varepsilon)\right)dp \right) =\\
    &=\frac{(-1)^{\frac{x}{2\varepsilon}+\left\lfloor\frac{x+t}{4\varepsilon}\right\rfloor}m\varepsilon^2}{2\pi\sqrt{1+m^2\varepsilon^2}}\cdot\\
    &\cdot\int\limits_{-\pi/\varepsilon}^{\pi/\varepsilon}e^{ipx}\left(\sin\omega_p(t-2\varepsilon)\cdot\frac{2e^{ip\varepsilon}\sin p\varepsilon + i(m^2\varepsilon^2+\cos 2p\varepsilon)}{(1+m^2\varepsilon^2)\cos2\omega_p\varepsilon} - \cos\omega_p(t-2\varepsilon)\right)dp=\\
    &=\frac{(-1)^{\frac{x}{2\varepsilon}+\left\lfloor\frac{x+t}{4\varepsilon}\right\rfloor}m\varepsilon^2}{2\pi\sqrt{1+m^2\varepsilon^2}}\cdot\\
    &\cdot\int\limits_{-\pi/\varepsilon}^{\pi/\varepsilon}e^{ipx}\left(i\sin\omega_p(t-2\varepsilon)\cdot\frac{1 - i\sin2\omega_p\varepsilon }{\cos2\omega_p\varepsilon} - \cos\omega_p(t-2\varepsilon)\right)dp =\\
    & = (-1)^{\frac{x}{2\varepsilon} + \left\lfloor\frac{x+t}{4\varepsilon}\right\rfloor}\frac{m\varepsilon^2}{2\pi\sqrt{1+m^2\varepsilon^2}}\int\limits_{-\pi/\varepsilon}^{\pi/\varepsilon}e^{ipx}\cdot\frac{i\sin{\omega_p(t-2\varepsilon) - \cos{\omega_pt} }}{\cos{2\omega_p\varepsilon}} dp.
\end{align*}
\normalsize
Here the first equality follows from Dirac's equation (Lemma~\ref{dirak}), because $u_{\varepsilon}(x+\varepsilon/2, t-\varepsilon/2)=1$ for integer $\frac{x+t}{2\varepsilon}$. The second one holds by the inductive hypothesis. The third one is obtained by expansion in $\sin\omega_p(t-2\varepsilon)$ and $\cos\omega_p(t-2\varepsilon)$ respectively. The fourth one follows from the following chain of equalities:
 \begin{align*}
     &\frac{2e^{ip\varepsilon} \sin p\varepsilon + i(m^2\varepsilon^2 + \cos 2p\varepsilon)}{1+m^2\varepsilon^2} = \frac{i(1 - e^{2ip\varepsilon} + m^2\varepsilon^2 + \cos 2p\varepsilon)}{1+m^2\varepsilon^2} = \\
     & =i\frac{1+m^2\varepsilon^2 - i\sin 2p\varepsilon}{1+m^2\varepsilon^2} = i(1 - i\sin 2\omega_p\varepsilon).
 \end{align*}
The fifth one follows from the formula for the cosine of the sum.
 
For $a_2(x,t,m,\varepsilon, u_{\varepsilon})$ we have the following chain of equalities:
\begin{align*}
    &a_2(x,t,m,\varepsilon, u_{\varepsilon}) = \\
    &=(-1)^{\frac{t-x-2\varepsilon}{2\varepsilon}}\frac{a_2(x-\varepsilon,t-\varepsilon,m,\varepsilon, u_{\varepsilon}) - m\varepsilon a_1(x-\varepsilon,t-\varepsilon,m,\varepsilon, u_{\varepsilon})}{\sqrt{1+m^2\varepsilon^2}} =\\
    &=\frac{(-1)^{\frac{x-2\varepsilon}{2\varepsilon}+\left\lfloor\frac{x+t - 2\varepsilon}{4\varepsilon}\right\rfloor + \frac{t-x-2\varepsilon}{2\varepsilon} }}{\sqrt{1+m^2\varepsilon^2}}\cdot\\
    &\cdot\left(\frac{\varepsilon}{2\pi}\int\limits_{-\pi/\varepsilon}^{\pi/\varepsilon}e^{ip(x-2\varepsilon)}\left(\frac{i\sin\omega_p(t-2\varepsilon)(m^2\varepsilon^2+\cos 2p\varepsilon)}{(1+m^2\varepsilon^2)\cos2\omega_p\varepsilon}-\cos\omega_p(t-2\varepsilon)\right)dp \right. -\\
    &-\left.m\varepsilon\frac{m\varepsilon^2}{2\pi(1+m^2\varepsilon^2)}\int\limits_{-\pi/\varepsilon}^{\pi/\varepsilon}2 e^{ip(x-\varepsilon)}\sin\omega_p(t-2\varepsilon)\cdot\frac{\sin p\varepsilon}{\cos2\omega_p\varepsilon}dp  \right)=\\
    &=\frac{(-1)^{\frac{x-2\varepsilon}{2\varepsilon}+\left\lfloor\frac{x+t}{4\varepsilon}\right\rfloor}\varepsilon}{2\pi\sqrt{1+m^2\varepsilon^2}}\cdot\\
    &\cdot\int\limits_{-\pi/\varepsilon}^{\pi/\varepsilon}e^{ip(x-2\varepsilon)}\left(\sin\omega_p(t-2\varepsilon)\cdot\frac{i(m^2\varepsilon^2+\cos 2p\varepsilon) - 2m^2\varepsilon^2e^{ip\varepsilon}\sin p\varepsilon}{(1+m^2\varepsilon^2)\cos2\omega_p\varepsilon}- \right. \\
    & \left.  -\cos\omega_p(t-2\varepsilon) \right)dp =\\
    &=\frac{(-1)^{\frac{x-2\varepsilon}{2\varepsilon}+\left\lfloor\frac{x+t }{4\varepsilon}\right\rfloor}\varepsilon}{2\pi\sqrt{1+m^2\varepsilon^2}}\cdot\\
    &\cdot\int\limits_{-\pi/\varepsilon}^{\pi/\varepsilon}e^{ip(x-2\varepsilon)}\left(i\sin\omega_p(t-2\varepsilon)\cdot\frac{\cos 2p\varepsilon + m^2\varepsilon^2e^{2ip\varepsilon}}{(1+m^2\varepsilon^2)\cos2\omega_p\varepsilon} - \cos\omega_p(t-2\varepsilon)\right)dp = \\
    &=(-1)^{\frac{x-2\varepsilon}{2\varepsilon}+\left\lfloor\frac{x+t}{4\varepsilon}\right\rfloor}\frac{\varepsilon}{2\pi\sqrt{1+m^2\varepsilon^2}}\int\limits_{-\pi/\varepsilon}^{\pi/\varepsilon}e^{ip(x-2\varepsilon)}\cdot\frac{ie^{2ip\varepsilon}\sin{\omega_p(t-2\varepsilon)} - \cos{\omega_pt} }{\cos{2\omega_p\varepsilon}} dp.
\end{align*}
\normalsize
Here the first equality follows from Dirac's equation (Lemma~\ref{dirak}), because $u_{\varepsilon}(x-\varepsilon/2, t-\varepsilon/2)=(-1)^{\frac{t-x-2\varepsilon}{2\varepsilon}}$. The second one holds by the inductive hypothesis. The third one is obtained by expansion in $\sin\omega_p(t-2\varepsilon)$ and $\cos\omega_p(t-2\varepsilon)$ respectively. The sign in the third equality is obtained from the assumption that $t/\varepsilon\equiv_{4}2$ and the following chain of equalities:
\begin{align*}
 \left\lfloor\frac{x+t - 2\varepsilon}{4\varepsilon}\right\rfloor + \frac{t-x-2\varepsilon}{2\varepsilon}   \equiv_2 
 \begin{dcases}
 \frac{x+t-2\varepsilon}{4\varepsilon}, &\text{if }\frac{x}{\varepsilon}\equiv_{4} 0,\\
 \frac{x+t}{4\varepsilon}, &\text{if }\frac{x}{\varepsilon}\equiv_{4} 2;
 \end{dcases} = \left\lfloor\frac{x+t}{4\varepsilon}\right\rfloor.
\end{align*} 
The fourth one is obtained by expressing $\sin p\varepsilon$ through $e^{\pm ip\varepsilon}$. The fifth one follows from the following chain of equalities:
\begin{align*}
    &i\sin\omega_p(t-2\varepsilon)\cdot\frac{\cos 2p\varepsilon + m^2\varepsilon^2e^{2ip\varepsilon}}{(1+m^2\varepsilon^2)\cos2\omega_p\varepsilon} - \cos\omega_p(t-2\varepsilon) =\\
    &= i\sin\omega_p(t-2\varepsilon)\cdot\frac{e^{2ip\varepsilon} -i\sin 2\omega_p\varepsilon}{\cos2\omega_p\varepsilon} - \cos\omega_p(t-2\varepsilon) =\\
    &=\frac{ie^{2ip\varepsilon}\sin\omega_p(t-2\varepsilon) -\cos\omega_pt}{\cos2\omega_p\varepsilon}.
\end{align*}
Thus, the first subcase is proved.

\textit{Subcase 2:} $t/\varepsilon \equiv_4 3$. Then for $a_1(x,t,m,\varepsilon, u_{\varepsilon})$ we have the following chain of equalities:
\begin{align*}
    &a_1(x,t,m,\varepsilon, u_{\varepsilon}) = \frac{a_1(x+\varepsilon,t-\varepsilon,m,\varepsilon, u_{\varepsilon}) + m\varepsilon a_2(x+\varepsilon,t-\varepsilon,m,\varepsilon, u_{\varepsilon})}{\sqrt{1+m^2\varepsilon^2}} = \\
    &=\frac{1}{\sqrt{1+m^2\varepsilon^2}}\cdot\\
    &\cdot\left((-1)^{\frac{x+\varepsilon}{2\varepsilon} + \left\lfloor\frac{x+t}{4\varepsilon}\right\rfloor}\frac{m\varepsilon^2}{2\pi\sqrt{1+m^2\varepsilon^2}}\int\limits_{-\pi/\varepsilon}^{\pi/\varepsilon}e^{ip(x+\varepsilon)}\cdot\frac{i\sin{\omega_p(t-3\varepsilon) - \cos{\omega_p(t-\varepsilon)} }}{\cos{2\omega_p\varepsilon}} dp + \right.\\
   &\left.+\frac{(-1)^{\frac{x-\varepsilon}{2\varepsilon}+\left\lfloor\frac{x+t }{4\varepsilon}\right\rfloor}m\varepsilon^2}{2\pi\sqrt{1+m^2\varepsilon^2}}\int\limits_{-\pi/\varepsilon}^{\pi/\varepsilon}e^{ip(x-\varepsilon)}\cdot\frac{ie^{2ip\varepsilon}\sin{\omega_p(t-3\varepsilon)} - \cos{\omega_p(t-\varepsilon)} }{\cos{2\omega_p\varepsilon}} dp\right) =\\
   &=\frac{(-1)^{\frac{x+\varepsilon}{2\varepsilon}+\left\lfloor\frac{x+t }{4\varepsilon}\right\rfloor}m\varepsilon^2}{2\pi(1+m^2\varepsilon^2)}\int\limits_{-\pi/\varepsilon}^{\pi/\varepsilon}e^{ip(x-\varepsilon)}\cdot\frac{(1 - e^{2ip\varepsilon})\cos \omega_p(t-\varepsilon)}{\cos 2\omega_p\varepsilon}=\\
   &=(-1)^{\frac{x-\varepsilon}{2\varepsilon}+\left\lfloor\frac{x+t }{4\varepsilon}\right\rfloor}\frac{im\varepsilon^2}{\pi(1+m^2\varepsilon^2)}\int\limits_{-\pi/\varepsilon}^{\pi/\varepsilon}e^{ipx}\cos{\omega_p(t-\varepsilon)}\cdot\frac{\sin{p\varepsilon}}{\cos{2\omega_p\varepsilon}}dp.
\end{align*}
\normalsize
The first equality holds by Dirac's equation (Lemma~\ref{dirak}), because $u_{\varepsilon}(x+\varepsilon/2, t-\varepsilon/2)=1$ for integer $\frac{t+x}{2\varepsilon}$. The second one is true by the inductive hypothesis. The third one is obtained by expansion in $\sin\omega_p(t-3\varepsilon)$ and $\cos\omega_p(t-\varepsilon)$ respectively. The fourth one follows from the expression of $\sin p\varepsilon$ through $e^{\pm ip\varepsilon}$.

For $a_2(x,t,m,\varepsilon, u_{\varepsilon})$ we have the following chain of equalities:
\begin{align*}
    &a_2(x,t,m,\varepsilon, u_{\varepsilon}) =\\
    &=(-1)^{\frac{t-x-2\varepsilon}{2\varepsilon}}\frac{a_2(x-\varepsilon,t-\varepsilon,m,\varepsilon, u_{\varepsilon}) - m\varepsilon a_1(x-\varepsilon,t-\varepsilon,m,\varepsilon, u_{\varepsilon}) }{\sqrt{1+m^2\varepsilon^2}} =\\ &=\frac{(-1)^{\frac{t-x-2\varepsilon}{2\varepsilon}+\left\lfloor\frac{x+t - 2\varepsilon}{4\varepsilon}\right\rfloor}}{\sqrt{1+m^2\varepsilon^2}}\cdot\\
    &\cdot\left(\frac{(-1)^{\frac{x-3\varepsilon}{2\varepsilon}}\varepsilon}{2\pi\sqrt{1+m^2\varepsilon^2}}\int\limits_{-\pi/\varepsilon}^{\pi/\varepsilon}e^{ip(x-3\varepsilon)}\cdot\frac{ie^{2ip\varepsilon}\sin{\omega_p(t-3\varepsilon)} - \cos{\omega_p(t-\varepsilon)} }{\cos{2\omega_p\varepsilon}} dp \hspace{0.2cm}- \right.\\
   &\left.- \frac{(-1)^{\frac{x-\varepsilon}{2\varepsilon}}m^2\varepsilon^3}{2\pi\sqrt{1+m^2\varepsilon^2}}\int\limits_{-\pi/\varepsilon}^{\pi/\varepsilon}e^{ip(x-\varepsilon)}\cdot\frac{i\sin{\omega_p(t-3\varepsilon) - \cos{\omega_p(t-\varepsilon)} }}{\cos{2\omega_p\varepsilon}} dp\right) =\\
   &=\frac{(-1)^{\frac{x-\varepsilon}{2\varepsilon}+\left\lfloor\frac{x+t }{4\varepsilon}\right\rfloor}\varepsilon}{2\pi(1+m^2\varepsilon^2)}\cdot\\
   &\cdot\int\limits_{-\pi/\varepsilon}^{\pi/\varepsilon}e^{ip(x-\varepsilon)}\frac{i\sin{\omega_p(t-3\varepsilon)}(1+m^2\varepsilon^2) - (m^2\varepsilon^2 + e^{-2ip\varepsilon})\cos{\omega_p(t-\varepsilon)}}{\cos{2\omega_p\varepsilon}} dp =\\
   &=\frac{(-1)^{\frac{x+\varepsilon}{2\varepsilon}+\left\lfloor\frac{x+t }{4\varepsilon}\right\rfloor}\varepsilon }{2\pi}\cdot\\
   &\cdot\int\limits_{-\pi/\varepsilon}^{\pi/\varepsilon}e^{ip(x-\varepsilon)}\left(\cos{\omega_p(t-\varepsilon)}\cdot\frac{m^2\varepsilon^2 +\cos{2p\varepsilon}}{(1+m^2\varepsilon^2)\cos{2\omega_p\varepsilon}} - i\sin{\omega_p(t-\varepsilon)}\right) dp.
\end{align*}
\normalsize
The first equality follows from the Dirac equation (Lemma~\ref{dirak}), because $u_{\varepsilon}(x-\varepsilon/2, t-\varepsilon/2)=(-1)^{\frac{t-x-2\varepsilon}{2\varepsilon}}$ for integer $\frac{t-x}{2\varepsilon}$. The second one is true by the inductive hypothesis. The third one is obtained by expansion in $\sin\omega_p(t-3\varepsilon)$ and $\cos\omega_p(t-\varepsilon)$ respectively. The sign in the third equality follows from the assumption that $t/\varepsilon\equiv_{4}3$ and the following chain of equalities:
\begin{align*}
 \left\lfloor\frac{x+t - 2\varepsilon}{4\varepsilon}\right\rfloor + \frac{t-x}{2\varepsilon}   \equiv_2 
 \begin{dcases}
 \frac{x+t}{4\varepsilon}, &\text{if }\frac{x}{\varepsilon}\equiv_{4} 1,\\
 \frac{x+t-2\varepsilon}{4\varepsilon}, &\text{if }\frac{x}{\varepsilon}\equiv_{4} 3;
 \end{dcases} = \left\lfloor\frac{x+t}{4\varepsilon}\right\rfloor.
\end{align*}
The fourth one follows from the following chain of equalities:
\begin{align*}
   & \frac{i\sin{\omega_p(t-3\varepsilon)}(1+m^2\varepsilon^2) - (m^2\varepsilon^2 + e^{-2ip\varepsilon})\cos{\omega_p(t-\varepsilon)}}{(1+m^2\varepsilon^2)\cos{2\omega_p\varepsilon}} = \\
    &=i\sin{\omega_p(t-\varepsilon)} - \frac{(i\sin2p\varepsilon + m^2\varepsilon^2 + e^{-2ip\varepsilon})\cos{\omega_p(t-\varepsilon)}}{(1+m^2\varepsilon^2)\cos{2\omega_p\varepsilon}} = \\
    &=i\sin{\omega_p(t-\varepsilon)} - \frac{( m^2\varepsilon^2 + \cos{2p\varepsilon})\cos{\omega_p(t-\varepsilon)}}{(1+m^2\varepsilon^2)\cos{2\omega_p\varepsilon}},
\end{align*}
where we used the formula for the sine of the difference, definition of $\omega_p$, and the Euler formula for $e^{-2ip\varepsilon}$.
This completes the proof of subcase 2. 

\emph{Subcases 3 and 4}: $\frac{t}{\varepsilon}\equiv_4 0 $ and $\frac{t}{\varepsilon} \equiv_4 1$ are proved analogously to subcases 1 and 2, only $\cos \omega_p(t-k\varepsilon)$ is replaced by $\sin\omega_p(t-k\varepsilon)$ and vice versa, where $k=2$ or $3$. \\
$\square$

\end{document}